\newtheorem{theorem}{Proposition}
\newtheorem{cor}{Corollary}
\newtheorem{remark}{Remark}
\newcommand{\xx}{\mathbf{x}}
\newcommand{\Z}{\mathbf{Z}}
\newcommand{\E}{\mathbf{E}}
\newcommand{\R}{\mathbb{R}}
\newcommand{\D}{D}
\newcommand{\bloc}{B}
\newcommand{\sfold}{\mathcal{S}}
\newcommand{\ii}{\mathbf{i}}
\newcommand{\jj}{\mathbf{j}}
\newcommand{\myv}{\mathbf{v}}
\newcommand{\mymu}{\boldsymbol{\mu}}
\newcommand{\myV}{\mathbf{V}}
\newcommand{\mybeta}{\boldsymbol{\beta}}
\newcommand{\covnoise}{\Sigma_{\varepsilon}}
\newcommand{\covbeta}{\Sigma_{\beta}}
\newcommand{\covsum}{\Sigma}
\newcommand{\gp}{\xi}
\newcommand{\cgp}{\eta}
\newcommand{\noise}{\boldsymbol{\varepsilon}}
\newcommand{\kweights}{\boldsymbol{\lambda}}
\newcommand{\fibeta}{\mathcal{I}}
\newcommand{\fibetat}{\mathcal{I}_{-\new}}
\newcommand{\gencm}{A}
\newcommand{\genm}{\mathbf{m}}
\newcommand{\genQ}{P}
\title{
Fast calculation of Gaussian Process multiple-fold cross-validation residuals and their covariances
}
\author{David Ginsbourger\footnote{Institute of Mathematical Statistics and Actuarial Science, Department of Mathematics and Statistics, University of Berne, Alpeneggstrasse 22, CH-3012 Bern, Switzerland. \newline
	E-mail: \href{david.ginsbourger@unibe.ch}{david.ginsbourger@unibe.ch}} 
\ and C\'edric Sch\"arer\\
Department of Mathematics and Statistics, \\
University of Berne, Switzerland
}
\begin{document}
\maketitle

\thispagestyle{empty}
\begin{abstract}
We generalize fast Gaussian process leave-one-out formulae to multiple-fold cross-validation, highlighting in turn the covariance structure of cross-validation residuals in both Simple and Universal Kriging frameworks. We illustrate how resulting covariances affect model diagnostics. We further establish in the case of noiseless observations that correcting for covariances between residuals in cross-validation-based estimation of the scale parameter leads back to MLE. Also, we highlight in broader settings how differences between pseudo-likelihood and likelihood methods boil down to accounting or not for residual covariances. The proposed fast calculation of cross-validation residuals is implemented and benchmarked against a naive implementation. Numerical experiments highlight the accuracy and substantial speed-ups that our approach enables. However, as supported by a discussion on main drivers of computational costs and by a numerical benchmark, speed-ups steeply decline as the number of folds (say, all sharing the same size) decreases. An application to a contaminant localization test case illustrates that grouping clustered observations in folds may help improving model assessment and parameter fitting compared to Leave-One-Out. Overall, our results enable fast multiple-fold cross-validation, have direct consequences in model diagnostics, and pave the way to future work on hyperparameter fitting and on the promising field of goal-oriented fold design.       
\end{abstract}

\section{Introduction and notation}

Gaussian process (GP) models are at the heart of a number of prominent methods in spatial statistics, machine learning, and beyond. Properly validating such models and fitting underlying hyperparameters thus appear as crucial and impactful endeavours. 
It is often the case that, for reasons of data scarcity or other, the latter need to be conducted based on a unique data set. Cross-validation is commonly used in such a context, not only in GP modelling but also as a general approach to assess statistical models and tune parameters in a vast class of prediction algorithms. Fast leave-one-out cross-validation formulae are known for some models and approaches including GP prediction. However leave-one-out is known to suffer some pitfalls, and multiple-fold cross-validation is increasingly preferred over it in broader contexts, notably in the realm of machine learning.

\medskip 

We focus here on cross-validation in the context of GP modelling, and more specifically on adapting fast formulae for leave-one-out to multiple-fold cross-validation and exploring how these can be efficiently exploited in topics such as model diagnostics and covariance hyperparameter fitting. One of the essential bricks of the proposed approaches turns out to be closed-form formulae for the covariance structure of cross-validation residuals. Our main results are presented both in the Simple Kriging and Universal Kriging frameworks. They contribute to a lively stream of research on cross-validation that pertains to GP modelling, and also to further related classes of statistical models. 

\medskip 

Leave-one-out and multiple-fold cross-validation in the GP framework were tackled as early as in \cite{Dubrule1983}, and some further references as well as a more recent account of their use in machine learning is given in Chapter~5 of \cite{Rasmussen.Williams2006}. 
On a different note, cross-validation has lead to a number of investigations in the context of spline modelling, for instance in \cite{Golub1979} where generalized cross-validation was used for the selection of ridge (regularization) parameter.   
In the realm of GP modelling for computer experiments, cross-validation was tackled as early as in \cite{Currin1989} (with a pioneering discussion on links to MLE), and leave-one-out was investigated in \cite{Bachoc2013} and found to deliver a valuable alternative to MLE in misspecified cases.  
A general survey of cross-validation procedures for model selection is presented in \cite{Arlot2010}. 
Cross-validation for model selections is further tackled in \cite{Zhang2015}. 
\cite{LeGratiet.etal2015} proposed cokriging-based sequential design strategies using fast cross-validation for multi-fidelity computer codes.
\cite{Fischer2016} studied model selection for GP Regression by Approximation Set Coding. \cite{Martino2017} suggested new probabilistic cross-validation Estimators for Gaussian Process Regression. Cross-validation strategies for data with temporal, spatial, hierarchical, or phylogenetic structure were investigated in \cite{Roberts2017}. 
New prediction error estimators were defined in \cite{Rabinowicz2020}, where novel model selection criteria in the same spirit as AIC and Mallow's $C_{p}$ were suggested.
Leave-One-Out cross-validation for Bayesian model comparison in large data was recently tackled in \cite{Magnusson2020}. \cite{Kaminsky2021} introduced an efficient batch multiple-fold cross-validation Voronoi adaptive sampling technique for global surrogate modeling.  
\cite{Rabinowicz2020a} introduced a bias-corrected cross-validation estimator for correlated data. 
\cite{Bates2023} tackles the question of what one actually does estimate in cross-validation. In contrast with the random design settings considered in the latter as in many theoretical works pertaining to cross-validation, our baseline settings here will assume a fixed design.  

\bigskip 

Leave-One-Out (LOO) cross-validation consists in predicting at each of the observation locations when (virtually) removing the corresponding observation from the data set, and then comparing obtained predictions to left-out observations. 
As a first example, we consider a case of GP prediction for a one-dimensional test function recalled in Appendix~\ref{app:example} observed at a regular design, here a 10-point subdivision of $[0,1]$, and a GP model assuming a Mat\'ern 5/2 stationary covariance kernel. The test function is represented in black in Figure~\ref{xiong_preds}, along with the GP predictor (blue line) and Leave-One-Out (LOO) predictions at the design locations (red points). Modelling is performed with the \textrm{DiceKriging} package \cite{Roustant.etal2012} and cross-validation relies on a fast implementation using the \textrm{cv} function newly available in the \textrm{DiceKriging} package (version 1.6.0). This simple example in a misspecified case illustrates how LOO residuals may be more representative of actual prediction errors than the built-in GP prediction standard deviation, the latter being in this case not depending on the actual observations and calculated under a questionable stationarity assumption. 

\begin{figure}[h!]
	\begin{center}
		\includegraphics[width=.65\linewidth]{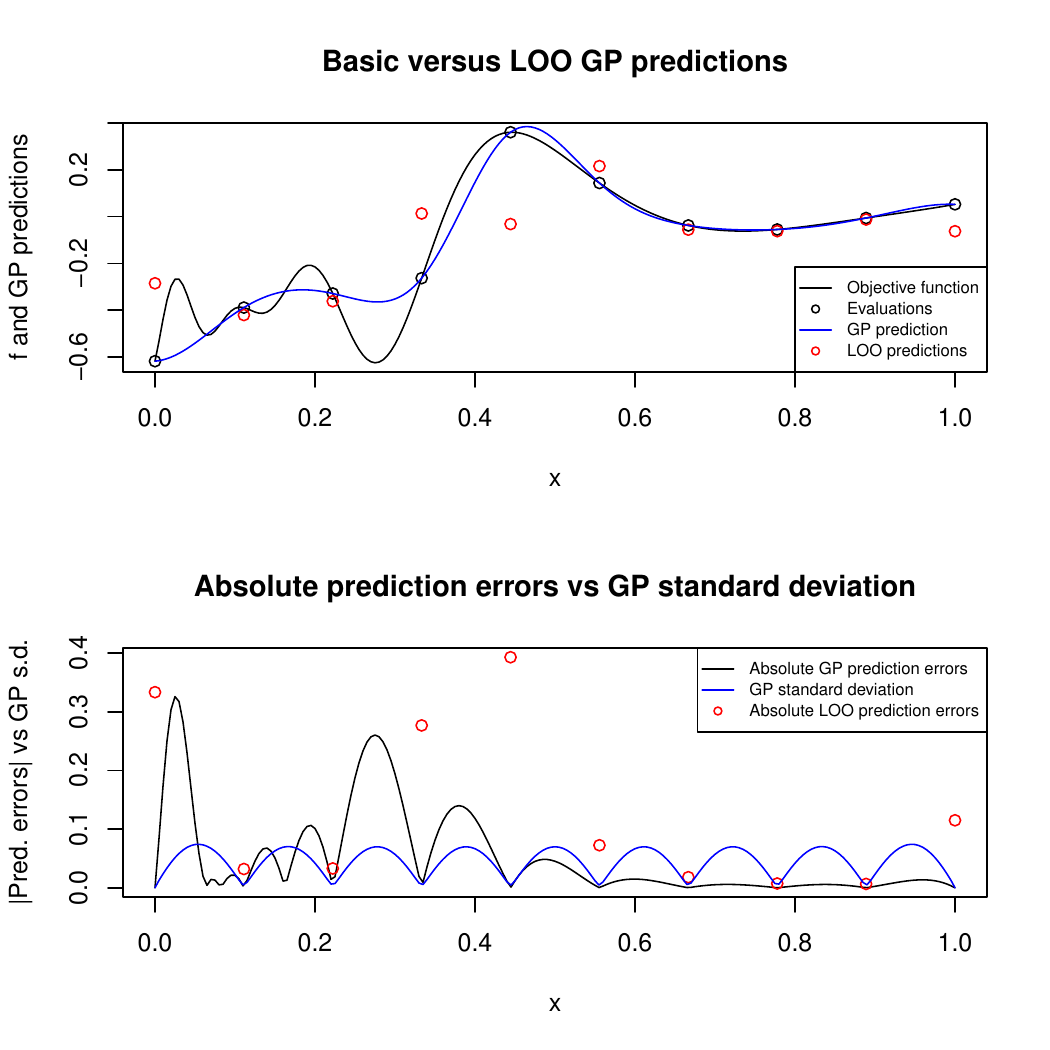}
		\caption{On the upper panel, GP mean predictor (blue line)  of the test function (black line) defined by Equation~\ref{xiong} based on $10$ evaluations at a regular grid, LOO  cross-validation predictions (red points). Lower panel: absolute prediction errors associated with GP (black line) and LOO (red point) predictions, and GP prediction standard deviation (in blue).}
		\label{xiong_preds}
	\end{center}
\end{figure}
It has in fact been found as early as in 1983 (See Dubrule's seminal paper \cite{Dubrule1983}) that the LOO prediction errors and variance could be calculated quite elegantly and efficiently by relying on the inverse of a certain matrix (the main focus in in \cite{Dubrule1983} being on a variogram matrix extended by monomial basis functions) and quantities derived thereof. We work  here in terms of covariances and related matrices as in \cite{Bachoc2013}, where fast LOO formulas were obtained for Simple Kriging in the square-integrable case. 

An elegant by-product formula for the square norm of LOO residulas has been leveraged in \cite{Bachoc2013} for hyperparameter estimation in the stationary case, namely by minimizing this quantity as a function of covariance parameters (excluding the ``variance'' parameter).  
In particular, numerical experiments conducted in \cite{Bachoc2013} suggested that leave-one-out error minimization is more robust than maximum likelihood in the case of model misspecification, a common situation in practice.
Yet, depending on the experimental design, 
the errors obtained by leaving one point at a time may be far from representative of generalization errors, as illustrated in Appendix~\ref{ex2_fastmfcv}. While remaining with a prescribed design, multiple-fold cross-validation allows to mitigate shortcomings of leave-one-out by considering situations where larger parts of the design (i.e., several points) are removed at a time.  

\bigskip 

Our primary aim in the present paper is to generalize fast-leave-one-out formulae to multiple-fold cross-validation, hence facilitating their use in both contexts of model diagnostics and hyperparameter optimization. We derive such formulae and highlight in turn the covariance structure of cross-validation residuals, a crucial ingredient in the proposed diagnostics and findings pertaining to parameter fitting.
Furthermore, following ideas already sketched in the seminal paper of Dubrule \cite{Dubrule1983}, we present extensions of our results to the case of Universal Kriging. 
The obtained results clarify some links between cross-validation and maximum likelihood, and also open new perspectives regarding the design of folds and ultimately also of the underlying experimental design points.  

\bigskip

Section~\ref{secbasics} presents the consisdered class of statistical models and recalls the basics of Simple and Universal Kriging under Gaussian Process assumptions, with some brief detours in \ref{basicsUK} on how Gaussian linear and ridge regression may fit into this framework as well as on Bayesian interpretations of Universal Kriging and ridge regression. Prediction equations are revisited in Section~\ref{sec:transductive} in transductive settings thanks to Schur complement approaches, hence setting the decor for fast cross-validation.      
Section~\ref{secfastMFCV} is dedicated to fast multiple-fold cross-validation, starting in Section~\ref{sec:genericGV} with generic results that apply in Gaussian vector conditioning. We then present in Section~\ref{SKcase} our main results on the fast calculation and the joint probability distribution of cross-validation residuals in the Simple Kriging case. 
The latter are then extended to Universal Kriging settings in Section~\ref{Extensions}, notably retrieving as a particular case fast cross-validation results for linear regression presented in \cite{Zhang1993}. In Section~\ref{csq_fitting} we present consequences of the main results in the context of model fitting. Section~\ref{diag} focuses on graphical diagnostics via pivotal statistics for cross-validation residuals, while the two subsequent sections deal with  cross-validation-based covariance parameter estimation. More specifically, Section~\ref{scale} is devoted to the estimation of the scale parameter; Section~\ref{further} then pertains to the estimation of further hyperparameters building upon either the norm or the (pseudo-)likelihood of multiple-fold cross-validation residuals. Section~\ref{comp} mostly consists in numerical experiments illustrating the accuracy and speed-ups offered by the closed-form formulae (Section~\ref{speedup}), and some considerations regarding associated computational complexities (Section~\ref{complexity}). 
Section~\ref{conta} presents an application to a contaminant localization test case and illustrate that grouping clustered observations in folds lead to improved model assessment and parameter fitting compared to Leave-One-Out.
The ultimate Section~\ref{discussion} is a discussion opening perspectives on how the established results could be further exploited to improve diagnostics and parameter estimation procedures in GP modelling. 

\section{Models and transductive prediction settings}
\label{secbasics}
\subsection{Of Universal Kriging and ridge regression}
\label{basicsUK}

Throughout the article we consider statistical models of the form 
\begin{equation}
Z_{i}=\gp(\xx_i)+\varepsilon_i \ \ \ (\xx_i \in D, i \in \{1,\dots,n\}),
\end{equation}
with $\gp(\xx)=\mu(\xx)+\cgp(\xx)$ ($\xx\in D$) where $\cgp$ is a centred GP with covariance kernel $k$ and $\mu$ is a trend function, and $\noise\sim \mathcal{N}(\mathbf{0}, \covnoise)$ independently of $\cgp$. In some cases, $\mu$ is known up to some coefficients: unknown constant (\textit{Ordinary Kriging} settings) or linear combination of basis functions with unknown coefficients (\textit{Universal Kriging} settings). When $\mu$ is known, one speaks of Simple Kriging settings. While several variants of Kriging can be defined beyond the second order case (without assuming square-integrability of $\cgp$), we restrict the exposition here to the broadly used Gaussian case. However we do not require $k$ to be stationary. Let us remark that throughout the following exposition on Kriging/GP prediction, the kernel (including hyperparameters) is considered as given. In practice, the equations are also employed with plugged in estimates of hyperparameters. Full Bayesian approaches where the uncertainty on kernel hyperparameters is further propagated to the predictions are beyond the considered scope. However in Sections~\ref{csq_fitting} and~\ref{conta} we tackle the issue of model validation and hyperparameter fitting via the established cross-validation results.  

\medskip 

We consider a linear trend case that encompasses all of the above, and also has some interesting connections to linear (ridge) regression as we will see below. We hence assume $\mu$ to be defined by
\begin{equation}
\mu(\xx)=\sum_{j=1}^p \beta_j f_j(\xx) \ \ \ (\xx \in D), 
\end{equation} 
where the $f_{j}$ ($1\leq j \leq p$, $1\leq p$) 
are prescribed basis functions and $\beta_j$ ($1\leq j \leq p$) are real-valued coefficients. 
Simple Kriging can be retrieved by taking $p=1$ with $\beta_1=1$ and $f_1$ arbitrary, and Universal Kriging with $p\geq 1$ and $\mybeta=(\beta_1,\dots,\beta_p)$ unknown (Ordinary Kriging is a special case with $p=1$ and $f_1\equiv 1$). 
In Simple Kriging, a predictor of $\gp$ of the form 
$\widehat{\gp}(\xx)=\mu(\xx)+\kweights(\xx)^{\top}\Z \ (\xx \in D)$
is sought, where $\Z=(Z_{\xx_1},\dots,Z_{\xx_n})^{\top}$. 
$\kweights(\xx)$ is determined by minimizing the residual variance 
\begin{equation}
\label{quad_risk}
\operatorname{Var}[\gp(\xx)-\widehat{\gp}(\xx)]=k(\xx,\xx)+\kweights(\xx)^{\top}(K+\covnoise)\kweights(\xx)-2 \mathbf{k}(\xx)^{\top}K\kweights(\xx), 
\end{equation}
where  $K=(k(\xx_i,\xx_j))_{i,j \in \{1,\dots, n\}}$ and $\mathbf{k}(\xx)=(k(\xx,\xx_i))_{i \in \{1,\dots,n\}}$. 
In Universal Kriging, the predictor is of the form
\begin{equation*}
\widehat{\gp}(\xx)=\kweights(\xx)^{\top}\Z \ \ \ (\xx \in D),
\end{equation*}
and $p$ linear constraints $f_{j}(\xx)=\sum_{i=1}^n \lambda_i(\xx) f_{j}(\xx_j)$ for $j \in \{1,\dots,p\}$ are added to the minimization of Eq.~\ref{quad_risk} to ensure unbiasedness of $\widehat{\gp}(\xx)$. 
Minimizing Eq.~\ref{quad_risk} under these unbiasedness constraints can be addressed by Lagrange duality, leading to a linear problem featuring a $p$-dimensional Lagrange multiplier $\boldsymbol{\ell}(\xx)$: 
\begin{equation}
\label{UK_system}
\left( \begin{matrix} 
K+\covnoise & F \\
F^{\top} & 0 
\end{matrix} \right)
\left( \begin{matrix} 
\boldsymbol{\lambda}(\xx) \\
\boldsymbol{\ell}(\xx)
\end{matrix} \right)
=
\left( \begin{matrix} 
\mathbf{k}(\xx) \\
\mathbf{f}(\xx)
\end{matrix} \right),
\end{equation}
where $F=(f_{j}(\xx_i))_{1\leq i \leq n, 1\leq j \leq p} \in \R^{n\times p}$ 
and $\mathbf{f}(\xx)=(f_1(\xx),\dots,f_p(\xx))^{\top} \in \R^p$. 
Assuming that $\covsum=K+\covnoise$ and $F^{\top}\covsum^{-1}F$ are invertible, solving for $\boldsymbol{\lambda}(\xx)$ delivers the Universal Kriging predictor, that can ultimately be written, as further detailed in Appendix~\ref{basicsUKold}, as
\begin{equation}
\label{UKpred}
\widehat{\gp}(\xx)= \mathbf{f}(\xx)^{\top}\widehat{\mybeta} + \mathbf{k}(\xx)^{\top}\covsum^{-1}(\Z-F\widehat{\mybeta}) \ \ \ (\xx \in D)
\end{equation}
where $\widehat{\mybeta}=\fibeta^{-1} F^{\top} \covsum^{-1}\Z$ with $\fibeta=F^{\top}\covsum^{-1}F$. 
In Simple Kriging with $\mu(\xx)=\mathbf{f}(\xx)^{\top}\mybeta$, the predictor has the same form as Eq.~\ref{UKpred} yet with $\mybeta$ instead of $\widehat{\mybeta}$.   
Furthermore, the Universal Kriging residual covariance writes, for arbitrary $\xx, \xx' \in D$:
\begin{equation}
\label{UK_cov}
\begin{split}
&\operatorname{Cov}[\gp(\xx)-\widehat{\gp}(\xx),\gp(\xx')-\widehat{\gp}(\xx')]=k(\xx,\xx')-
\mathbf{k}(\xx)^{\top}\covsum^{-1}\mathbf{k}(\xx') \\
&+(\mathbf{f}(\xx)-F^{\top}\covsum^{-1}\mathbf{k}(\xx))^{\top}\fibeta^{-1}(\mathbf{f}(\xx')-F^{\top}\covsum^{-1}\mathbf{k}(\xx')). 
\end{split}
\end{equation}
The Simple Kriging residual covariance boils down to the right handside of the first line above.

On a different note, it is worth noting that when setting $\cgp\equiv 0$ (and hence $\gp(\xx)=\sum_{j=1}^p \beta_j f_j(\xx)$) one retrieves thereby the equations of (Generalized) Least Squares. 

\medskip

We will now review the Bayesian approach to Universal Kriging \cite{Omre.Halvorsen1989,Handcock.Stein1993,Helbert}, that will in turn give us an occasion to incorporate ridge regression into the discussion and will prove practical in further developments throughout the paper.  
Let us assume to this end that $\mybeta$ is endowed with a Gaussian prior distribution $\mathcal{N}(\mathbf{0}, \covbeta)$ where $\covbeta$ is an invertible covariance matrix. 
Let us first consider $\covbeta$ as fixed. The posterior distribution of $\mybeta$ knowing $\Z$ is 
Gaussian with $\mathbb{E}[\mybeta | \Z]=\widehat{\mybeta}_{\text{MAP}}:=(\covbeta^{-1}+F^{\top}\covsum^{-1}F)F^{\top} \covsum^{-1}\Z$ 
and 
\begin{equation}
\label{BUQ_covbeta}
\operatorname{Cov}[\mybeta | \Z]=  (\covbeta^{-1} + F^{\top}\covsum^{-1}F)^{-1}
\end{equation}
whereof, for $\xx, \xx' \in D$, $\mathbb{E}[\gp(\xx) | \Z]=\mathbf{f}(\xx)^{\top}\widehat{\mybeta}_{\text{MAP}} + \mathbf{k}(\xx)^{\top}\covsum^{-1}(\Z-F\widehat{\mybeta}_{\text{MAP}})$ and
\begin{align}
\label{BUQ_covres}
&\operatorname{Cov}[\gp(\xx),\gp(\xx')|\Z]=k(\xx,\xx)-
\mathbf{k}(\xx)^{\top}\covsum^{-1}\mathbf{k}(\xx') \\
&+(\mathbf{f}(\xx)-F^{\top}\covsum^{-1}\mathbf{k}(\xx))^{\top}\operatorname{Cov}[\mybeta | \Z](\mathbf{f}(\xx')-F^{\top}\covsum^{-1}\mathbf{k}(\xx')). \notag
\end{align}
With $\covbeta^{-1}\to \mathbf{0}$, we hence have that
 $\operatorname{Cov}[\mybeta | \Z] \to  \fibeta^{-1}$ and so 
 $\mathbb{E}[\gp(\xx) | \Z] \to \widehat{\gp}(\xx)$
 and $\operatorname{Cov}[\gp(\xx),\gp(\xx')|\Z] \to \operatorname{Cov}[\gp(\xx)-\widehat{\gp}(\xx),\gp(\xx')-\widehat{\gp}(\xx')]$. 

\medskip 

We now turn to linear (ridge) regression and present how the resulting equations can be obtained in similar settings. 
In ridge regression, say as classically with $\covnoise=\sigma^2 I_{n},$ one estimates $\mybeta$ in $\Z=F\mybeta+\boldsymbol{\varepsilon}$ via $\widehat{\mybeta}_{\lambda}=(F^{\top}F+\lambda I_{n})^{-1}F^{\top}\Z$, where $\lambda >0$ is a regularization parameter. The canonical way to introduce this estimator is by replacing the Ordinary Least Squares minimization by the minimization of its penalized counterpart 
$$
\mybeta \mapsto ||\Z-F\mybeta||^2_{2} + \lambda ||\mybeta||^2_{2}.
$$
Yet it is well-known that $\widehat{\mybeta}_{\lambda}$ can also be seen as Maximum A Posterior estimator in a Bayesian framework, under the prior distribution $\mybeta\sim \mathcal{N}(\mathbf{0}, \gamma^2 I_{p})$, where $\gamma^2=\frac{\sigma^2}{\lambda}$. In such a framework, one has indeed $\Z\sim \mathcal{N}(\mathbf{0}, \gamma^2 FF^{\top}+\sigma^2 I_n)$ and 
$\mybeta | \Z=\mathbf{z}
\sim \mathcal{N}(F^{\top}(FF^{\top}+\lambda I_p)^{-1}\mathbf{z}, \gamma^2 I_{p}-\gamma^2F^{\top}(FF^{\top}+\lambda I_n)^{-1}F)$. From there one can obtain for instance by using Sherman-Morisson's formula that 
$$F^{\top}(FF^{\top}+\lambda I_p)^{-1}=(F^{\top}F+\lambda I_p)^{-1}F^{\top},$$ 
whereof $\mathbb{E}[\mybeta| \Z]=(F^{\top}F+\lambda I_p)^{-1}F^{\top}\Z=\widehat{\mybeta}_{\lambda}$,
which coincides in the considered Gaussian case with the MAP estimator. Interestingly, for $F$ full column-ranked and $\gamma \to \infty$, the posterior distribution is proper and one recovers as posterior expectation and MAP the Ordinary Least squares Estimator $\widehat{\mybeta}=(F^{\top}F)^{-1}F^{\top}\Z$.

\subsection{Transductive GP prediction via Schur complements}
\label{sec:transductive}

\newcommand{\xnew}{\xx_{\ii_{o}}}
\newcommand{\xold}{\xx_{\jj_{o}}}
\newcommand{\new}{\ii_{o}}
\newcommand{\old}{\jj_{o}}
\newcommand{\covsums}{\covsum_{\star}}
\newcommand{\augcms}{M_{\star}}
\newcommand{\augcm}{M}
\newcommand{\Zaug}{\Z^{+}}
\newcommand{\Ktilde}{\widetilde{K}}
\newcommand{\npred}{m}
\newcommand{\cvpred}[1]{\widehat{\Z}^{(-#1)}[#1]}

One speaks of transductive settings in cases where the point(s) at which one wishes to predict are fixed in advance. We consider in this section the case where $\gp(\xx_i)$ (resp. $Z_i$) are to be predicted for $\npred \leq n-1$ distinct values of $i$ ($\npred \leq n-p$ in the case of Universal Kriging) from the remaining $n-\npred$ observations $Z_j$. Without loss of generality, we will assume that the indices at which the predictions are to be made are $\ii_{o}=(1,\dots,\npred)$ and denote by $\jj_{o}=(\npred+1,\dots,n)$ where $\npred\leq n-1$ the remaining ones. For any vector $\myv \in \R^n$, matrix $A\in \R^{n\times n}$, and arbitrary vectors of ordered indices $\ii, \jj$,  we denote here and in the following by $\myv[\ii]$ the subvector of $\myv$ with indices in $\ii$ and by $A[\ii,\jj]$ the block extracted from $A$ with corresponding indices, using the in turn the convention $A[\ii]=A[\ii,\ii]$. Similarly, we use the convention $k(\xx)=k(\xx,\xx)$ for $\xx \in D$. 

\medskip

In Simple Kriging settings (here $\mu\equiv 0$), it is common knowledge that $\widehat{\gp}(\xnew)$ and the associated residual covariance matrix $\operatorname{Cov}(\gp(\xnew)-\widehat{\gp}(\xnew))$ can be obtained elegantly based on manipulating 
\begin{equation}
\covsums=\left( \begin{matrix} k(\xnew)  &  k(\xnew, \xold) \\  k(\xold, \xnew) &   k(\xold)+\covnoise[\old] \end{matrix} \right).
\end{equation} 
Assuming indeed that $\covsums$ is invertible, we obtain by bloc inversion formula (See Appendix~\ref{app_schur}, in particular Theorem~\ref{Schur_inv}) 
that $\covsums^{-1}[\new]=(k(\xnew)-k(\xnew, \xold)\covsum[\old]^{-1}k(\xold, \xnew))^{-1}$, whereof 
\begin{equation}
\label{invcondcov}
\operatorname{Cov}(\gp(\xnew)-\widehat{\gp}(\xnew))=(\covsums^{-1}[\new])^{-1}
\end{equation}
where $\covsums^{-1}[\new]$ stands for the $\npred\times\npred$ upper left bloc of $\covsums^{-1}$. 
Using now the upper right bloc $\covsums^{-1}[\new,\old]$ of $\covsums^{-1}$ and Eq.~\ref{invcondcov}, we obtain similarly that $\covsums^{-1}[\new,\old]=-\covsums^{-1}[\new]k(\xnew, \xold)\covsum[\xold]^{-1}$, so 
\begin{equation}
\widehat{\gp}(\xnew)=-(\covsums^{-1}[\new])^{-1}\covsums^{-1}[\new,\old]\Z[\jj_o].
\end{equation}  

Applying now under invertibility assumption a similar bloc inversion to 
\begin{equation}
\covsum=
\left( \begin{matrix} \covsum[\new]  &  \covsum[\new,\old] \\  \covsum[\old,\new] &   \covsum[\xold] \end{matrix} \right)
=\left( \begin{matrix} k(\xnew)+\covnoise[\new]  &  k(\xnew, \xold)+\covnoise[\new,\old] \\  k(\xold, \xnew)+\covnoise[\old,\new] &   k(\xold)+\covnoise[\xold] \end{matrix} \right),
\end{equation} 
we get 
$\covsum^{-1}[\new]=( \covsum[\new]- \covsum[\new,\old]\covsum[\xold]^{-1}\covsum[\old,\new])^{-1}$ 
and
$\covsum^{-1}[\new,\old]=-\covsum^{-1}[\new]\covsum[\new, \old]\covsum[\old]^{-1}$ 
whereof the BLUP of $\Z_{\ii_o}$ given 
$\Z_{\jj_o}$, denoted here and in the following by $\widehat{\Z}^{(-\ii_o)}[\ii_o]$, is given by 
\begin{equation}
\begin{split}
\cvpred{\new}=-(\covsum^{-1}[\new])^{-1}\covsum^{-1}[\new,\old]\Z[\old].
\end{split}
\end{equation}  

\newcommand{\Q}{Q}

Considering now the prediction residual $\E_{\new}:=\Z[\new]-\cvpred{\new}$, we get from the latter 
\begin{equation}
\label{MFCVres_SK}
\begin{split}
\E_{\new}&=(\covsum^{-1}[\new])^{-1}(\covsum^{-1}\Z)[\new],\\
&=(Q[\new])^{-1}(Q\Z)[\new],
\end{split}
\end{equation}  
with $\Q=\covsum^{-1}$.
Let us stress that $\cvpred{\new}$ departs from the most common Simple Kriging predictor via the treatment of noise, in the sense that in the considered transductive settings it can relevant to predict the noisy $\Z[\new]$ instead of $\gp(\xnew)$, and since a general noise covariance matrix is assumed for the sake of generality, this implies a straightforward adaptation of the blocs entering into play (i.e. blocs of $\covsum$ instead of those of $\covsums$). Of course, in case of a null observation noise (as often encountered in the realm of computer experiments), the two approaches above coincide. This is also the case if $\covsum$ is zero for line/column indices outside of $\new$. 

\medskip 
We will now see that under invertibility conditions, the Universal Kriging predictor and residual covariance too can be retrieved based on the inverse of a matrix, namely of 
\begin{equation}
\augcms=
\left( \begin{matrix} 
k(\xnew) & k(\xnew, \xold) & F[\new,] \\ 
k(\xold,\xnew) & k(\xold)+\covnoise[\xold] & F[\old,] \\
F[\new,]^{\top} & F[\old,]^{\top} & 0 
\end{matrix} \right),
\end{equation}
where $F[\new,]$ is the $\npred\times p$ matrix of basis functions evaluated at the prediction points. 
As we prove below, assuming invertibility of $\augcms$ and $\augcms[-\new]=\augcms[\npred+1:\npred+p]$, the Universal Kriging residual covariance matrix can then be plainly obtained as follows:
\begin{equation}
\label{covcondUK}
\operatorname{Cov}(\gp(\xnew)-\widehat{\gp}(\xnew))=(\augcms^{-1}[\new])^{-1}
\end{equation} 
Furthermore, the Universal Kriging predictor too writes similarly as in the Simple Kriging case:
\begin{equation}
\label{transductiveUK}
\widehat{\gp}(\xnew)=-(\augcms^{-1}[\new])^{-1}\augcms^{-1}[\new,\old]\Z[\old]
\end{equation} 
In fact, in the two last equations, what changed between Simple and Universal Kriging settings in simply the replacement of $\covsums$ by $\augcms$. 
Let us now present how this works. 
Assuming indeed that $\augcms$ and $\augcms[-\new]$ are invertible, we first apply Theorem~\ref{Schur_inv} from Appendix~\ref{app_schur} and obtain 
\begin{equation}
\label{SchurUK}
\augcms^{-1}[\new]^{-1}=k(\xnew)-[k(\xnew,\xold), F[\new,]]\augcms[-\new]^{-1}[k(\xnew,\xold), F[\new,]]^{\top}
\end{equation}

Now, provided $\covsum[\old]=k(\xold)+\covnoise[\old]$ and $\fibetat=F[\old,]^{\top}\covsum[\xold]^{-1}F[\old,]$ are invertible, we get using a variation of the bloc inversion formula (detailed in Remark~\ref{schur_variant}):
\begin{equation}
\label{invKtilde}
\augcms[-\new]^{-1}=
\left( \begin{matrix} 
\covsum[\old]^{-1} -\covsum[\old]^{-1}F[\old,]\fibetat^{-1}F[\old,]^{\top}\covsum[\old]^{-1}& \covsum[\old]^{-1}F[\old,]\fibetat^{-1} \\
\fibetat^{-1}F[\old,]^{\top}\Sigma[\old]^{-1} & - \fibetat^{-1} 
\end{matrix} \right)
\end{equation}

Expanding the product on the right handside of Eq.~\ref{SchurUK} using Eq.~\ref{invKtilde} then delivers 
\begin{align}
&(k(\xnew,\xold), F[\new,])\augcms^{-1}[\new]^{-1}(k(\xnew,\xold), F[\new,])^{\top} \notag \\ 
=& k(\xnew,\xold)\covsum[\old]^{-1}k(\xold,\xnew)\\
-&(F[\new,]-k(\xnew,\xold)\covsum[\old]^{-1}F)\fibetat^{-1}(F[\new,]-k(\xnew,\xold)\covsum[\old]^{-1}F)^{\top}, \notag
\end{align}
which establishes Eq.~\ref{covcondUK}. As for Eq.~\ref{transductiveUK}, it then follows from Theorem~\ref{Schur_inv}) and Eq.~\ref{invKtilde}
that 
\begin{equation*}
\begin{split}
\augcms^{-1}[\new,\old]\Z[\old]
&=-\augcms^{-1}[\new] (k(\xnew,\xold), F[\new,])\augcms[-\new]^{-1}\widetilde{\Z} \\
\text{whereof }-(\augcms^{-1}[\new])^{-1}\augcms^{-1}[\new,-\new]\widetilde{\Z}
&=k(\xnew,\xold)\covsum[\old]^{-1}(\Z-F\widehat{\mybeta}^{(-\new)})+F[\new,]\widehat{\mybeta}^{(-\new)},
\end{split}
\end{equation*}
with $\widehat{\mybeta}^{(-\new)}$ the GLS estimator of $\mybeta$ based on $\Z[\old]$. 

Applying now under suitable invertibility assumptions a similar bloc inversion to 
\begin{equation}
\label{defM}
\augcm=
\left( \begin{matrix} \covsum  & F \\ F^{\top}  &  \textbf{0} \end{matrix} \right),
\end{equation} 
we get the BLUP $\widehat{\Z}^{(-\ii_o)}[\ii_o]$ of $\Z_{\ii_o}$ given $\Z_{\jj_o}$ under Universal Kriging settings as
\begin{equation}
\cvpred{\new}=-(\augcm^{-1}[\new])^{-1}\augcm^{-1}[\new,\old]\Z[\old].
\end{equation}  

\newcommand{\Qt}{\widetilde{\Q}}

Considering further the associated prediction residual $\E_{\new}:=\Z[\new]-\cvpred{\new}$, we finally get  
\begin{equation}
\begin{split}
\label{MFCVres_UK}
\E_{\new}&=(\augcm^{-1}[\new])^{-1}(\augcm^{-1}[1:n]\Z)[\new],\\
&=(\Qt[\new])^{-1}(\Qt\Z)[\new],
\end{split}
\end{equation} 
with $\Qt=\Q-\Q F (F^{\top}\Q F)^{-1} F^{\top} \Q$.

\section{Fast multiple-fold cross-validation}
\label{secfastMFCV}

In this section we leverage transductive GP equations in settings when subsets of the $n$ observations are left out, develivering in turn the joint distribution of multiple-fold cross validation residuals together with a fast computation approach. The key is that the previous results of Equation~\ref{MFCVres_SK} and Equation~\ref{MFCVres_UK} carry over to arbitrary ways of separating the data set into a left out and remaining observations (i.e. between test and learning sets).  
Throughout the following we consider vectors of strictly ordered indices from $\{1,\dots, n\}$, and denote by $\sfold$ the set of all such index vectors. Since these index vectors are completely characrterized by non-empty subsets of $\{1,\dots,n\}$, there are $2^n-1$ elements in $\sfold$. Elements $\ii \in \sfold$, that may also be thought of as subsets of $\{1,\dots, n\}$, are called \textit{folds}.

\subsection{The generic Gaussian vector case}
\label{sec:genericGV}

We now consider generic Gaussian cross-validation settings and deterministic functions arising in this context. 
$\myv \in \R^n$ and $\gencm \in \R^{n\times n}$ stand for a vector and a symmetric positive definite matrix, respectively. For any $\ii \in \sfold$, we defined the following cross-validation residual mapping:
\begin{align}
e_{\ii}(\cdot; \genm,\gencm): \myv \in \R^n \to e_{\ii}(\myv; \genm,\gencm)=\myv[\ii]-\genm[\ii] - 
\gencm[\ii,-\ii]\gencm[\ii]^{-1}(\myv[-\ii]-\genm[-\ii]),
\end{align}
delivering for $\myv \in \R^n$ the deterministic counterpart of the cross-validation residual 
obtained when predicting $\myV[\ii]$ from 
$\widehat{\myV}^{(-\ii)}[\ii]=\genm[\ii] + \gencm[\ii,-\ii]\gencm[\ii]^{-1}(\myV[-\ii]-\genm[-\ii])$, where $\myV \sim \mathcal{N}(\genm, \gencm)$.

\begin{theorem}[Fast calculation of Gaussian cross-validation residuals]
	\label{lem1}
Let $\ii \in \sfold$ and $\genQ=\gencm^{-1}$. Then, 
\begin{align}
\label{lem1.1}
e_{\ii}(\myv;\genm,\gencm)=\genQ[\ii]^{-1}(\genQ(\myv-\genm))[\ii]
\ \ \ 
(\myv \in \R^n).
\end{align}	
Consequently, for $\myV \sim \mathcal{N}(\genm, \gencm)$, the $\E_{\ii}=e_{\ii}(\myV;\genm,\gencm)$
$(\ii \in \sfold)$ are jointly Gaussian, centred, and 
\begin{equation}
\label{lem1.2}
\operatorname{Cov}(\E_{\ii},\E_{\jj})
=\genQ[\ii]^{-1} \genQ[\ii,\jj] \genQ[\jj]^{-1} \ \ (\ii,\jj \in \sfold).
\end{equation}
For the case of folds $\ii_1,\dots, \ii_q$ which concatenation gives $(1,\dots,n)$,
$\E_{\ii_1,\dots,\ii_q}:=[\E_{\ii_1}^{\top},\dots,\E_{\ii_q}^{\top}]^{\top}$ has centred $n$-dimensional Gaussian distribution with covariance matrix 
\begin{equation}
\label{lem1.3}
\operatorname{Cov}(\E_{\ii_1,\dots,\ii_q})
=\bloc^{-1} 
\genQ
\bloc^{-1}  
\end{equation}
where $\bloc=\operatorname{blockdiag}\left(\genQ[\ii_1], \dots, \genQ[\ii_q] \right)$.
\end{theorem}

\begin{proof}
	Equation~\ref{lem1.1} follows from suitably expressing blocks of $\gencm$'s inverse using the Schur complement approach of Equations~\ref{horn1} and~\ref{horn2}, delivering respectively  
	\begin{align*}
	\gencm^{-1}[\ii]&=(\gencm[\ii]-\gencm[\ii,-\ii]\gencm[-\ii]^{-1}\gencm[-\ii,\ii])^{-1}\text{, and} \\
	\gencm^{-1}[\ii,-\ii] & =-(\gencm[\ii]-\gencm[\ii,-\ii]\gencm[-\ii]^{-1}\gencm[-\ii,\ii])^{-1}\gencm[\ii,-\ii]\gencm[-\ii]^{-1}.
	\end{align*}
	Considering the rows indexed by $\ii$ of the product $\genQ\myv$, we hence get that 
	\begin{equation*}
	\begin{split}
	(\genQ(\myv-\genm))[\ii]&=\genQ[\ii](\myv[\ii]-\genm[\ii])-\genQ[\ii]\gencm[\ii,-\ii]\genQ[-\ii](\myv[-\ii]-\genm[-\ii])
	\end{split}
	\end{equation*}
	whereof $e_{\ii}(\myv;\genm,\gencm)=(\genQ[\ii])^{-1}(\genQ\myv)[\ii]=\genQ[\ii]^{-1}I_{n}[\ii,]\genQ(\myv-\genm)$. 
	
	The joint Gaussianity of the $\E_\ii$'s and the covariance structure follow from the affine dependence of $\E_\ii=(\genQ[\ii])^{-1}I_{n}[\ii,]\genQ(\myv-\genm)$ on $\myV$, so that the concatenating $q\geq 1$ random vectors $\E_{\ii_1}, \dots, \E_{\ii_q}$ leads to a Gaussian vector by left multiplication of $\myV-\genm$ by a deterministic matrix. As for the covariance matrix of $\E_{\ii_1,\dots,\ii_q}$, denoting by $I_n$ the identity matrix in $\R^{n\times n}$, it follows from 
	\begin{equation*}
	\begin{split}
	[\E_{\ii_1}^{\top},\dots,\E_{\ii_q}^{\top}]^{\top}
	&=\left(
	\begin{matrix} 
	(\genQ[\ii_1])^{-1}I_{n}[\ii_1]\genQ\myV\\
	\dots\\
	(\genQ[\ii_q])^{-1}I_{n}[\ii_q]\genQ\myV
	\end{matrix} 
	\right)
	\\
	&=
	\operatorname{blockdiag}\left( (\genQ[\ii_1])^{-1}, \dots, (\genQ[\ii_q])^{-1} \right)
	\left(
	\begin{matrix} 
	I_{n}[\ii_1]\\
	\dots\\
	I_{n}[\ii_q]
	\end{matrix}
	\right)
	\genQ 
	\myV\\
	&=
	\bloc^{-1} I_n \genQ \myV
	= 
	\bloc^{-1} \genQ \myV,
	\end{split}
	\end{equation*}
	hence
	$
	\operatorname{Cov}(\E_{\ii_1,\dots,\ii_q})
	=\bloc^{-1}
	\genQ
	\gencm
	\genQ
	\bloc^{\top}
	=
	\bloc^{-1}
	\genQ
	\bloc^{-1}
	$.
\end{proof}

\begin{remark}
The first part of the Lemma is purely deterministic, and hence applies to any function approximation methods that boils down to Gaussian conditioning equations (e.g., regularized quadratic risk minimization in Reproducing Kernel Hilbert Spaces). 	
\end{remark}

\begin{remark}
	Note that for arbitary $\ii_1,\dots,\ii_q \in \sfold$ i.e. without imposing ordering between them or that they form a partition, we would have a similar result yet without the above simplification, i.e. $\operatorname{Cov}(\E_{\ii_1,\dots,\ii_q})
	=\bloc
	\Delta  
	\genQ 
	\Delta^{T}  
	\bloc$
	with 
	$\Delta=[I_{n}[\ii_1]^{\top},\dots, I_{n}[\ii_q]^{\top}]^{\top}$. An extreme case would be to consider all possible non-empty subsets of $\{1,\dots,n\}$, leading to $q=2^n-1$ and $n2^{n-1}$ lines for $\Delta$. 
\end{remark}

\begin{remark}
	\label{rk_LOO}
	In the extreme case where $q=n$ and the $\ii_j$'s are set to $(j)$ $(1\leq j \leq n)$, one recovers on the other hand fast leave-one-out cross-validation formulae, and we obtain as a by-product the covariance matrix of leave-one-out errors 
	\begin{equation}
	\label{cov_LOO_res}
	\rm{diag}(\genQ[i]^{-1}) \genQ \rm{diag}(\genQ[i]^{-1}).
	\end{equation}
\end{remark}

\subsection{The Simple Kriging case}
\label{SKcase}

We now come back to GP prediction settings and first focus on Simple Kriging, following the observation model $Z_{i}=\gp(\xx_i)+\varepsilon_i \ \ \ (\xx_i \in D, i \in \{1,\dots,n\})$ from the previous section, where we recall that $\gp(\xx)=\mu(\xx)+\cgp(\xx)$ ($\xx\in D$) where $\cgp$ is a centred GP with covariance kernel $k$ and $\mu$ is a trend function, and $\noise\sim \mathcal{N}(\mathbf{0}, \covnoise)$ independently of $\cgp$. $\mu$ is first assumed null, an assumption that we will relax later on. Like in the exposition of the previous section, we denote $\Z$ the observation vector and, for any $\ii \in \sfold$, and $\E_{\ii}=\Z[\ii]-\cvpred{\ii}$ where $\cvpred{\ii}$ is the BLUP of $\Z[\ii]$ based on $\Z[-\ii]$. 
We assume throughout that $\covsum=K+\covnoise$ is invertible, with $K=(k(\xx_i,\xx_j))_{i,j\in\{1,\dots,n\}}$. 

\begin{cor}{}
\label{thm1}
The $\E_{\ii}$ $(\ii \in \sfold)$ are jointly Gaussian, centred, and 
\begin{equation}
\label{thm1.2}
\operatorname{Cov}(\E_{\ii},\E_{\jj})
=\Q[\ii]^{-1} \Q[\ii,\jj] \Q[\jj]^{-1} \ \ (\ii,\jj \in \sfold),
\end{equation}
where $\Q=\covsum^{-1}$. 
For the case of folds $\ii_1,\dots, \ii_q$ which concatenation gives $(1,\dots,n)$,
$\E_{\ii_1,\dots,\ii_q}:=[\E_{\ii_1}^{\top},\dots,\E_{\ii_q}^{\top}]^{\top}$ has centred $n$-dimensional Gaussian distribution with covariance matrix 
\begin{equation}
\label{thm1.3}
\operatorname{Cov}(\E_{\ii_1,\dots,\ii_q})
=\bloc^{-1} 
Q
\bloc^{-1}  
\end{equation}
where $\bloc=\operatorname{blockdiag}\left(Q[\ii_1], \dots, Q[\ii_q] \right)$.
\end{cor}

\begin{proof}
Directly follows from Lemma~\ref{lem1} using 
$\E_{\ii}=e_{\ii}(\Z;\mathbf{0},\covsum)$ with $\covsum=K+\covnoise$. 
\end{proof}

\begin{remark}
Extending these results to the case of Simple Kriging with a known trend function $\mu$ directly follows from $\E_{\ii}=e_{\ii}(\Z;\mymu,\covsum)$ with $\mymu=(\mu(\xx_i))_{1\leq i \leq n}$. 
\end{remark}

\begin{remark}
\label{rk_LOO2}
In the case of Remark~\ref{rk_LOO2}, we then obtain  
$\operatorname{cov}(E_i,E_j)=\Q[i]^{-1} \Q[i,j] \Q[j]^{-1}$ whereof $\operatorname{corr}(E_i,E_j)=Q[i]^{-\frac{1}{2}} \Q[i,j] \Q[j]^{-\frac{1}{2}}$, leading in particular to a correlation matrix of LOO residuals
\begin{equation}
\label{cor_LOO_res}
\operatorname{Corr}(\E_{(1),\dots,(n)})
=
\rm{diag}(\Q[i]^{-\frac{1}{2}}) \Q \rm{diag}(\Q[i]^{-\frac{1}{2}})=\bloc^{\frac{1}{2}}\Q\bloc^{\frac{1}{2}} 
.
\end{equation} 
\end{remark}

As a by-product of Propostion~\ref{thm1} (in the limiting case of LOO, see Remark~\ref{rk_LOO}), we represent on Figure~\ref{xiong_CorrLOO} the correlations between the LOO prediction residual at the leftmost location versus the LOO residuals at the other locations in the settings of the opening example from Figure~\ref{xiong_preds}. 

\medskip 

\begin{figure}[h!]
	\begin{center}
		\includegraphics[width=.65\linewidth]{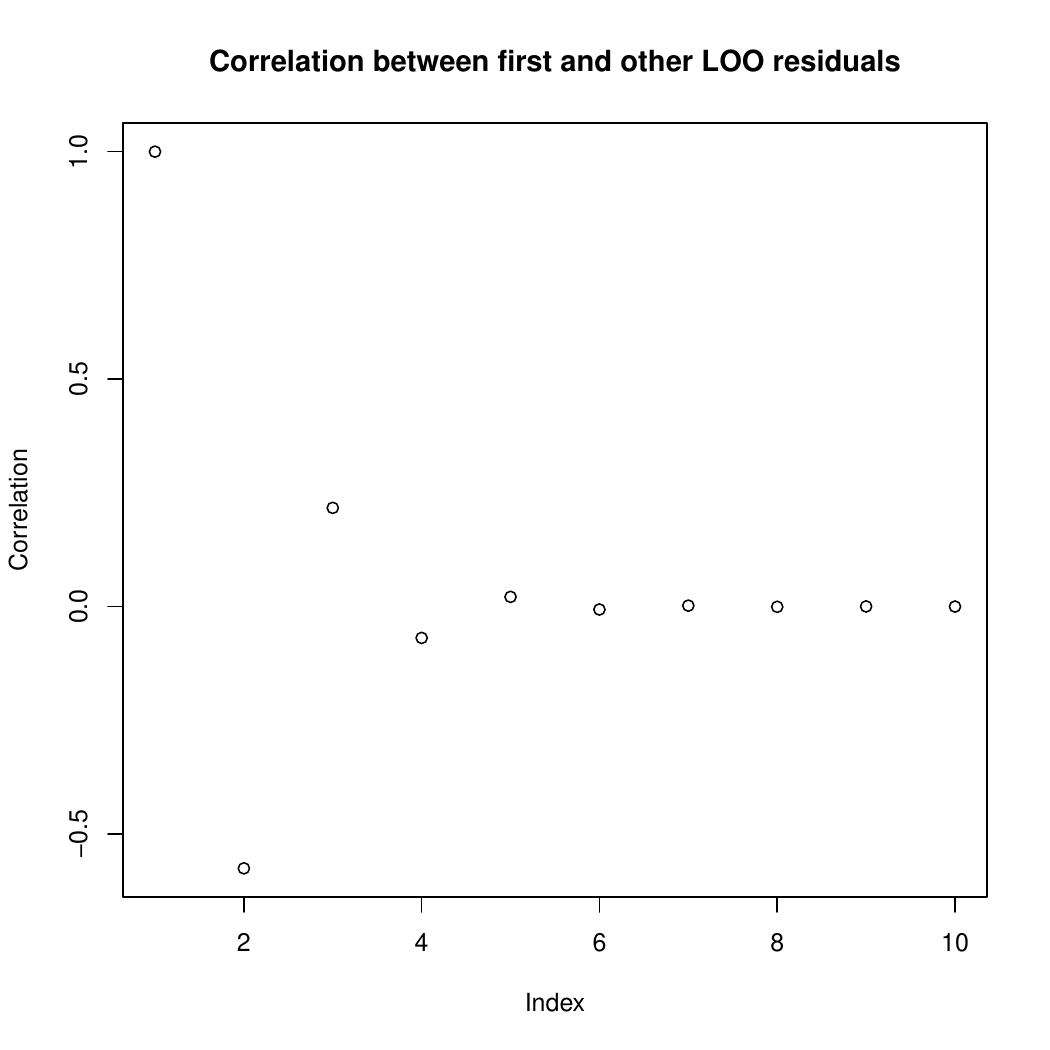}
		\caption{Correlations between the LOO residual at the leftmost point and the LOO residuals at all design points in the example displayed on Figure~\ref{xiong_preds}.}
		\label{xiong_CorrLOO}
	\end{center}
\end{figure}

A first interesting thing that we would like to stress here is that correlation with the LOO prediction residual at the second location is negative, with a value below $-0.5$. While such a negative correlation between successive LOO residuals does not materialize with the actual prediction errors at these two locations, an illustration of this effect can be seen on the upper panel of Figure~\ref{xiong_preds} when looking at the fourth and fifth locations. Coming back to Figure~\ref{xiong_CorrLOO}, after the second location the correlation goes back to positive but with a smaller magnitude and subsequently appear to continue with a damped oscillation until stationing around zero. As we will see in Section~3, accounting for these correlations is instrumental in producing suitable QQ-plots from cross-validation residuals. Let us now turn to a further example focusing on multiple-fold cross-validation.

\subsection{Extension to Universal Kriging case and more}
\label{Extensions}

Let us now formulate a corollary to Theorem~\ref{thm1} for the case of Universal Kriging such as presented throughout Section~\ref{secbasics} in general and transductive settings. 

\begin{cor}{}
	\label{cor1}
	For any $\ii \in \sfold$ such that $F[-\ii ,]$ is full column-ranked using the notation $\widehat{\Z}^{(-\ii)}[\ii]$ to denote the BLUP of $\Z[\ii]$ based on $\Z[-\ii]$ in Universal Kriging settings, the residual $\E_{\ii}=\Z[\ii]-\widehat{\Z}^{(-\ii)}[\ii]$ writes
	\begin{equation}
	\begin{split}
	\label{MFCVres_UK2}
	\E_{\ii}&=(\augcm^{-1}[\ii])^{-1}(\augcm^{-1}[-\ii]\Z)[\ii],\\
	&=(\Qt[\ii])^{-1}(\Qt\Z)[\ii],
	\end{split}
\end{equation} 
	where $\augcm$ is defined in Eq.~\ref{defM}, and $\Qt=\Q-\Q F (F^{\top}\Q F)^{-1} F^{\top} \Q$ with $\Q=\covsum^{-1}=\covsum^{-1}$. 
	Consequently, for any $q>1$ and $\ii_1,\dots, \ii_q \in \sfold$ such that the $F[-\ii_{j} ,]$'s are full column-ranked, the $\E_{\ii_j}$ $(1\leq j \leq q)$ are jointly Gaussian, centred, and with covariance structure given by 
	\begin{equation}
	\operatorname{Cov}(\E_{\ii},\E_{\jj})
	=(\Qt[\ii])^{-1} \Qt[\ii,\jj] (\Qt[\jj])^{-1} \ \ (\ii,\jj \in \{\ii_1,\dots, \ii_q\}).
	\end{equation}
	In particular, for the case of index vectors $\ii_1,\dots, \ii_q$ forming a partition of $\{1,\dots,n\}$
	and such that the concatenation of $\ii_1,\dots, \ii_q$ gives $(1,\dots,n)$,
	then the concatenated vector of cross-validation residuals $\E_{\ii_1,\dots,\ii_q}:=[\E_{\ii_1}^{\top},\dots,\E_{\ii_q}^{\top}]^{\top}$ has a $n$-dimensional centred Gaussian distribution with covariance matrix 
	\begin{equation}
	\label{thm1.3}
	\operatorname{Cov}(\E_{\ii_1,\dots,\ii_q})
	=\widetilde{\bloc}
	\Qt
	\widetilde{\bloc},  
	\end{equation}
	where $\widetilde{\bloc}=\operatorname{blockdiag}\left( (\Qt[\ii_1])^{-1}, \dots, (\Qt[\ii_q])^{-1} \right)$.
\end{cor}

\begin{proof}
A direct way to prove this property is to first generalize Equation~\ref{MFCVres_UK} to the case of generic folds $\ii \in \sfold$ such that $F[-\ii ,]$ is full column-ranked, which establishes Equation~\ref{MFCVres_UK2}, and then to follow the steps of Proposition~\ref{lem1} regarding the covariance structure and the joint Gaussian distribution of cross-validation residuals. Another proof takîng a Bayesian route is sketched below. 
\end{proof}

\begin{remark}
As we have recalled in Section~2, it is known that Universal Kriging equations can  be retrieved in a Bayesian setting by endowing $\mybeta$ with a Gaussian prior distribution $\mathcal{N}(\mathbf{0}, \covbeta)$ with invertible $\covbeta$ and letting $\covbeta^{-1}$ tend to $\mathbf{0} \in \R^{p \times p}$. Under this model (at fixed invertible $\covbeta$), the random vector $\Z$ would be centred Gaussian with a covariance matrix $\covsum+F \covbeta F^{\top}$.  
Then one may use Proposition~\ref{lem1} with $\mathbf{m}=0$ and $P=\covsum+F \covbeta F^{\top}$ to find that the cross-validation errors $\Z[\ii]-\widehat{\Z}^{(-\ii)}_{\covbeta}[\ii]$ are jointly Gaussian with covariance structure $$\operatorname{Cov}(\Z[\ii]-\widehat{\Z}^{(-\ii)}_{\covbeta}[\ii], \Z[\jj]-\widehat{\Z}^{(-\jj)}_{\covbeta}[\ii])
=
\genQ_{\covbeta}[\ii]^{-1} \genQ_{\covbeta}[\ii,\jj] \genQ_{\covbeta}[\jj]^{-1}
$$ 
where 
$\widehat{\Z}^{(-\ii)}_{\covbeta}[\ii]$ denotes the BLUP of $\Z[\ii]$ based on $\Z[-\ii]$ under the prior distribution $\mathcal{N}(\mathbf{0}, \covbeta)$ on $\covbeta$, and $\genQ_{\covbeta}=(\covsum+F \covbeta F^{\top})^{-1}$. Now by block inversion and Woodbury formula, we get
\begin{equation*}
\begin{split}
\genQ_{\covbeta}&=\left( \begin{matrix} \covsum  & F \\ F^{\top}  &  -\covbeta^{-1} \end{matrix} \right)^{-1}[1:n]\\
&=\Q-\Q F (\covbeta^{-1}+F^{\top}\Q F)^{-1} F^{\top} \Q.
\end{split}
\end{equation*}
From there we can conclude since $\covbeta^{-1}\to \mathbf{0}$, $P_{\covbeta} \to \Qt$. Note that checking tightness is not necessary here as we are considering a finite number of centred Gaussian residuals with converging covariances. 
\end{remark}

\begin{remark}
The latter corollary enables retrieving fast formulas for multiple-fold cross-validation residuals of linear regression models of \cite{Zhang1993} as well as associated covariances. In fact, putting $K=\mathbf{0}$ and $\covnoise=\tau^2 I_n$ (where $\tau>0$) delivers $\tau^2 \Qt=I_n-H$ with $H=(F^{\top}F)^{-1}F^{\top}\Z$, whereof $\E_{\ii}=(I_{n}-H)[\ii]^{-1} (\Z-H\Z)[\ii] $ and 
\begin{equation*}
\operatorname{Cov}(\E_{\ii},\E_{\jj})
=(I_{n}-H)[\ii]^{-1} (I_{n}-H)^{-1}[\ii,\jj] (I_{n}-H_\lambda)[\jj]^{-1} \ \ (\ii,\jj \in \sfold).
\end{equation*}	
Similarly, in ridge regression, we obtain by putting $\covnoise=\tau^2 I_n$ and $K=\lambda I_n$, 
\begin{equation}
\E_{\ii}=(I_{n}-H_\lambda)[\ii]^{-1} (\Z-H_{\lambda}\Z)[\ii].
\end{equation} 
where $H_\lambda=F(F^{T}F+\lambda I_n)^{-1}F^{T}$. 
Consequently, for any $q>1$ and $\ii_1,\dots, \ii_q \in \sfold$, 
the $\E_{\ii_j, \lambda}$ $(1\leq j \leq q)$ are jointly Gaussian, centred, and with covariance structure given by 
\begin{equation}
\operatorname{Cov}(\E_{\ii},\E_{\jj})
=(I_{n}-H_\lambda)[\ii]^{-1} (I_{n}-H_\lambda)^{-1}[\ii,\jj] (I_{n}-H_\lambda)[\jj]^{-1} \ \ (\ii,\jj \in \sfold).
\end{equation}
\end{remark}

Many other models could enjoy similar results (encompassing kernel ridge regression and least squares support vector machines, see \cite{An2007} where similar fast computations have been obtained for these model classes, yet without investigating the covariances and joint distribution of CV residuals) but in the following we will stick to GP as it is our main model of interest and the principles are readily applicable to other predictors falling under the settings of Proposition~\ref{lem1}.

\section{Some consequences in GP model fitting}
\label{csq_fitting}

\subsection{Graphical diagnostics and pivotal statistics}
\label{diag}

We now wish to further explore the potential of the established formulae for diagnosing the quality of GP models.  
A first immediate consequence of Theorem~\ref{thm1} and Corollary~\ref{cor1} for model fitting is that, under the hypothesis that observations are generated from the assumed model, cross-validation residuals can be transformed via appropriate matrix multiplications into a random vector with standard multivariate normal distribution.
Assuming for simplicity that $\operatorname{Cov}(\E_{\ii_1,\dots,\ii_q})$ is full-ranked  
and denoting $t=\sum_{j=1}^q \# \ii_j$ it follows that  
\begin{equation*}
\operatorname{Cov}(\E_{\ii_1,\dots,\ii_q})^{-1/2}\E_{\ii_1,\dots,\ii_q} \sim 
\mathcal{N}\left( \mathbf{0}, I_{t} \right). 
\end{equation*}
If fact, any matrix $S \in \R^{n\times n}$ such that $S\bloc \covsum^{-1}\bloc S^{\top}= I_{n}$ (remaining for simplicity under the final assumptions of Theorem~\ref{thm1}) does the job. More specifically, sticking to the final part of Theorem~\ref{thm1} where $\Delta=I_{n}$, one gets indeed with $S=\covsum^{1/2}\bloc^{-1}$ 
\begin{equation*}
S\E_{\ii_1,\dots,\ii_q} = \covsum^{-1/2} \Z \sim 
\mathcal{N}\left( \mathbf{0}, I_{n} \right), 
\end{equation*}
and hence the hypothesis that the model is correct can be questioned using any standard means relying on such a pivotal quantity with multivariate Gaussian distribution (e.g., by means of a $\chi^2$ test statistic or graphical diagnostics such as Q-Q plots). 
Similar considerations apply to the UK settings of Corollary~\ref{cor1}.

\begin{remark}
If $\Delta$ is not $I_n$ but still an invertible matrix, 
then the above can be adapted with $S=\covsum^{1/2}\Delta^{-1}\bloc^{-1}$. In case $\Delta$ is not invertible, analoguous approaches can be employed relying on the Moore-Penrose pseudo-inverse of $\bloc\Delta \covsum^{-1} \Delta \bloc^{\top}$ where the $n$-dimensional standard normal should be replaced by a standard Gaussian on the range of $\Delta \covsum^{-1} \Delta$. 
\end{remark}

Coming back to our first one-dimensional example, Figure~\ref{xiong_QQplotsLOO} represents Q-Q plots of the distribution of standardized LOO residuals (right panel) versus ``transformed'' LOO residuals (left panel). 

\begin{figure}[h!]
	\includegraphics[width=.5\linewidth]{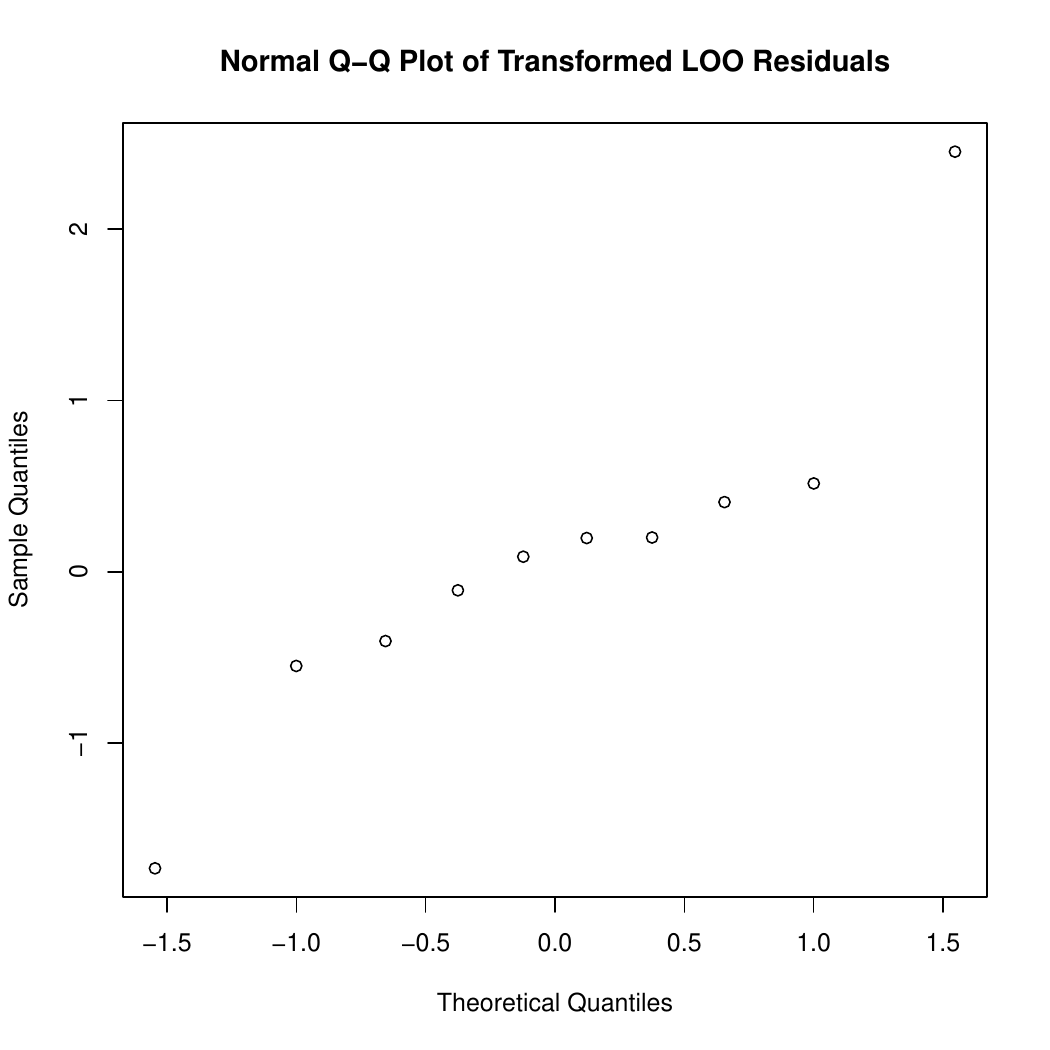}
	\includegraphics[width=.5\linewidth]{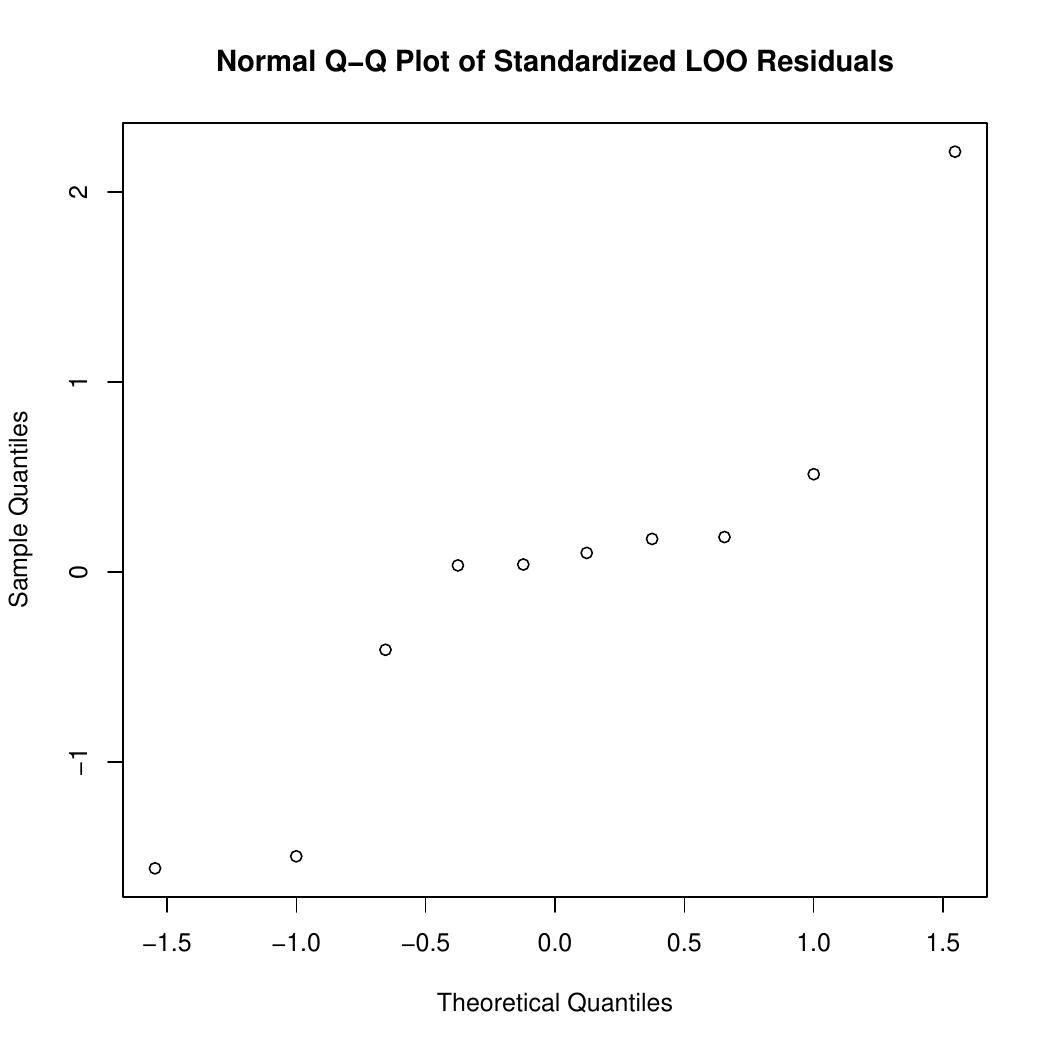}	
\caption{On the effect of accounting and correcting for correlation in Q-Q plots based on LOO residuals. Right panel: Q-Q plot against $\mathcal{N}(0,1)$ of LOO residuals merely divided by corresponding LOO standard deivations. Left panel: Q-Q plot against $\mathcal{N}(0,1)$ of duly transformed 
	LOO residuals.}
	\label{xiong_QQplotsLOO}
\end{figure}

\bigskip

While these two plots possess quite similar appearances, a difference that stands out occurs at the second point from the left. Even if it does not really make sense to think of points in terms of unique locations for the left panel of of Figure~\ref{xiong_QQplotsLOO}, we hypothesize that most entries do not differ much from the standardized version except for those cases where two successive locations come with LOO residuals of relatively high magnitude, which is precisely the case for the fourth and fifth locations from Figure~\ref{xiong_QQplotsLOO}, respectively with large negative and positive LOO residuals. 

As it turns out,  the second largest negative LOO residuals from the right panel of Figure~\ref{xiong_QQplotsLOO} does not show up on the left panel, and it seems reasonable to attribute this to its decorrelation from the largest positive residual corresponding to the fifth location (appearing rightmost of the right panel). 

\bigskip 

We now turn to the topic of covariance parameter estimation relying on cross-validation residuals. As we will see below, incorporating covariances between CV cross-validation residuals within such approaches leads to interesting developments. 

\subsection{Some implications in scale parameter estimation}
\label{scale}

Assuming a kernel of the form $k(\xx,\xx')=\sigma^2 r(\xx,\xx')$ with unknown $\sigma^2$ and noiseless observations (which is commonly encountered in the realm of computer experiments), the problem of estimating $\sigma^2$ by maximum likelihood versus cross-validation has been considered in earlier works, leading notably to results concerning the superiority of MLE in the well-specified case but also to situations when CV turned out to perfom better than MLE in the misspecified case \cite{Bachoc2013}. Denoting by $R$ the correlation matrix associated with the observation points $\xx_1,\dots,\xx_n$, assumed full-rank, it is well-known (Cf. for instance \cite{Santner.etal2003}) that the MLE of $\sigma^2$ can be written in closed form as 
\begin{equation}
\widehat{\sigma}^2_{\text{ML}}=\frac{1}{n} \Z R^{-1} \Z,
\end{equation}
In contrast, the leave-one-out-based estimator of $\sigma^2$ investigated in \cite{Bachoc2013} reads 
\begin{equation}
\label{def_Bachoc_est}
\widehat{\sigma}^2_{\text{LOO}}=\frac{1}{n} \Z R^{-1} (\operatorname{diag}(R^{-1}))^{-1} R^{-1} \Z,
\end{equation}
and originates from the idea, traced back by \cite{Bachoc2013} to \cite{Cressie1993}, that the criterion (with notation $C_{\text{LOO}}$ inherited from \cite{Bachoc2013}) defined by 
\begin{equation}
C_{\text{LOO}}^{(1)}(\sigma^2)=\frac{1}{n} \sum_{i=1}^n \frac{\E_{i}^2}{\sigma^2 c_{-i}^2},
\end{equation}   
should take a value close to one, where 
$c_{-i}^2=(s_{-i}^2)/\sigma^2$, leading to $\widehat{\sigma}^2_{\text{LOO}}=\frac{1}{n} \sum_{i=1}^n \frac{\E_{i}^2}{c_{-i}^2}$ and ultimately to Eq. \ref{def_Bachoc_est}.
This estimator turns out to be unbiased as $\widehat{\sigma}^2_{\text{LOO}}$, yet with greater variance (see \cite{Bachoc2013}): 
\begin{equation}
\operatorname{var}\left(\widehat{\sigma}^2_{\text{LOO}}\right)=\frac{2\sigma^4 \text{tr}((R^{-1} (\operatorname{diag}(R^{-1}))^{-1})^2)}{n^2}
\geq
\frac{2\sigma^4}{n}
=\operatorname{var}\left(\widehat{\sigma}^2_{\text{ML}}\right). 
\end{equation} 
In light of former considerations on the joint distribution of cross-validation residuals, it appears natural to revise $C^{(1)}_{\text{LOO}}(\sigma^2)$ in order to correct for covariances between LOO residuals, resulting in   
\begin{align}
C_{\widetilde{\text{LOO}}}^{(1)}(\sigma^2)
&=\frac{1}{n} \sum_{i=1}^n \sum_{j=1}^n \E_i (\bloc K^{-1} \bloc)^{ij} \E_j
=\frac{1}{n \sigma^2} \E^{\top} \operatorname{diag}(R^{-1}) R \operatorname{diag}(R^{-1}) \E
=\frac{1}{n \sigma^2} \Z^{\top} R^{-1} \Z,
\end{align}  
so that setting this modified criterion to $1$ would plainly result in 
$\widehat{\sigma}^2_{\widetilde{\text{LOO}}}=\frac{1}{n} 
\Z^{\top} R^{-1} \Z=\widehat{\sigma}^2_{\text{ML}}$
Hence the advantages found for $\widehat{\sigma}^2_{\text{LOO}}$ in \cite{Bachoc2013} in the case of model misspecification come at the price of neglecting covariances and hence possible redundancies between leave-one-out residuals, yet attempting to address this issue by naturally accounting for those covariances within the criterion leads to the MLE, known to enjoy optimality properties in the well-specified case of the considered settings but to be potentially suboptimal otherwise. We will see next that this is not the only instance where modifying LOO in order to account for residual covariances leads back to MLE.   

\subsection{On the estimation of further covariance parameters}
\label{further}

Cross-validation has also been used to estimate covariance parameters beyond the scale parameter $\sigma^2$.  We will now denote by $\theta$ those additional parameters (and by $\psi=(\sigma^2, \theta)$ the concatenation of all covariance parameters) and review some approaches that have been used to estimate them via cross-validation. An important point compared to the last section on $\sigma^2$'s estimation is that Kriging predictors and CV residuals generally depend on $\theta$, so that we will now stress it by noting $\E(\theta)$. Still in noiseless settings, the approach followed by \cite{Santner.etal2003,Bachoc2013} to estimate $\theta$ based on leave-one-out cross-validation residuals is to minimize 
\begin{equation}
\label{CLOO}
C_{\text{LOO}}^{(2)}(\theta)
=\sum_{i=1}^n \E_i(\theta)^2
,
\end{equation} 
Following \cite{Bachoc2013}, the latter criterion can be in turn expressed in closed form as 
\begin{equation}
\label{CLOO_closed}
C_{\text{LOO}}^{(2)}(\theta)
=\Z^{\top}R_{\theta}^{-1} \operatorname{diag}(R_{\theta}^{-1})^{-2}R_{\theta}^{-1}\Z,
\end{equation}  
where $R_{\theta}$ stands for $\Z$'s correlation matrix under correlation parameter $\theta$.
Based on our results, a natural extension of this criterion to multiple fold cross-validation can be similarly obtained. Considering $q$-fold settings such as in the final part of Theorem~\ref{thm1}, $C_{\text{LOO}}^{(2)}$ becomes indeed
\begin{equation}
\label{C2CV}
C_{\text{CV}}^{(2)}(\theta)
=\sum_{j=1}^q \lvert\lvert\E_{\ii_j}(\theta)\rvert\rvert^2
=
\lvert\lvert\E_{\ii_1,\dots,\ii_q}(\theta)\rvert\rvert^2
.
\end{equation} 
Then, building up upon our main theorem, we obtain that 
\begin{equation}
\label{C2CV_closed}
C_{\text{CV}}^{(2)}(\theta)=\mathbf{Z}^{\top} R_{\theta}^{-1} \bar{\bloc}^2(\theta) R_{\theta}^{-1} \mathbf{Z},
\end{equation}
with $\bar{\bloc}(\theta)=\operatorname{blockdiag}\left( (R_{\theta}^{-1}[\ii_1])^{-1}, \dots, (R_{\theta}^{-1}[\ii_q])^{-1} \right)$, generalizing indeed Eq.\ref{CLOO_closed}. 
Note that applying a sphering transformation to the CV residuals previous to taking norms in Eqs.~\ref{CLOO},\ref{C2CV} would lead to a criterion that boils down to $\theta \mapsto \mathbf{Z}^{T}R_{\theta}^{-1}\mathbf{Z}$ and hence appears as one of the two building blocks of the log-likelihood criterion. 

\bigskip 

On a different note and relaxing now the assumption of noiseless observations, Eq.~\ref{CLOO} is only one among several possible ways to construct a criterion based on applying loss functions to leave-one-out residuals, as exposed in \cite{Rasmussen.Williams2006}. While Eq.~\ref{CLOO} corresponds to the squared error loss, one arrives for instance by using instead the ``negative log validation density loss'' (after the terminology of \cite{Rasmussen.Williams2006}, Section 5.4.2) at the ``LOO log predictive probability'', also sometimes called \textit{pseudo-likelihood}: 
\begin{equation}
C_{\text{LOO}}^{(3)}(\psi)
=\sum_{j=1}^n \log\left( 
p_{Z[j] | \mathbf{Z}[-j]}(z_j| \mathbf{z}[-j]; \psi)
\right), 
\end{equation} 
where $p_{Z[j] | \mathbf{Z}[-j]}(z[j]| \mathbf{z}[-j]; \psi) $ denotes the conditional density of $Z[j]$ at the value $z[j]$ knowing that $\mathbf{Z}[-j]=\mathbf{z}[-j]$. 
Further scoring rules could be used as a base to define further criteria relying on cross-validation residuals, as illustrated for instance with the CRPS score in \cite{Petit, Petit_phd}. 

\bigskip

Let us now focus on an extension of $C_{\text{LOO}}^{(3)}(\psi)$ to multiple fold cross-validation, and on the exploration of some links between the resulting class of criteria and the log-likelihood. First,  $C_{\text{LOO}}^{(3)}(\psi)$ is straightforwardly adapted into
\begin{equation}
C_{\text{CV}}^{(3)}(\psi)
=\sum_{j=1}^q \log\left( 
p_{\mathbf{Z}[\ii_j] | \mathbf{Z}[-\ii_j]}(\mathbf{z}[\ii_j]| \mathbf{z}[-\ii_j]; \psi)
\right).
\end{equation} 
Reformulating the sum of log terms as the log of a product, we see that
\begin{equation}
\label{CV3_prod}
\begin{split}
C_{\text{CV}}^{(3)}(\psi)
=\log\left(\prod_{j=1}^q  
p_{\mathbf{Z}[\ii_j] | \mathbf{Z}[-\ii_j]}(\mathbf{z}[\ii_j]| \mathbf{z}[-\ii_j]; \psi)\right)
=\log\left(\prod_{j=1}^q  
p_{\mathbf{E}_{\ii_j}}(\mathbf{e}_{\ii_j}; \psi)\right)
\end{split}
\end{equation} 
and so it appears that this approach amounts indeed in some sense to abusingly assuming independence betwen the cross-validation residuals $\mathbf{E}_{\ii_{j}}$ $(j=1\dots q)$. 

\bigskip 

We now examine in more detail (still in the settings of the final part of Theorem~\ref{thm1}) when such cross-validation residuals are independent and show that in such case $C_{\text{CV}}^{(3)}(\psi)$ coincides with the log-likelihood criterion based on $\Z$. 
 
\begin{cor}{}
	\label{cor2}
The following propositions are equivalent:
\begin{description}
	\item[a)] The cross-validation residuals $\mathbf{E}_{\ii_{j}}$ $(j=1\dots q)$ are mutually independent, 
	\item[b)] $\bloc=\Q$,
	\item[c)] the subvectors  $\mathbf{Z}[\ii_{j}]$ $(j=1\dots q)$ are mutually independent. 
\end{description}
If they hold, then $C_{\text{CV}}^{(3)}(\psi)=\log(p_{\Z}(\mathbf{z};\psi))$ $(\psi \in \Psi)$. 
\end{cor}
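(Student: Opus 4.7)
The plan is to prove the equivalences via the cycle $(c) \Rightarrow (b) \Rightarrow (a) \Rightarrow (c)$, exploiting throughout the joint Gaussianity of $\mathbf{E}_{\ii_1,\dots,\ii_q}$ (delivered by Theorem~\ref{thm1}) and of $\mathbf{Z}$ itself, so that mutual independence of subvectors reduces to block-diagonality of the appropriate covariance matrices with respect to the fold partition $(\ii_1,\dots,\ii_q)$. The final claim on the pseudo-likelihood will then follow by an almost-sure identification of residuals with observations combined with the factorisation of $p_{\mathbf{Z}}$ under $(c)$.

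For $(c) \Rightarrow (b)$, I would note that independence of the $\mathbf{Z}_{\ii_j}$'s is equivalent to $K$ being block-diagonal with respect to $(\ii_1,\dots,\ii_q)$. In that case $K^{-1}$ inherits the same block structure, so $K^{-1}[\ii_j] = K[\ii_j]^{-1}$ for each $j$, which yields $\D = \operatorname{blockdiag}(K[\ii_1], \dots, K[\ii_q]) = K$. For $(b) \Rightarrow (a)$, substituting $\D = K$ into Equation~\ref{thm1.3} collapses $\D K^{-1} \D$ to $K$; since $K = \D$ is block-diagonal with respect to the partition, the $\mathbf{E}_{\ii_j}$ are pairwise uncorrelated and, being jointly Gaussian and centred, mutually independent.

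For the slightly trickier direction $(a) \Rightarrow (c)$, mutual independence of the $\mathbf{E}_{\ii_j}$'s forces $\D K^{-1} \D$ to be block-diagonal with respect to $(\ii_1,\dots,\ii_q)$; since $\D$ is itself block-diagonal and invertible with exactly the same structure, sandwich-inverting gives $K^{-1} = \D^{-1} (\D K^{-1} \D) \D^{-1}$, a product of three matrices all block-diagonal in the partition, hence block-diagonal itself. This transfers to $K$, and by Gaussianity of $\mathbf{Z}$ the independence of the $\mathbf{Z}_{\ii_j}$'s follows.

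For the final identity, I would observe that under $(b)$ one has $\mathbf{E}_{\ii_1,\dots,\ii_q} = \D K^{-1} \mathbf{Z} = K K^{-1} \mathbf{Z} = \mathbf{Z}$ almost surely, so each $\mathbf{E}_{\ii_j}$ coincides with $\mathbf{Z}_{\ii_j}$. Plugging this into the product form of $C_{\text{CV}}^{(3)}$ recalled in Equation~\ref{CV3_prod} and using $(c)$ to factorise $p_{\mathbf{Z}}(\mathbf{z};\psi)$ into $\prod_{j} p_{\mathbf{Z}_{\ii_j}}(\mathbf{z}_{\ii_j};\psi)$ yields the stated equality. The main obstacle is really only a bookkeeping one, namely being careful in the step $(a) \Rightarrow (c)$ that $\D$, $\D^{-1}$, and $\D K^{-1} \D$ all share the same block-diagonal structure so that their product inherits it; once the covariance expression of Theorem~\ref{thm1} is in hand, the remaining steps are routine linear algebra.
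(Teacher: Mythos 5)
Your proposal is correct and follows essentially the same route as the paper: joint Gaussianity from Theorem~\ref{thm1} reduces independence to vanishing cross-covariances, which (since the outer factors $(K^{-1}[\ii_j])^{-1}$ are invertible) is equivalent to $K^{-1}$ being block-diagonal, hence to $\D=K$ and to block-diagonality of $K$; the final identity likewise rests on $\E_{\ii_j}=\Z_{\ii_j}$ and the factorisation of $p_{\Z}$. Your arrangement as a cycle of implications with the sandwich $K^{-1}=\D^{-1}(\D K^{-1}\D)\D^{-1}$ is just a mild repackaging of the paper's chain of equivalences.
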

\begin{proof}
As we know from Theorem~\ref{thm1} that the cross-validation residuals $\mathbf{E}_{\ii_{j}}$ $(j=1\dots q)$ form a Gaussian Vector, they are mutually independent if and only if their cross-covariance matrices are null. Yet we also know by Eq.~\ref{thm1.3} 
from the same theorem that $\operatorname{Cov}(\E_{\ii},\E_{\jj})
	=(\Q[\ii])^{-1} \Q[\ii,\jj] (\Q[\jj])^{-1} \ \ (\ii,\jj \in \sfold)$. 
It follows that \textbf{a)} is equivalent to $\Q[\ii_{j},\ii_{j'}]=\mathbf{0}$ for $j \neq j'$ ($j,j'\in\{1,\dots,q\}$), which coincides in turn with \textbf{b)}. Looking finally at \textbf{b)} from the perspective of $\Sigma=\Q^{-1}$ being block-diagonal, we obtain similarly that \textbf{b)} is equivalent to \textbf{c)}. If these propositions hold, then $\mathbf{E}_{\ii_{j}}=\Z[\ii_{j}]$ $(j=1\dots q)$ from Eq.~\ref{lem1.1}, and thus for 
$\psi \in \Psi$:
\begin{equation*}
\label{CV3_indep}
\begin{split}
C_{\text{CV}}^{(3)}(\psi)
=\log\left(\prod_{j=1}^q  
p_{\mathbf{E}_{\ii_j}}(\mathbf{e}_{\ii_j}; \psi)\right)
=\log\left(\prod_{j=1}^q  
p_{\mathbf{Z}[\ii_j]}(\mathbf{z}[\ii_j]; \psi)\right)
=\log(p_{\Z}(\mathbf{z};\psi)).
\end{split}
\end{equation*} 
Also, one could have derived it from $p_{\mathbf{Z}[\ii_j] | \mathbf{Z}[-\ii_j]}(\mathbf{z}[\ii_j]| \mathbf{z}[-\ii_j]; \psi)=p_{\mathbf{Z}_{\ii_j}}(\mathbf{z}[\ii_j]; \psi)$.
\end{proof}

\section{Computational aspects}
\label{comp}

\subsection{Accuracy and speed-up checks for fast formulae}
\label{speedup}

We have seen in previous sections that block matrix inversion enabled us to reformulate the equations of (multiple-fold) cross-validation residuals and associated covariances in compact form, from Simple Kriging to Universal Kriging settings. 
Here and in Appendix we briefly illustrate that our Schur-complement-based formuale do indeed reproduce results obtained by a more straighforward approach, also keeping an eye on computational speed-ups and by-products. The next section focuses in more detail on the respective computational complexities of the two approaches, and also how efficiency gains evolve depending on the number of folds. 

\bigskip

\begin{figure}[h!]
	\begin{center}
		\includegraphics[width=.65\linewidth]{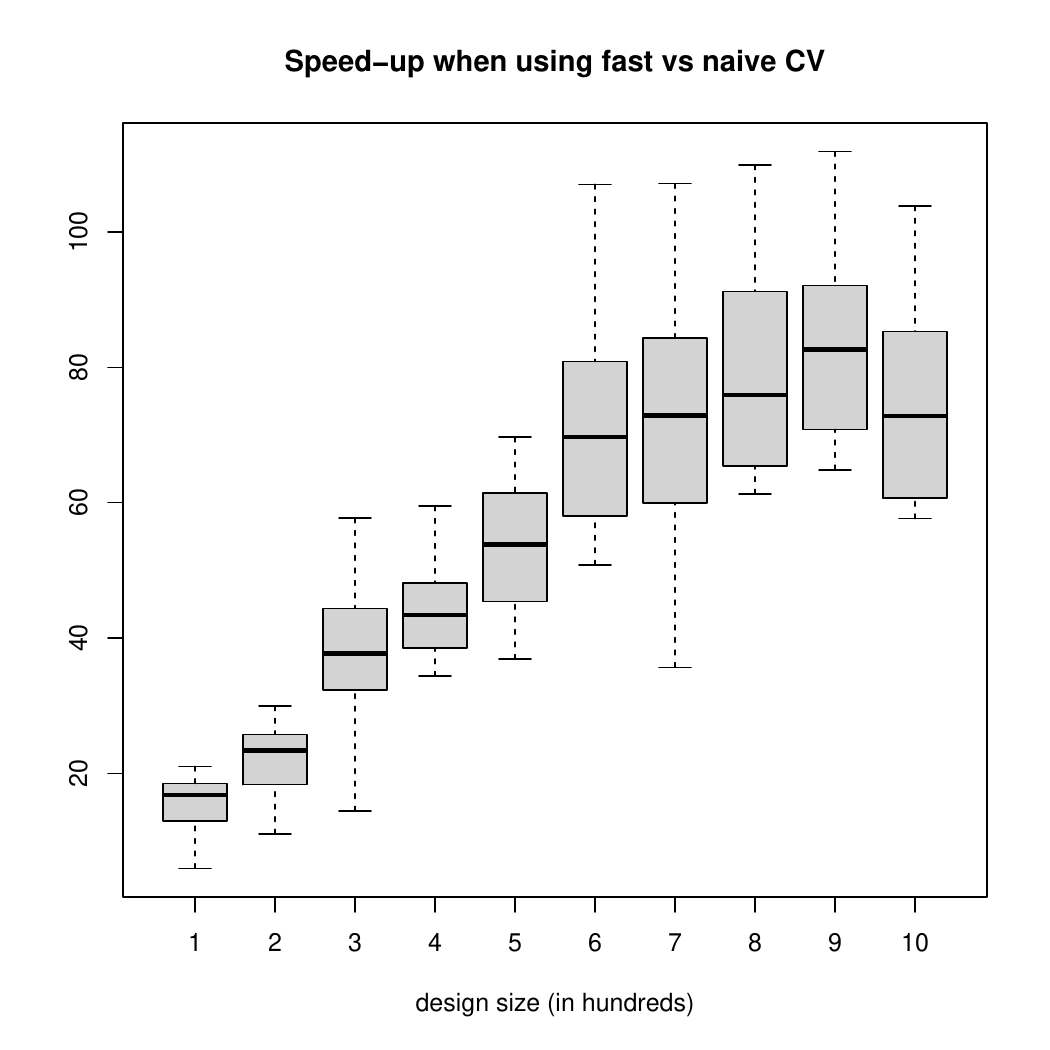}
\caption{Speed-up (ratio between times required to run the naive and fast methods) measured for $50$-fold CV on $10$ regular designs, with $100$ to $1000$ points equidistributed on $[0,1]$, where each measure is repeated $50$ times.}
	\label{speedupCV}
	\end{center}
\end{figure}

Accuracy and speed-up checks for LOO are presented in Section~\ref{LOO_speedups}. We present here some numerical results multiple-fold cross-validation beyond leave-one-out. We compare on Figures~\ref{speedupCV} and  \ref{speedupCV_accu} results obtained via fast versus naive implementation in the case of $50$-fold cross-validation on the same regular designs as in Section~\ref{LOO_speedups} ($10$ regular designs, with $100$ to $1000$ points equidistributed on $[0,1]$). Figure~\ref{speedupCV_accu} represents relative errors on CV mean and covariances compared to a straightforward computational approach, illustrating a very good agreement (with relative errors having orders of magnitude between $10^{-14}$ and $10^{-11}$). As can be seen on Figure~\ref{speedupCV}, speed-ups start getting mitigated with designs nearing the highest considered design sizes, a phenomenon related to a computational trade-off that is analyzed in more detail in the next section.

\begin{figure}[h!]
	\begin{center}
		\includegraphics[width=.45\linewidth]{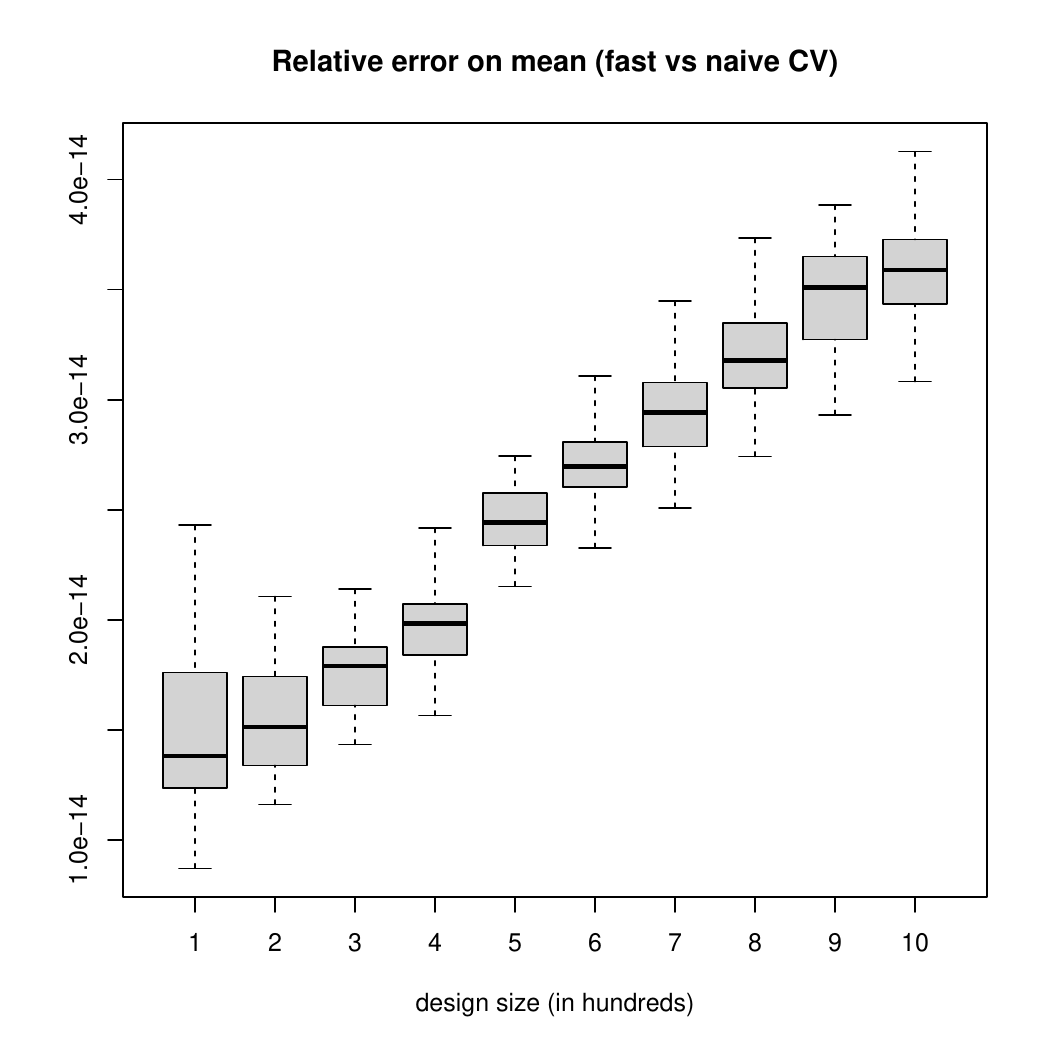}	
		\includegraphics[width=.45\linewidth]{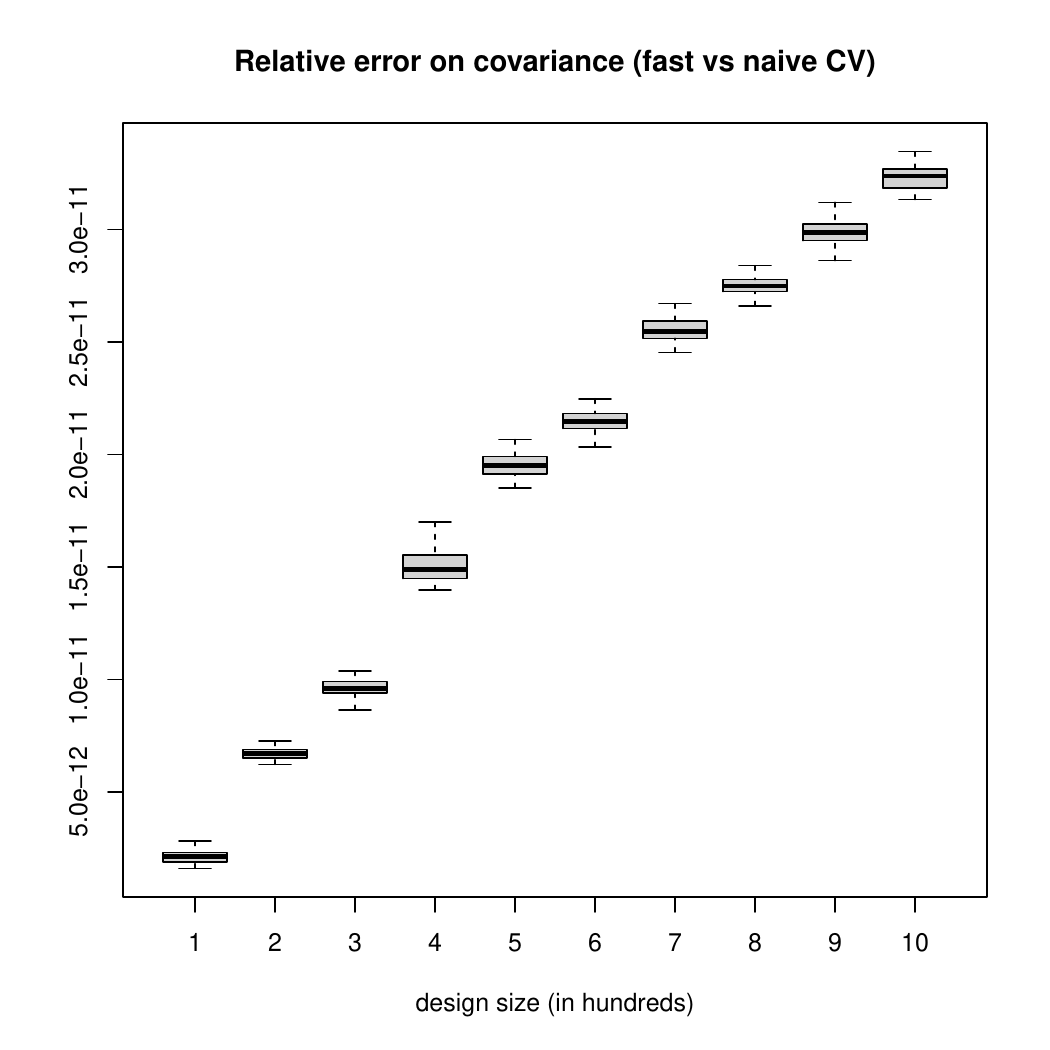}
\caption{Relative errors on CV means and covariances (between naive and fast methods, with the naive method as reference) measured as on Figure~\ref{speedupCV}.}
	\label{speedupCV_accu}
	\end{center}
\end{figure}

\subsection{More on complexity and efficiency gains}
\label{complexity}
 
We now turn to a more specific examination on how our closed form formulae affect computational speed compared to a naive implementation of multiple fold cross-validation. To this aim, we focus on the particular case of homegeneous fold sizes and denote by $r\geq 1$ their common number of elements, so that $n=r\times q$. As we will now develop, a first order reasoning on computational costs associated with naive versus closed multiple-fold cross-validation highlights the existence of two regimes depending on $q$: for large $q$ it is faster to use close form formula yet for small $q$ one should better employ the naive approach. The rationale is as follows; in a naive implementation, the main computational cost is on inverting $q$ covariance sub-matrices, all belonging to $\R^{(n-q)\times (n-q)}$. Denoting this (approximate) cost by $\kappa_{\text{naive}}$ and assuming a cubic cost for matrix inversion (with a multiplicative constant $\gamma >0$), we hence obtain that 
\begin{equation}
\kappa_{\text{naive}}=q\times \gamma (n-r)^3.
\end{equation}
One the other hand, the close form approach requires one inversion of a $n\times n$ matrix and $q$ inversions of $r \times r$ matrices, leading to an approximate cost 
\begin{equation}
\kappa_{\text{close}}=\gamma (n^3+qr^3).
\end{equation}
We thus obtain (at first order) that a speed up takes place whenever 
\begin{equation}
\begin{split}
& 0 \leq q\times \gamma (n-r)^3-\gamma (n^3+qr^3)\\
\Leftrightarrow \ & q^3\times (n-r)^3 \geq n^3 (q^2+1) \\
\Leftrightarrow \ & (q-1)^3 \geq (q^2+1), \\
\end{split}
\end{equation}
which, as $q$ is a positive integer, can be proved to be equivalent to 
$q\geq 4$, 
as $q \to (q-1)^3 - (q^2+1)$ possesses a unique real root between $3$ and $4$, takes negative values for $q\in \{1,2,3\}$, and tends to $+\infty$ when $q\to +\infty$.  
As noted in the previous section, however, in our implementation the cost of the fast approach turns out to be smaller than $\gamma (n^3+qr^3)$ since the Cholesky factor is already pre-calculated. Modelling this with a damping factor of $\alpha \in (0,1)$ in front of the $\gamma n^3$ term, we end up with an approximate cost of $\kappa_{\text{close}}=\gamma (\alpha n^3+qr^3)$, leading via analogue calculations to a speed-up whenever $(q-1)^3 \geq (\alpha q^2 +1)$. Again, the polynomial involved possesses a single real-valued root, with a value shrinking towards $2$ as $\alpha$ tends to $0$. 

\bigskip 

In practice, speed-ups are already observed here from $q=2$, which may be attributed to various reasons pertaining notably to implementation specifics and arbitrary settings used in the numerical experiments. Also, let us stress again that the reasoning done above is to be taken as an approximation at ``first order'' in the sense that it does not account for matrix-vector multiplication costs, storage and communication overheads, and further auxiliary operations that could influence the actual running times. We now present indeed some numerical experiments at fixed design size but varying number of folds and monitor the variations of the measured speed-ups and accuracy in calculating cross-validation residual means and covariances.   

\bigskip 

For convenience, we consider here a design of size $2^{10}=1024$, and let the number of folds decrease from $q=1024$ to $q=2$ by successive halvings. This amounts to folds of sizes $r$ ranging from $1$ to $512$, correspondingly. For each value of $q$, $50$ random replications of the folds are considered; this is done by permuting the indices at random prior to arranging them in regularly constructed folds of successive $r$ elements.
The resulting speed-up distributions are represented in Figure~\ref{speedup_pow2}, in function of $q$. With speed-up medians ranging from $1027.82$ to $1.50$ (means from $1429.63$ to $1.28$, all truncated after the second digit), we hence observed a speed-up whatever the value of $q$, yet with a substantial decrease when $q$ decreases compared to the LOO situation, as could be expected.  

\begin{figure}[h!]
	\begin{center}
		\includegraphics[width=.65\linewidth]{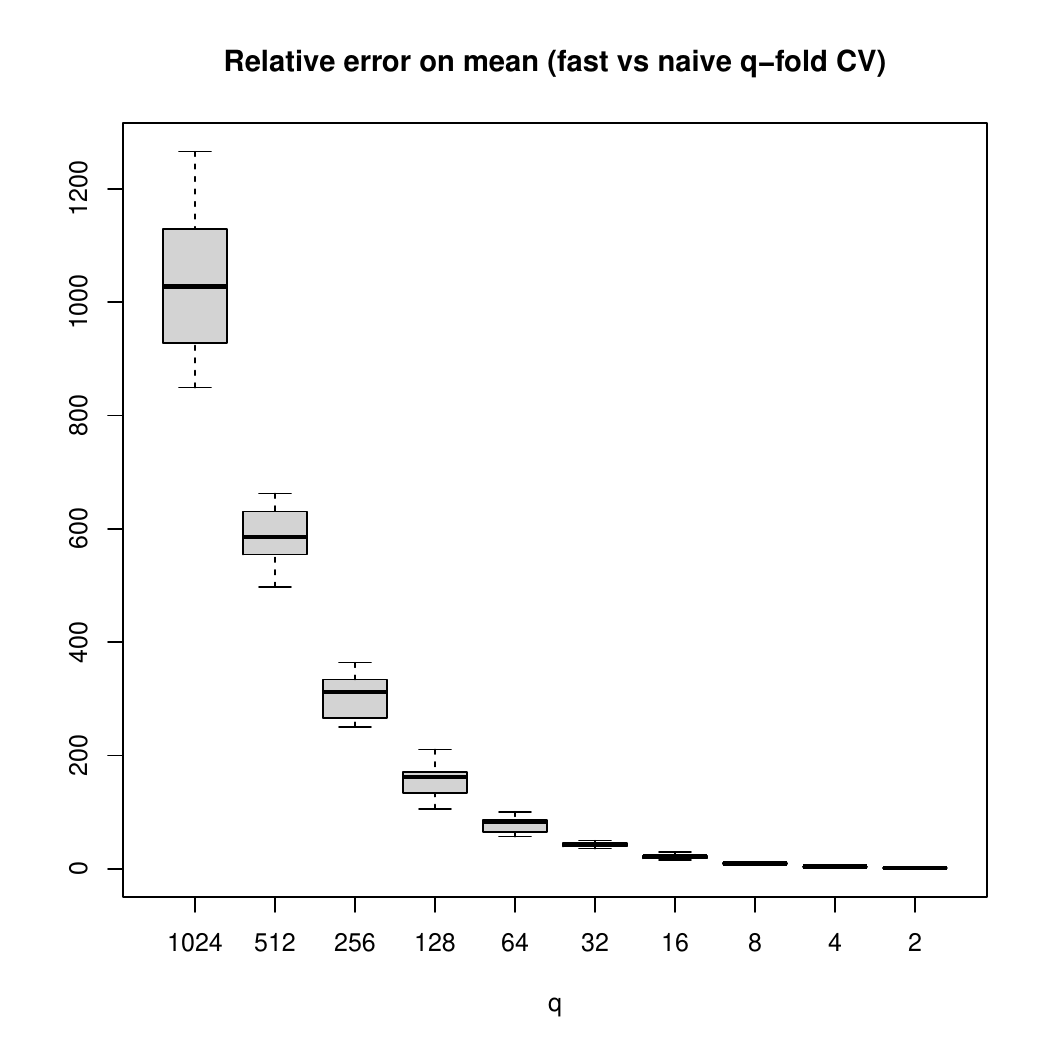}	
\caption{Speed-up (ratio between times required to run the naive and fast methods) measured for $q$-fold CV, where $q$ decreases from $1024$ to $2$ and $50$ seeds are used that affect here both model fitting and the folds.}
	\label{speedup_pow2}
	\end{center}
\end{figure}

\bigskip

Additionally, relative errors on cross-validation residuals and associated variances calculated using the fast versus naive approach are represented in Figure~\ref{relerr_pow2}, also in function of $q$. It is remarkable that, while the relative error on the calculation of residuals increases moderately (with $r$) from a median of around $3.5\times 10^{-14}$ to values nearing $4\times 10^{-14}$, in the case of the covariances the variation is more pronounced, with an increase from approx. $2\times 10^{-11}$ to $1.2\times 10^{-10}$. Further analysis may shed light on the underlying phenomena, yet one can state that with a maximum relative error of magnitude $10^{-10}$, the observed differences remain of a neglectable extent.    

\begin{figure}[h!]
	\begin{center}
		\includegraphics[width=.45\linewidth]{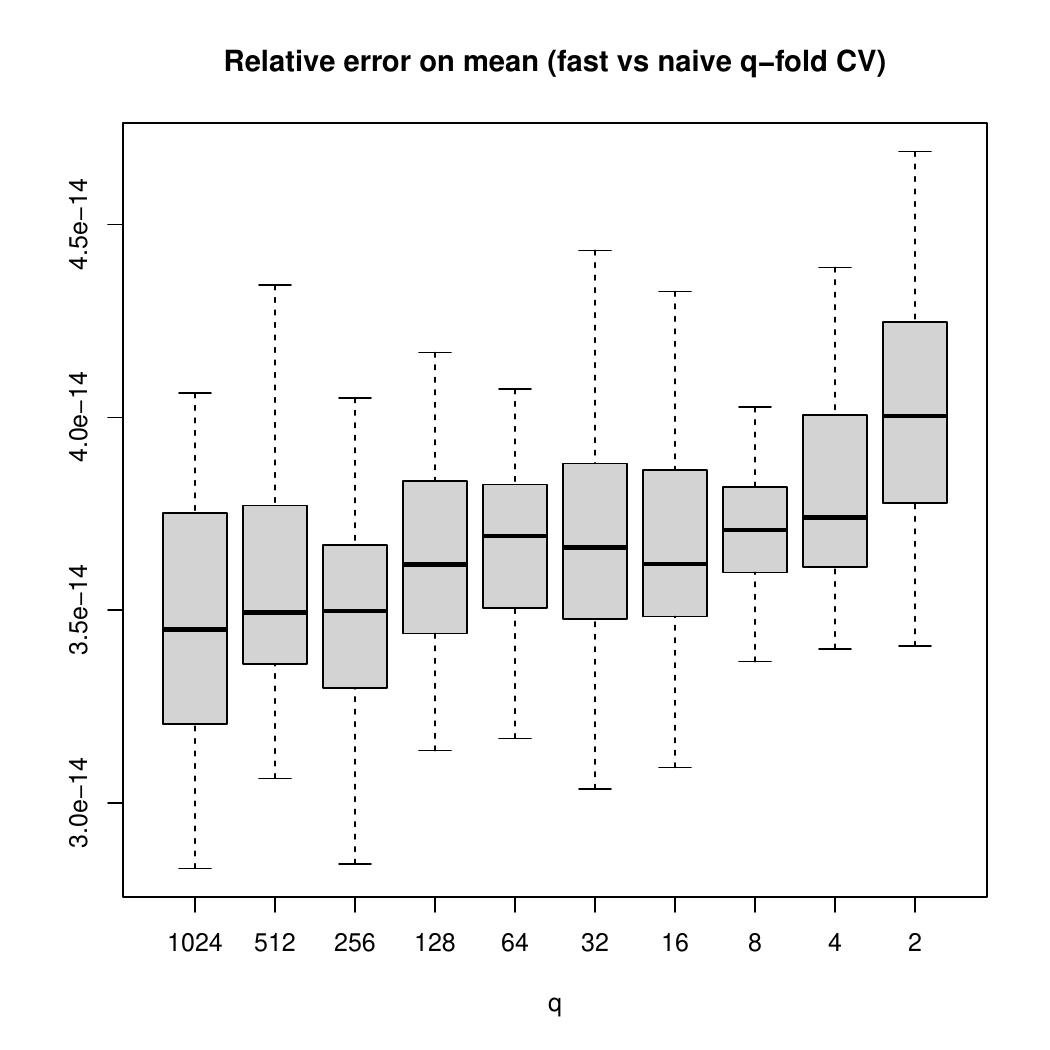}	
		\includegraphics[width=.45\linewidth]{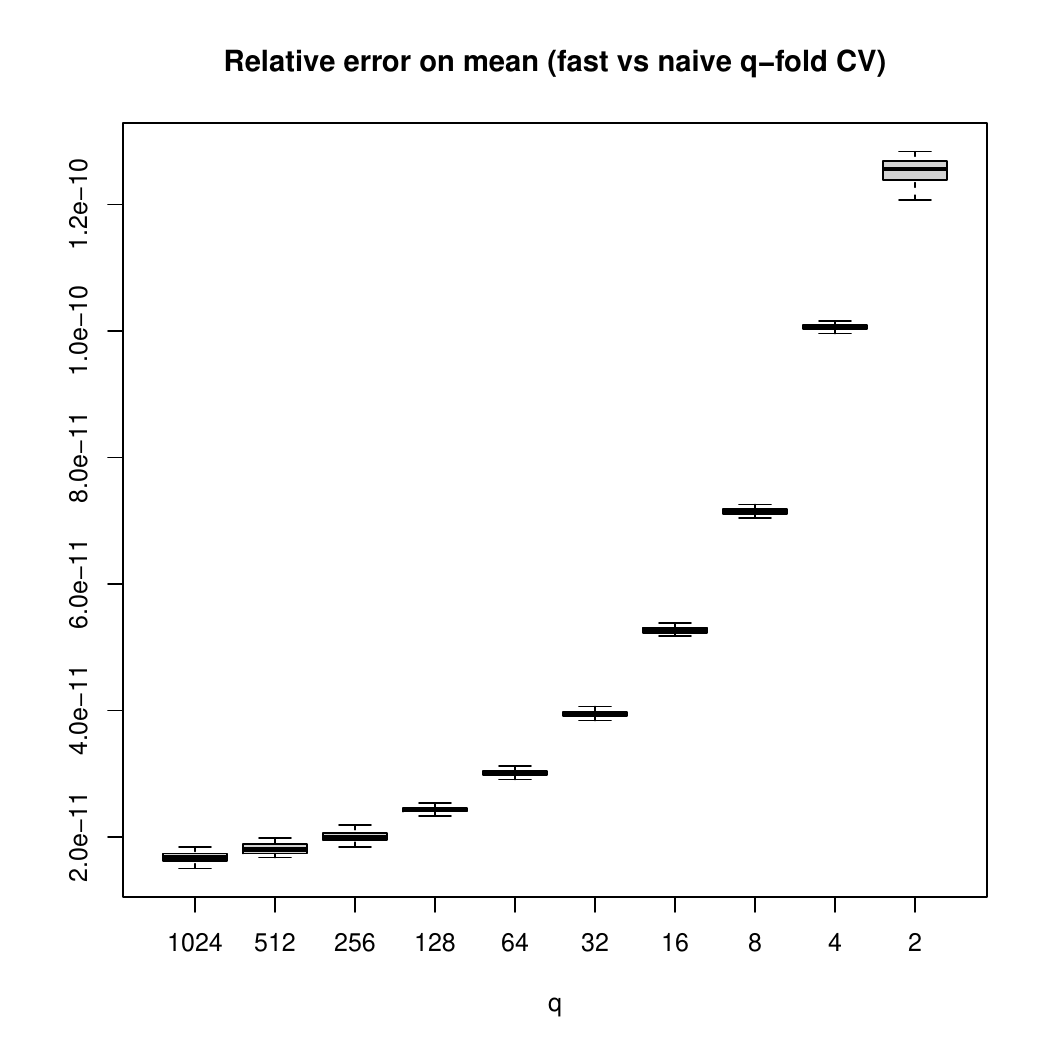}
\caption{Relative errors on CV means and covariances (between naive and fast methods, with the naive method as reference) measured in similar settings as on Figure~\ref{speedup_pow2}.}
	\label{relerr_pow2}	
	\end{center}
\end{figure}

Let us point out that the computational benefits highlighted in the case of multiple-fold cross-validation with a unique partitioning do carry over and even further grow in the case of $N\geq 2$ replications with varying partitions. Sticking to the previously considered settings but replicating the partitions, we find that a naive approach has a cost driven by 
\begin{equation}
\kappa_{\text{naive}}^{\text{rep}}=Nq\gamma (n-r)^3,
\end{equation}
while a similar approach using the fast formulae would have a cost driven by 
\begin{equation}
\kappa_{\text{close}}^{\text{rep}}=\gamma (n^3+Nqr^3).
\end{equation}
Then we obtain that $\kappa_{\text{close}}^{\text{rep}} \leq \kappa_{\text{naive}}^{\text{rep}}$ as soon as $(q-1)^3 \geq \frac{q^2}{N}+1$, occuring already from $q=3$. 

\section{Application test case}
\label{conta}

%
We now illustrate and investigate some benefits of multiple-fold cross-validation on an application test case motivated by a contaminant source localization problem from hydrogeology \cite{Pirot.etal2019}. There the goal is to minimize an objective function that is quantifying the discrepancy between given contaminant concentrations at monitoring wells and analogue quantities obtained under varying scenarios regarding the potential source underlying this contamination. Solving this difficult optimization problem in order to localize the source is performed in \cite{Pirot.etal2019} via Bayesian Optimization, and Gaussian Process models are thus used to surrogate the objective function. An instance of such an objective function is presented in Figure~\ref{conta1}. Our focus is on fitting a GP model to this function under a clustered experimental design, and to highlight notable differences arising here between leave-one-out and multiple-fold cross validation using the cluster structure. 

\begin{figure}[h!]
	\begin{center}
		\includegraphics[width=.65\linewidth]{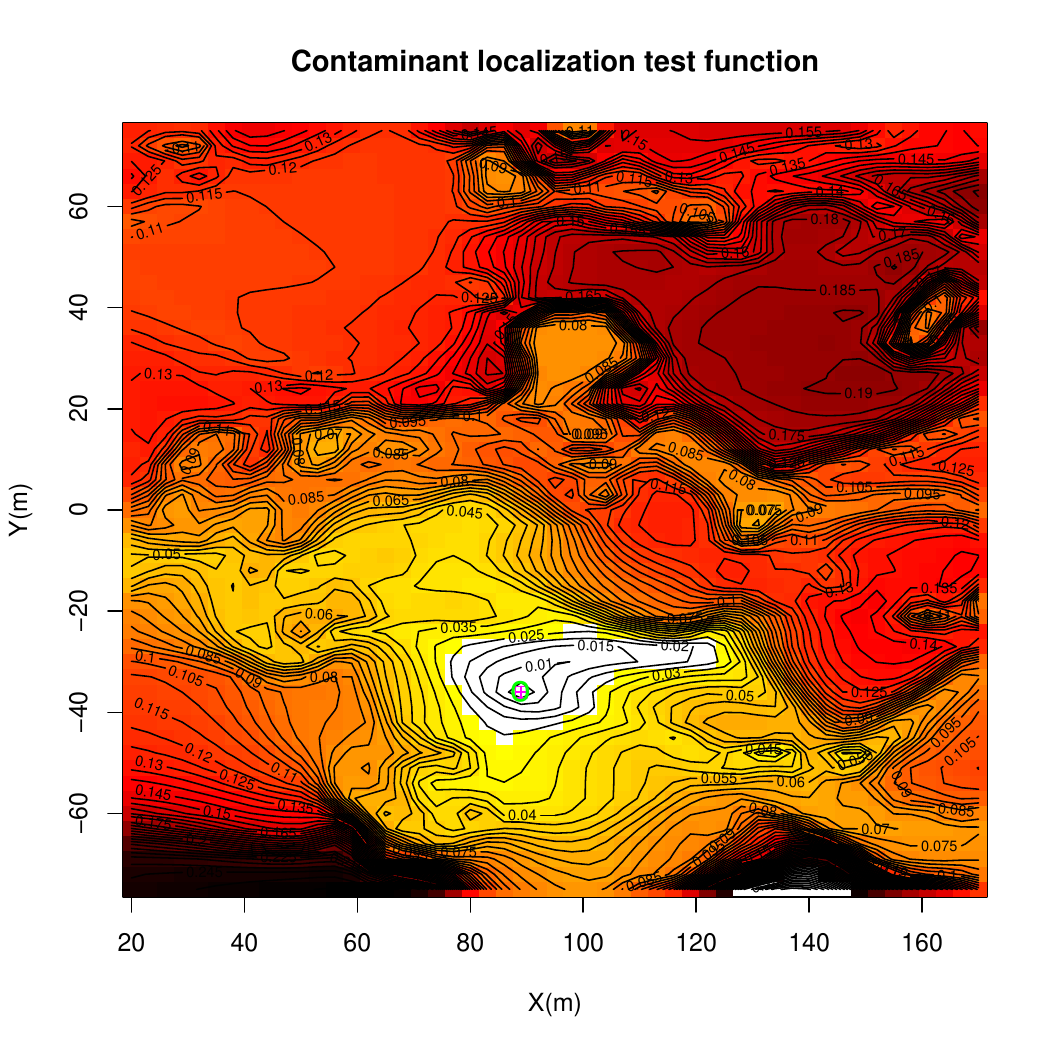}
		\caption{Contaminant localization test function designed by summing misfits between given concentrations at monitoring wells and corresponding simulation results when varying the candidate source localization (See \cite{Pirot.etal2019} for more detail).}
	\label{conta1}
	\end{center}
\end{figure}

We present in Figure~\ref{conta2-3} the mean and standard deviation maps of a GP model fitted to the latter objective function based on an experimental design consisting of $25$ $5$-element clusters. 
On the right panel of the same figure one sees how the prediction uncertainty increases when moving away from these clusters. Here the covariance parameters are estimated by MLE. This model will serve as starting point to the following discussion on LOO and multiple-fold CV.  

\begin{figure}[h]
	\begin{center}
		\includegraphics[width=.45\linewidth]{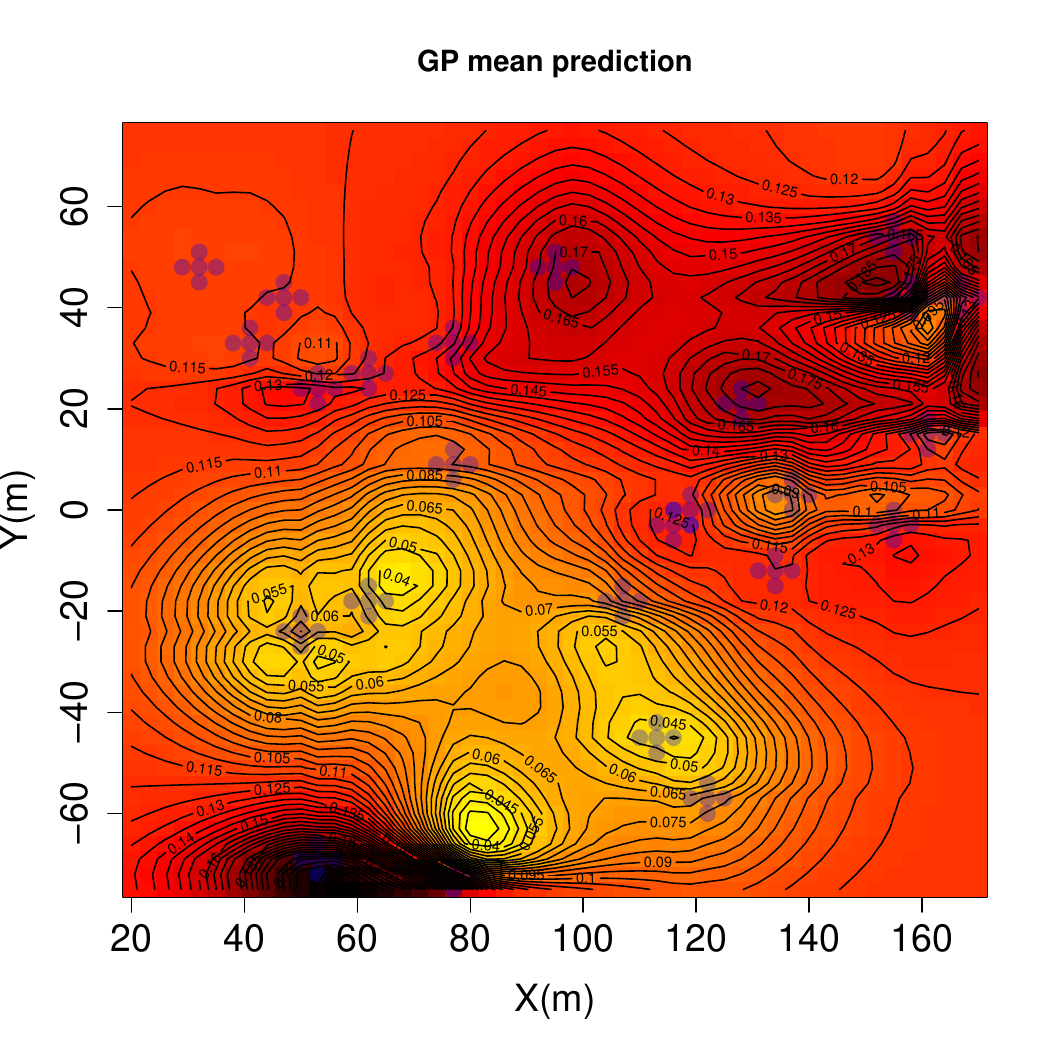}	
		\includegraphics[width=.45\linewidth]{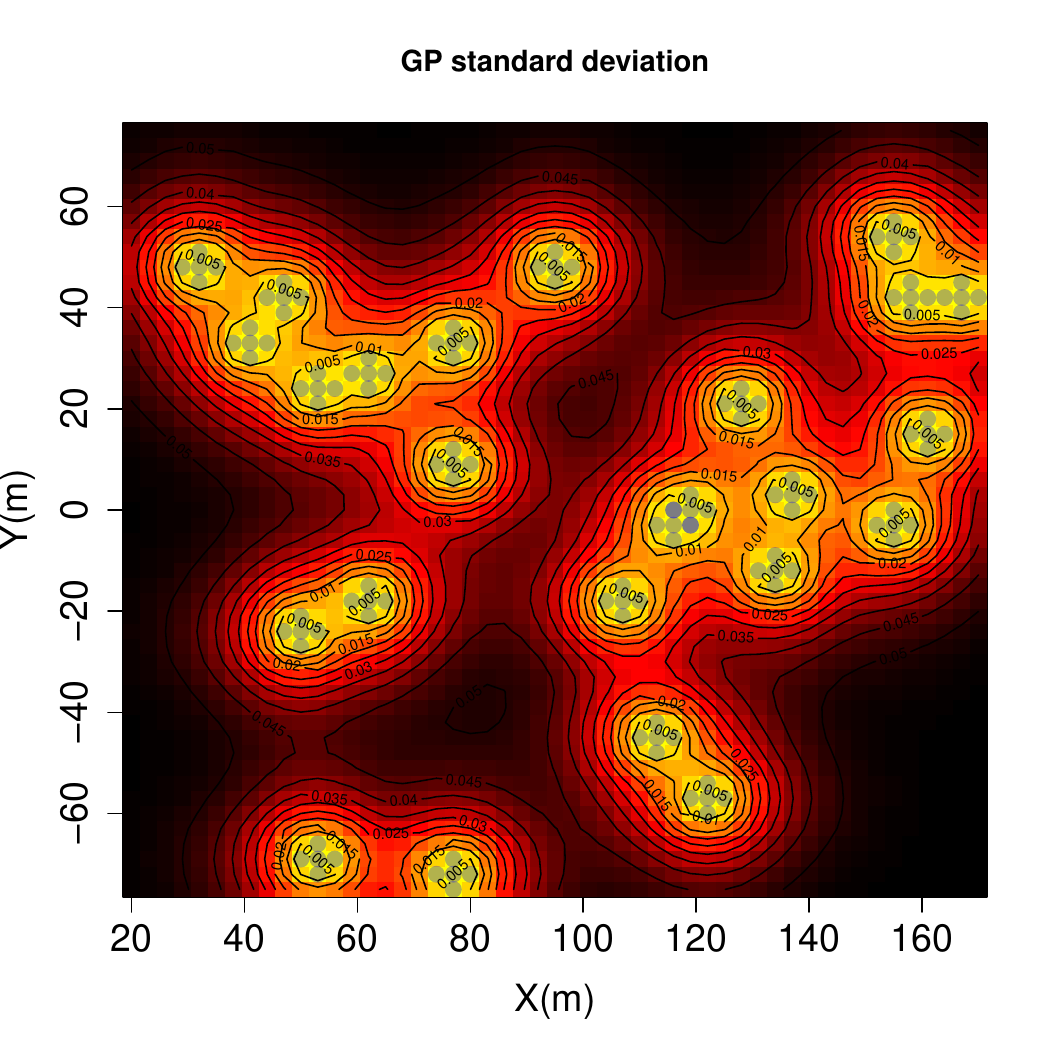}		
		\caption{Gaussian Process prediction mean and standard deviation on the contaminant localization test function with $25$ clover-shape $5$-element observation clusters and covariance parameters estimated by MLE.}
	\label{conta2-3}
	\end{center}
\end{figure}

We now compare CV in the LOO versus in ``Leave-One-Cluster-Out'' settings (i.e. $25$-fold CV with folds$\equiv$clusters). Since we can evaluate the actual function on a fine grid, we can easily calculate the absolute prediction errors and compare them to absolute cross-validation residuals. 

\begin{figure}[h!]
	\begin{center}
		\includegraphics[width=.65\linewidth]{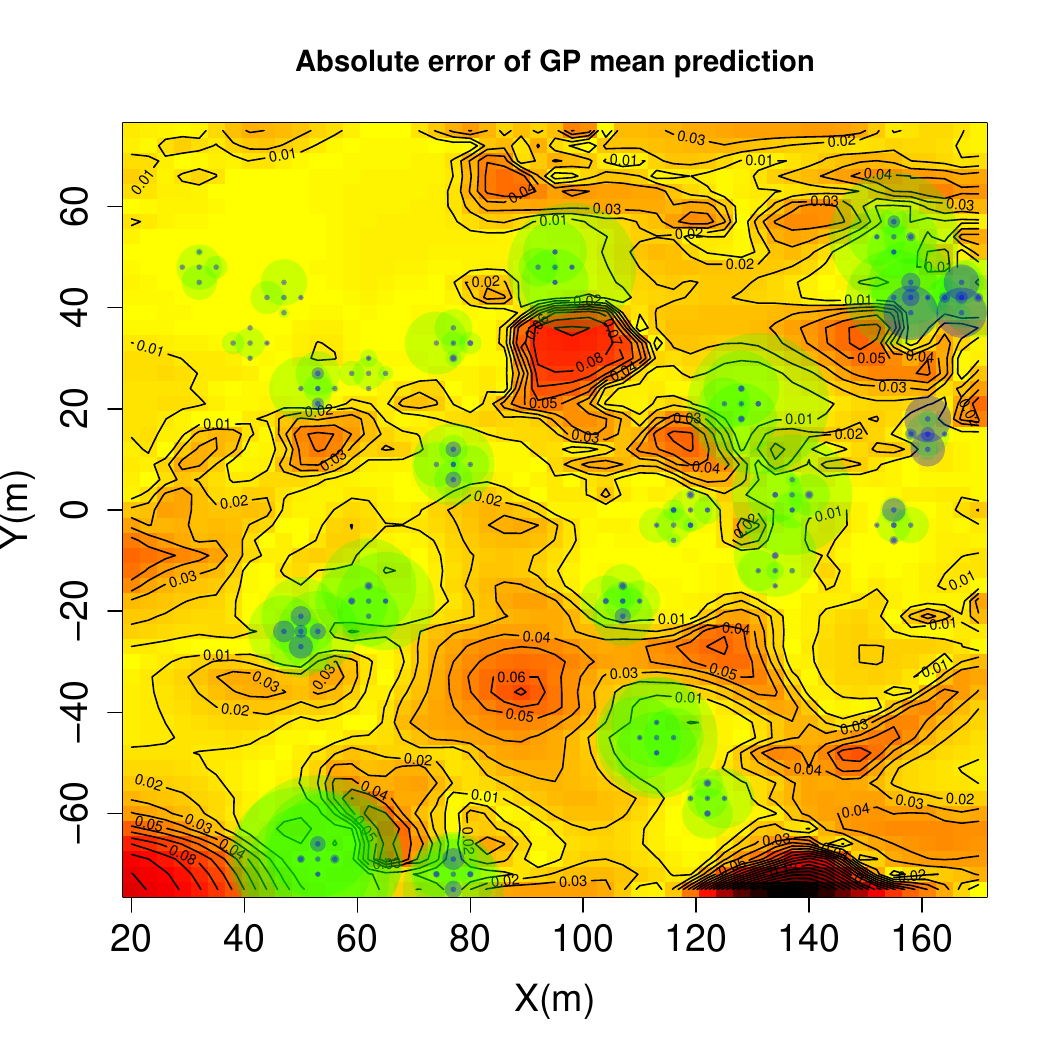}	
		\caption{Absolute prediction errors (heatmap) versus CV residuals (disks of radii proportional to absolute residuals, blue for LOO and green for MFCV).} 
		\label{conta5-6}
	\end{center}
\end{figure}

On Figure~\ref{conta5-6}, we represent a heatmap of the absolute prediction errors, to which are superimposed disks respectively representing the absolute LOO residuals (in blue) and absolute multipl-fold CV residuals (in green). It is noticeable that the LOO absolute residuals are in this case rarely informative about the presence of higher prediction errors nearby while the multiple-fold CV absolute residuals tend to better point out regions where the model is comparatively poorly performing.  

Figure~\ref{conta5-6bis} illustrate by representing LOO versus MFCV residual norms as a function of the range parameter and comparing the to the true reconstruction error that, here again, properly designing the folds drastically change the outcome. In the present case, it can be observed that the curve of MFCV absolute residuals much more closely follows the actual reconstruction error, pointing out in particular a better suited value of the range parameter while displaying a higher curvature. In contrast, the top curve associated with LOO features a flatter region near the optimum resulting in turn in an optimal range associated with a larger reconstruction error. 

\begin{figure}[h!]
	\begin{center}
		\includegraphics[width=.65\linewidth]{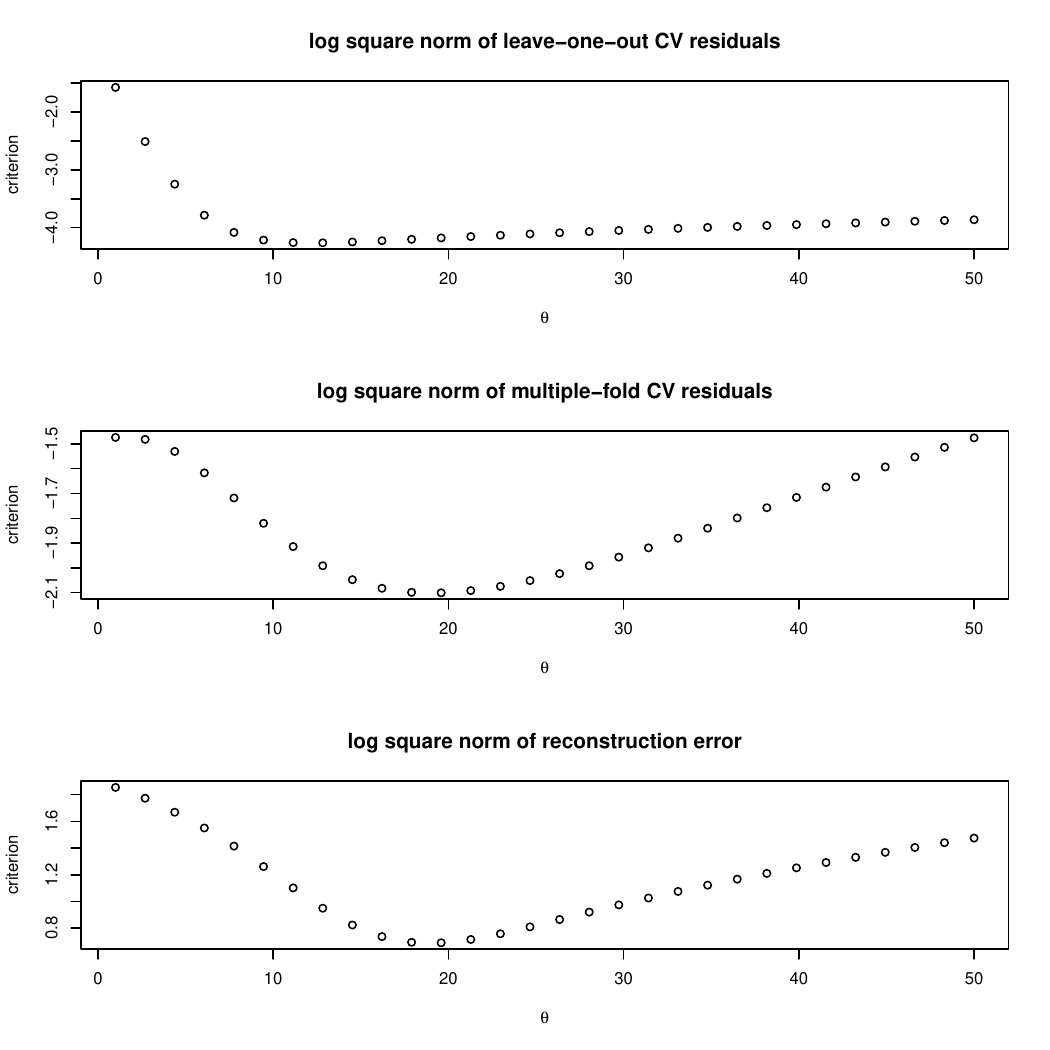}		
		\caption{
		Right: log square norm of LOO residuals (top), of MFCV residuals (center), and of reconstruction error (bottom) as a function of the range hyperparameter.}
		\label{conta5-6bis}
	\end{center}
\end{figure}

\section{Discussion}
\label{discussion}

Fast leave-one-out formulae for Gaussian Process prediction that have been used for model diagnostics and hyperparameter fitting do carry over well to multiple-fold cross validation, as our theoretical and numerical results in both Simple and Universal Kriging settings suggest. The resulting formulae were found to allow substantial speed-ups, yet with a most favourable scenario in the leave-one-out case and decreasing advantages in cases with lesser folds of larger observation subsets. A first order analysis in terms of involved matrix inversion complexities confirmed the observed trend, yet in most considered situations fast formulae appeared computationally worthwhile. In addition, established formulae enabled a closed-form calculation of the covariance structure of cross-validation residuals, and eventually correcting for an improper assumption of independence silently underlying QQ-plots with standardized LOO residuals.   

\bigskip 

As we established as well, the established formulae lend themselve well to generalizing existing covariance hyperparameter estimation approaches that are based on cross-validation residuals. Looking first at scale parameter estimation in the LOO case, we found that changing a criterion underlying the scale estimation approach used in \cite{Santner.etal2003, Bachoc2013} by correcting for covariance between LOO residuals led back to the maximum likelihood estimator of scale. On a different note and with more general covariance hyperparameters in view, as established in Corollary~\ref{cor2} maximizing the pseudo-likelihood criterion mentioned in \cite{Rasmussen.Williams2006} was found to coincide with MLE in the case of independent cross-validation residuals.  

\bigskip

It is interesting to note as a potential starting point to future work on cross-validation-based parameter estimation that going beyond the independence assumption of Corollary~\ref{cor2} and accounting for the calculated residual covariances results in fact in a departure from basic Maximum Likelihood Estimation. Still assuming indeed that $\Delta=I_{n}$ and replacing the product in $C_{\text{CV}}^{(3)}(\psi)$ by the joint density of cross-validation residuals, we end up indeed with
\begin{equation}
\label{CV3mod}
\begin{split}
C_{\widetilde{\text{CV}}}^{(3)}(\psi)
&
=\log\left( 
p_{(\mathbf{E}_{\ii_1}, \dots, \mathbf{E}_{\ii_q})}((\mathbf{e}_{\ii_1}, \dots, \mathbf{e}_{\ii_q}); \psi) 
\right)\\
&
=\log\left( 
p_{\D \covsum^{-1} \mathbf{Z}}(\mathbf{e}; \psi) 
\right)\\
&
=\log\left( 
p_{\D \covsum^{-1} \mathbf{Z}}(\bloc \covsum^{-1}\mathbf{z}; \psi) 
\right)\\
&=\log\left( 
p_{\mathbf{Z}}(\mathbf{z}; \psi) \vert \det(\bloc \covsum^{-1})  \vert  
\right)\\
&=\log\left( 
p_{\mathbf{Z}}(\mathbf{z}; \psi) \right)
+ \log\left( \det(\bloc) \right) - \log\left( \det(\covsum) \right),
\end{split}
\end{equation} 
and we see that maximizing $C_{\widetilde{\text{CV}}}^{(3)}(\psi)$ generally departs indeed from MLE, as $\covsum$ and $\bloc$ are functions of $\psi$, yet coincides with it when $\det(\bloc)=\det(\covsum)$.
While studying further this new criterion is out of scope of the present paper, we believe that using the full distribution of cross-validation residuals could be helpful in designing novel criteria and procedures for covariance parameter estimation and more.  
Beyond this, efficient adjoint computation of gradients could be useful to speed-up covariance hyperparameter estimation procedures such as studied here, but also variations thereof relying on further scoring rules \cite{Petit}, not only in LOO settings but also potentially in the case of multiple-fold cross validation. 

\bigskip 

It is important to stress that while the presented results do not come with procedures for the design of folds, they might be of interest to create objective functions for their choice. Looking at the covariance matrix between cross-validation residuals might be a relevant ingredient to such procedures. On the other hand future research might also be concerned with choosing folds so as to deliver cross-validation residuals reflecting in some sense the generalization errors of the considered models. Of course, designing the folds is not independent of choosing the $\xx_i$'s and we expect interesting challenges to arise at this interface. Also, stochasic optimization under random folds is yet another approach of interest which could be eased thanks to fast computation of cross-validation residuals.  
 Last but not least, as fast multiple-fold cross-validation carries over well to linear regression and GP modelling shares foundations with several other paradigms, it would be desirable to generalize presented results to broader model classes (see for instance \cite{Rabinowicz2020a}, where mixed models are tackled).

\section*{Acknowledgements}

Part of DG's contributions have taken place within the Swiss National Science Foundation project number 178858, and he would like to thank Ath\'ena\"is Gautier, Fabian Guignard, C\'edric Travelletti and Riccardo Turin for stimulating scientific exchanges. Yves Deville should be warmly thanked too for insightful comments on an earlier version of this paper. Warm thanks as well to the two referees and the members of the editorial team, whose comments have enabled improving the paper substantially. Calculations were performed on UBELIX, the High Performance Computing (HPC) cluster of the University of Bern. DG would also like to aknowledge support of Idiap Research Institute, his primary affiliation during earlier stages of this work. 

\bibliographystyle{plain}

\begin{thebibliography}{10}
	\bibitem{An2007}	
	S.~An, W.~Liu, S.~Venkatesh.
	\newblock Fast cross-validation algorithms for least squares support vector machine and kernel ridge regression,
	\newblock {\em Pattern Recognition}, 40(8), 2007. 
	
	\bibitem{Arlot2010}
	S.~Arlot and A.~Celisse.
	\newblock A survey of cross-validation proceduresfor model selection.
	\newblock {\em Statistical Surveys}, 2010.
	
	\bibitem{Bachoc2013}
	F.~Bachoc.
	\newblock Cross validation and maximum likelihood estimation of
	hyper-parameters of gaussian processes with model misspecification.
	\newblock {\em Computational Statistics and Data Analysis}, 66:55--69, 2013.
	
	\bibitem{Bates2023}
	S.~Bates and T.~Hastie and R.~Tibshirani.
	\newblock Cross-validation: what does it estimate and how well does it do it? 
	\newblock Journal of the American Statistical Association, 2023.
	
	\bibitem{Cressie1993}
	N.A.C.~Cressie.
	\newblock {\em Statistics for spatial data}.
	\newblock Wiley, 1993.
	
	\bibitem{Currin1989}
	C.~	Currin and T.~Mitchell and M.~Morris and D.~Ylvisaker. 
	\newblock A Bayesian Approach to the Design and Analysis of Computer Experiments ORNL-6498, 1988.
	
	\bibitem{Deville.etal2020}
	Y.~Deville, D.~Ginsbourger, and O.~Roustant.
	\newblock {\em kergp: Gaussian Process Laboratory}, 2020.
	\newblock {R package version 0.5.1}.
	
	\bibitem{Dubrule1983}
	O.~Dubrule.
	\newblock Cross validation of kriging in a unique neighborhood.
	\newblock {\em Journal of the International Association for Mathematical
		Geology}, 15 (6):687--699, 1983.
	
	\bibitem{Fischer2016}
	Benjamin Fischer.
	\newblock Model selection for gaussian process regression by approximation set
	coding.
	\newblock Master's thesis, ETH Z\"urich, 2016.
	
	\bibitem{Gallier}
	J.~Gallier.
	\newblock The schur complement and symmetric positive semidefinite (and
	definite) matrices.
	\newblock Retrieved at \url{http://www.cis.upenn.edu/~jean/schur-comp.pdf}.
	
	\bibitem{Golub1979}
	G.~H.~Golub and M.~Heath and G.~Wahba 
	\newblock Generalized Cross-Validation as a Method for Choosing a Good Ridge Parameter 
	\newblock Technometrics, 21:2, 215-223, 1979. 
		
	\bibitem{Handcock.Stein1993}
	M.S.~Handcock and M.L.~Stein
	\newblock A Bayesian Analysis of Kriging.
	\newblock {\em Technometrics} 35(4), 403-410, 1993.
	
	\bibitem{Helbert}
	C.~Helbert, D.~Dupuy, and L.~Carraro. 
	\newblock Assessment of uncertainty in computer experiments from Universal to Bayesian Kriging. 
	\newblock {\em Appl. Stochastic Models Bus. Ind.} 25, 99-113, 2009.
	
	\bibitem{Kaminsky2021}
	A.L. Kaminsky, Y.~Wang, K.~Pant, W.N. Hashii, and A.~Atachbarian.
	\newblock An efficient batch k-fold cross-validation voronoi adaptive sampling
	technique for global surrogate modeling.
	\newblock {\em J. Mech. Des.}, 143(1):011706, 2021.
	
	\bibitem{LeGratiet.etal2015}
	L.~Le~Gratiet, C.~Cannamela, and B.~Iooss.
	\newblock Cokriging-based sequential design strategies using fast
	cross-validation for multi-fidelity computer codes.
	\newblock {\em Technometrics}, 57:418--427, 2015.
	
	\bibitem{Magnusson2020}
	M.~Magnusson, M.~R. Andersen, J.~Jonasson, and A.~Vehtari.
	\newblock Leave-one-out cross-validation for bayesian model comparison in large
	data.
	\newblock In {\em Proceedings of the 23rd International Conference on
		Artificial Intelligence and Statistics (AISTATS)}, 2020.
	
	\bibitem{Marmin2018}
	S.~Marmin, D.~Ginsbourger, J.~Baccou, and J.~Liandrat.
	\newblock Warped gaussian processes and derivative-based sequential design for
	functions with heterogeneous variations.
	\newblock {\em SIAM/ASA Journal on Uncertainty Quantification}, 6(3):991--1018,
	2018.
	
	\bibitem{Martino2017}
	L.~{Martino}, V.~{Laparra}, and G.~{Camps-Valls}.
	\newblock Probabilistic cross-validation estimators for gaussian process
	regression.
	\newblock In {\em 25th European Signal Processing Conference (EUSIPCO)},
	pages 823--827, 2017.
	
	\bibitem{Matheron1969}
	G.~Matheron.
	\newblock Le krigeage universel.
	\newblock {\em Les Cahiers du Centre de Morphologie Math\'ematique de
		Fontainebleau}, 1, 1969.

	\bibitem{Omre.Halvorsen1989}
	H.~Omre and K.B.~Halvorsen.
	\newblock  The Bayesian bridge between simple and universal kriging.
	\newblock {\em Math. Geol.} 21, 767-786, 1989.
	
	\bibitem{Petit}
	S.J. Petit, J.~Bect, S.~Da~Veiga, P.~Feliot, and E.~Vazquez.
	\newblock Towards new cross-validation-based estimators for gaussian process
	regression: efficient adjoint computation of gradients.
	\newblock https://arxiv.org/pdf/2002.11543.pdf, 2020.
	
	\bibitem{Petit_phd}
	S.J. Petit.
	\newblock Improved Gaussian process modeling : Application to Bayesian optimization.
	\newblock Universit\'e Paris-Saclay, 2022
	
	\bibitem{Pirot.etal2019}
	G.~Pirot, T.~Krityakierne, D.~Ginsbourger, and P.~Renard.
	\newblock Contaminant source localization via Bayesian global optimization.
	\newblock {\em Hydrol. Earth Syst. Sci.}, 23, 351–369, 2019.
	
	\bibitem{Rabinowicz2020}
	A.~Rabinowicz and S.~Rosset.
	\newblock Assessing prediction error atinterpolation and extrapolation points.
	\newblock {\em Electronic Journal of Statistics}, 14(272-301), 2020.
	
	\bibitem{Rabinowicz2020a}
	A.~Rabinowicz and S.~Rosset.
	\newblock Cross-validation for correlated data.
	\newblock {\em Journal of the American Statistical Association}, 2020.
	
	\bibitem{Rasmussen.Williams2006}
	C.E. Rasmussen and C.K.I. Williams.
	\newblock {\em Gaussian Processes for Machine Learning}.
	\newblock the MIT Press, 2006.
	
	\bibitem{Roberts2017}
	D.~R. Roberts, V.~Bahn, S.~Ciuti, M.~S. Boyce, J.~Elith, G.~Guillera-Arroita,
	S.~Hauenstein, J.~J. Lahoz-Monfort, B.~Schr\"oder, W.~Thuiller, D.~I. Warton,
	B.~A. Wintle, F.~Hartig, and C.~F. Dormann.
	\newblock Cross-validation strategies for data with temporal, spatial,
	hierarchical, or phylogenetic structure.
	\newblock {\em Ecography}, 40(8):913--929, 2017.
	
	\bibitem{Roustant.etal2012}
	O.~Roustant, D.~Ginsbourger, and Y.~Deville.
	\newblock {D}ice{K}riging, {D}ice{O}ptim: {T}wo {R} packages for the analysis
	of computer experiments by kriging-based metamodelling and optimization.
	\newblock {\em {Journal of Statistical Software}}, 51(1):1--55, 2012.
	
	\bibitem{Santner.etal2003}
	T.J. Santner, B.J. Williams, and W.~Notz.
	\newblock {\em The Design and Analysis of Computer Experiments}.
	\newblock Springer, New York, 2003.
	
	\bibitem{Xiong.etal2007}
	Y.~Xiong, W.~Chen, D.~Apley, and X.~Ding.
	\newblock A non-stationary covariance-based kriging method for metamodelling in
	engineering design.
	\newblock {\em International Journal of Numerical Methods in Engineering},
	71:733--756, 2007.
	
	\bibitem{Zhang1993}
	P.~Zhang.
	\newblock Model Selection Via Multifold Cross Validation
	\newblock {\em Annals of Statistics}, 21(1): 299-313, 1993.	
	
	\bibitem{Zhang2015}
	Y.~Zhang and Y.~Yang.
	\newblock Cross-validation for selecting a model selection procedure.
	\newblock {\em Journal of Econometrics}, 187(1):95--112, 2015.
	
\end{thebibliography}

\appendix 

\section{About block inversion via Schur complements}
\label{app_schur}

The following, partly based on \cite{Gallier}, is a summary of standard results revolving around the celebrated notion of Schur complement.  

\begin{theorem}
	\label{Schur_inv}
	Let $M=\left( \begin{matrix} A & B \\ C & D\end{matrix} \right)$ be a real $n\times n$ matrix with blocks $A,B,C,D$ of respective sizes $p\times p$, $p\times q$, $q\times p$ and $q\times q$, where $p,q\geq 1$ such that $n=p+q$. Assuming that $D$ and $A-BD^{-1} C$ are invertible, then so is $M$ with 
	\begin{equation*}
	M^{-1}=\left( \begin{matrix} 
	(A-BD^{-1} C)^{-1} & -(A-BD^{-1} C)^{-1}BD^{-1}  \\ 
	-D^{-1}C (A-BD^{-1} C)^{-1} & D^{-1}+D^{-1}C(A-BD^{-1} C)^{-1}BD^{-1}
	\end{matrix} \right).
	\end{equation*}
\end{theorem}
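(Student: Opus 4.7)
The plan is to exhibit an explicit block-LDU factorization of $M$ induced by the Schur complement $S := A - B D^{-1} C$, and then invert each factor separately. First I would verify by direct block multiplication the identity
\begin{equation*}
M = \left(\begin{matrix} I_p & B D^{-1} \\ 0 & I_q \end{matrix}\right) \left(\begin{matrix} S & 0 \\ 0 & D \end{matrix}\right) \left(\begin{matrix} I_p & 0 \\ D^{-1} C & I_q \end{matrix}\right),
\end{equation*}
which only requires $D$ to be invertible and amounts to a one-line expansion, the cross terms collapsing because $B D^{-1} \cdot D = B$ and $S + B D^{-1} C = A$.

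Next, because each of the three factors above is invertible under the stated hypotheses (the two unipotent triangular factors have unit determinant, and the block-diagonal middle factor is invertible precisely because $S$ and $D$ are), $M$ itself is invertible and its inverse is the product of the inverses taken in reverse order. The two triangular factors invert by flipping the sign of their off-diagonal block, and the block-diagonal factor inverts component-wise to $\operatorname{blockdiag}(S^{-1}, D^{-1})$. Multiplying the three resulting matrices in the appropriate order yields the claimed closed-form expression for $M^{-1}$; in particular the bottom-right block $D^{-1} + D^{-1} C S^{-1} B D^{-1}$ arises naturally as the sum of the original $D^{-1}$ coming from the diagonal factor and the contribution of the two triangular factors sandwiching the $S^{-1}$ block.

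The main obstacle is purely bookkeeping: one must carefully track signs and the four block positions through the triple product, making sure the cross-terms combine as claimed. A simpler alternative that bypasses the factorization altogether is to take the matrix on the right-hand side of the displayed formula as an \emph{ansatz} and verify directly that its product with $M$, on one side (and then the other, or by invoking that a one-sided inverse of a square matrix is two-sided), equals $I_n$; this reduces to four routine block identities, each of which collapses using $S S^{-1} = I_p$, $D D^{-1} = I_q$, and the relation $A = S + B D^{-1} C$. Either route is standard and essentially self-contained; see also \cite{Gallier}.
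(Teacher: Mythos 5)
Your proof is correct, and it takes a route that differs from the one in the paper: you first establish the block-LDU factorization
\begin{equation*}
M = \left(\begin{matrix} I_p & B D^{-1} \\ 0 & I_q \end{matrix}\right) \left(\begin{matrix} A-BD^{-1}C & 0 \\ 0 & D \end{matrix}\right) \left(\begin{matrix} I_p & 0 \\ D^{-1} C & I_q \end{matrix}\right)
\end{equation*}
and invert the three factors, whereas the paper proves the theorem by solving the linear system $Mz=w$ for $z=(x,y)^{\top}$: it uses the invertibility of $D$ to eliminate $y=D^{-1}(v-Cx)$, obtains $(A-BD^{-1}C)x=u-BD^{-1}v$, and reads the blocks of $M^{-1}$ off the resulting expressions for $x$ and $y$. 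The two arguments are of course two packagings of the same block Gaussian elimination, but they buy slightly different things. Your factorization makes the invertibility of $M$ immediate (a product of three invertible matrices) and, as a bonus, gives the converse direction essentially for free: since the unipotent factors are always invertible, $M$ is invertible if and only if the block-diagonal middle factor is, which is exactly the content of the paper's first remark following the theorem — so your proof subsumes that remark rather than having to derive it afterwards. The paper's elimination argument is marginally more elementary in that it never needs to assert or invert the triangular factors, and it produces the four blocks of $M^{-1}$ directly without the sign-and-position bookkeeping you rightly flag as the main nuisance of the triple product. Your fallback suggestion of verifying the ansatz by multiplying against $M$ is also perfectly valid, though it proves less (it does not by itself explain where the formula comes from, and the one-sided-implies-two-sided step, while standard for square matrices, deserves the explicit mention you give it). No gaps.
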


\begin{proof}
	We follow the classical path consisting of solving for $z=\left( \begin{matrix} x & y\end{matrix} \right)^{T}$ in the equation $Mz=w$, with $w=\left( \begin{matrix} u & v\end{matrix} \right)^{T} \in \mathbb{R}^n$ where $x,u\in \mathbb{R}^p$ and $y,v\in \mathbb{R}^q$. Using the assumed invertibility of $D$, we have indeed equivalence between 
	
	$$
	\left\{
	\begin{array}{ll}
	Ax + By&=u\\
	Cx + Dy&=v
	\end{array}
	\right.
	\text{
		and 
	}
	\left\{
	\begin{array}{l}
	y=D^{-1}(v-Cx)\\
	(A-BD^{-1}C)x=u-BD^{-1}v
	\end{array}
	\right. ,
	$$
	whereof, using this time the assumed invertibility of $(A-BD^{-1} C)$, 
	$$
	\left\{
	\begin{array}{ll}
	x&=(A-BD^{-1} C)^{-1}u - (A-BD^{-1} C)^{-1} BD^{-1} v\\
	y&=- (A-BD^{-1} C)^{-1} D^{-1}C u + (D^{-1} + D^{-1}C(A-BD^{-1} C)^{-1} BD^{-1}) v
	\end{array}
	\right. ,
	$$
	resulting in the claimed result for $M^{-1}$.
\end{proof}

\begin{remark} 
	The block inversion formula of Theorem~\ref{Schur_inv} above can be reformulated as the following product of three matrices
	$$
	M^{-1}=
	\left( \begin{matrix} 
	I & 0 \\
	-D^{-1}C &I
	\end{matrix} \right)
	\left( \begin{matrix} 
	(A-BD^{-1} C)^{-1} & 0 \\ 
	0 & D^{-1}
	\end{matrix} \right)
	\left( \begin{matrix} 
	I & -BD^{-1} \\
	0 &I
	\end{matrix} \right),
	$$  
	implying in turn the following, that actually only requires the invertibility of $D$,
	$$
	M=
	\left( \begin{matrix} 
	I & BD^{-1} \\
	0 &I
	\end{matrix} \right)
	\left( \begin{matrix} 
	(A-BD^{-1} C) & 0 \\ 
	0 & D
	\end{matrix} \right)
	\left( \begin{matrix} 
	I & 0 \\
	D^{-1}C &I
	\end{matrix} \right),
	$$ 
	and illustrates that the simultaneous invertibility of $D$ and $A-BD^{-1} C$ is actually equivalent to the invertibility of $M$ itself. 
\end{remark}

\begin{remark} 
	\label{rkpermut}
	As classically, in Theorem~\ref{Schur_inv} the inverted block is the bottom-right $D$ and correspondingly the considered Schur complement is the one of the upper left block $A$. The Schur inverted Schur complement $(A-BD^{-1} C)^{-1}$ hence appears as the upper left block of $M^{-1}$ associated with $\ii=(1,\dots,p)$. Yet, analogue derivations are possible for any vector of indices $\ii$ from $\{1,\dots, n\}$. Reformulating a result presented in Horn and Johnson in terms of our notation, we have indeed  
	\begin{equation}
	\label{horn1}
	M^{-1}[\ii]=(M[\ii]-M[\ii,-\ii]M[-\ii]^{-1}M[-\ii,\ii])^{-1}
	\end{equation}
	and, more generally for two vectors of indices $\ii$ and $\jj$, 
	\begin{equation}
	\begin{split}
	\label{horn2}
	M^{-1}[\ii,\jj] & =-(M[\ii]-M[\ii,\jj]M[-\ii]^{-1}M[\jj,\ii])^{-1}M[\ii,\jj]M[\jj]^{-1}
	\\
	&= -M[\jj]^{-1}M[\jj,\ii](M[\ii]-M[\ii,\jj]M[\jj]^{-1}M[\jj,\ii])^{-1}.
	\end{split}
	\end{equation}
\end{remark}

\begin{remark} 
	\label{schur_variant}
	Following up on Remark~\ref{rkpermut} and using the notation of 
	Theorem~\ref{Schur_inv}, we obtain in particular when assuming that $A$ and $(D-CA^{-1}D)$ are invertible and considering indeed the Schur complement of $A$ instead of $D$'s one that 
	\begin{equation}
	\label{schur_variant_eq}
	M^{-1}=\left( \begin{matrix} 
	A^{-1}+A^{-1}B(D-CA^{-1} B)^{-1}CA^{-1} & -A^{-1}B (D-CA^{-1} B)^{-1} \\ 
	-(D-CA^{-1} B)^{-1}CA^{-1} & (D-CA^{-1} CB)^{-1}
	\end{matrix} \right).
	\end{equation}
	Assuming that both blocks $A,D$ and their respective Schur complements are invertible, one then retrieves the so-called binomial inverse theorem, namely 
	$$
	(D-CA^{-1}D)^{-1}=D^{-1}+D^{-1}C(A-BD^{-1} C)^{-1}BD^{-1}.
	$$ 	
	As a by-product, equating non-diagonal blocks further delivers that 
	$$
	(A-BD^{-1} C)^{-1}BD^{-1}
	=
	A^{-1}B (D-CA^{-1} B)^{-1}
	$$
	and similarly, 
	$$
	D^{-1}C (A-BD^{-1} C)^{-1}
	=
	(D-CA^{-1} B)^{-1}CA^{-1}
	.
	$$
\end{remark}


\section{More on the $1$-dimensional analytical example} 
\label{app:example}

As a first example, we consider a case of GP prediction for a one-dimensional test function from \cite{Xiong.etal2007}, 
\begin{equation}
\label{xiong}
f: (x)\in [0,1] \mapsto 
f(x)=\mathrm{sin}(30(x-0.9)^4)\mathrm{cos}(2(x-0.9)) + (x-0.9)/2.
\end{equation}
For simplicity we consider a regular design, here a 10-point subdivision of $[0,1]$, and a Simple Kriging predictor assuming a Mat\'ern 5/2 stationary covariance kernel. The test function is represented in black in Figure~\ref{xiong_preds}, along with the GP predictor (blue line) and Leave-One-Out (LOO) predictions at the design locations (red points). Modelling is performed with the \textrm{DiceKriging} package \cite{Roustant.etal2012} and cross-validation relies on a fast implementation using the presented results and the \textrm{cv} function newly available in the \textrm{DiceKriging} package (version 1.6.0). The accuracy and speed-ups associated with this fast implementation are analysed in more detail in Section~\ref{speedup}. 
Before presenting further consequences of Theorem~\ref{thm1} on this example, let us examine how LOO cross-validation compares on this example with actual prediction errors and predictive standard deviations stemming from the GP model. 

\bigskip

By design, the considered example function presents moderate fluctuations in the first third of the domain, then a big up-and-down in the second third, and finally is relatively flat in the last third. 
Of course, appealing to a non-stationary GP that could capture this is legitimate in such a case (and has been done, for instance in \cite{Xiong.etal2007, Marmin2018}), yet we rather stick here to a baseline stationary model and start by illustrating how cross-validation highlights this heterogeneity. Looking first at the upper panel of Figure~\ref{xiong_preds}, we see that LOO residuals tend to take their larger magnitudes on the boundaries and around the center of the domain. A closer look reveals that the residual at the left boundary point has a higher magnitude than the one of the right boundary, and that the location with the highest LOO residual is at the top of the central bump. The graph on the lower panel of Figure~\ref{xiong_preds} represents the absolute prediction errors associated with the baseline (``basic'') GP model (black line), with the LOO cross-validation (red points), as well as the prediction standard deviation coming with the basic GP model (blue line). While the GP prediction standard deviation appears to behave in an homogeneous way across the domain (it is known to depend solely on the evaluation locations and not on the responses, a phenomenon referred to as \textit{homoscedasticity in the observations}), the true absolute prediction errors are concentrated on left half of the domain, with a largest peak around the left boundary and several bumps spreading out towards the center of the domain. Contrarily to what the LOO magnitude at the end right point suggests, there is no substantial prediction error in this region of the domain. In all, while LOO versus actual absolute prediction errors are not in complete agreement, LOO signals more difficulty to predict in the left domain half, something that the prediction standard deviation attached to the considered stationary GP model is by design unable to detect. 

\subsection{Extending the example with multiple-fold cross-validation}
\label{ex2_fastmfcv}

We now mimick the previous example but consider instead of our $10$ previous points a design of $20$ points formed by $10$ pairs of neighbouring points. First, a regular design of $10$ points between $\delta$ and $1-\delta$ is formed, where $\delta$ is a constant with value prescribed to $0.001$ in this example. Then, for every point of this base design, two design points are created by adding realizations of independently generated uniform random variables on $[-\delta,\delta]$. 

\begin{figure}[h!]
	\begin{center}
		\includegraphics[width=.65\linewidth]{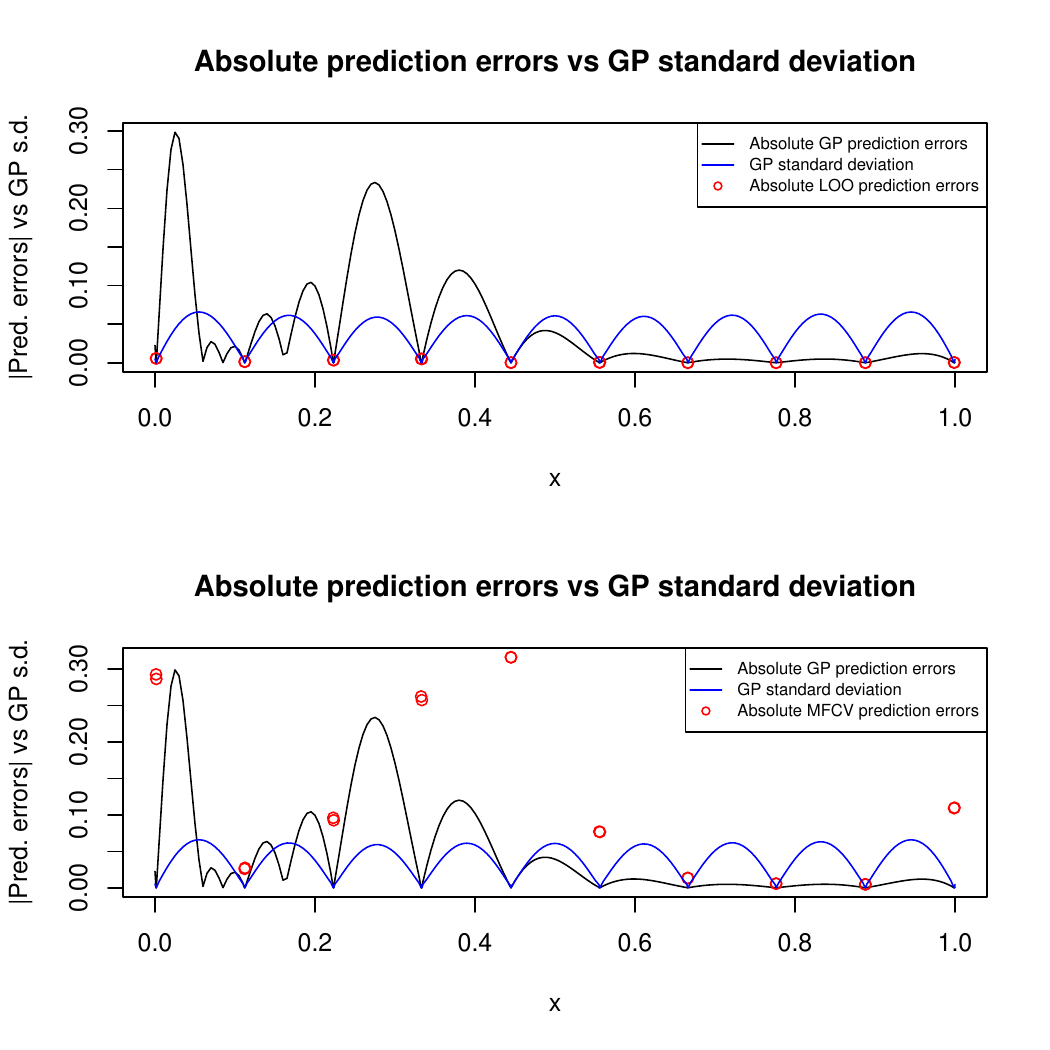}
		\caption{Absolute prediction errors associated with GP (black line) and cross-validation (red point) predictions, and GP prediction standard deviation (in blue). Upper panel: LOO. Lower panel: 10-fold CV where each fold contains two successive points from the example, at distances upper bounded by $2\delta=2.10^{-3}$.}
		\label{xiong_LOOvsMFCV}
	\end{center}
\end{figure}

As can be seen on the upper panel of Figure~\ref{xiong_LOOvsMFCV}, LOO is here completely myopic to prediction errors, precisely because of the paired points. Indeed, when leaving one point out at a time, there is always a very close neighbour that enables making a an accurate Kriging prediction. This somehow extreme situation highlights how LOO (and more generally cross-validation with poorly designed partition of input points) may lead to a bad picture of the model's generalization ability. When perfoming multiple-fold cross-validation with grouping of the pairs, however, the residuals are much more insightful, as illustrated on the lower panel of Figure~\ref{xiong_LOOvsMFCV}. Let us further remark that Theorem~\ref{thm1} delivers in turn the respective covariance matrices between residuals within groups. 

\subsection{Correlation of LOO residuals with an additional example}
 
As an additional cross-validation example with a focus on the effect of correlation on Q-Q plots, we added to the data used in the first example (Evaluations at $10$-point regular subdivision of [0,1] of the one-dimensional test function recalled in Eq. \ref{xiong}) a cluster of $10$ additional function evaluations for equi-spaced $x$ values between $0.1$ and $0.3$, and we now present the effect of this cluster on similar Q-Q plots.
A brief comparison between Q-Q plots obtained on the second example (the one with $10$ paired points, not further developed in this section by prioritization) is appended in Section~\ref{QQCV_ex2} for completeness.

\bigskip 

We can see on Figure~\ref{xiong_predsclust} that the function is now very well captured in the region of the added cluster. Yet what is not appearent here but is reflected by Figure~\ref{xiong_QQplotsLOOclust} is that leave-one-out predictions are coming with over-evaluated prediction uncertainty for a number of locations, that turn out to originate from the region of the added cluster and also from the flatter region on the right handside. While looking at the Q-Q plot of standardized residuals on the right panel of Figure~\ref{xiong_QQplotsLOOclust} suggests some model inadequacy, accounting for and removing covariance between leave-one-out prediction errors as illustrated in the left panel of Figure~\ref{xiong_QQplotsLOOclust} suggests on the contrary that the assumptions underlying the considered GP model do not stand out as unreasonable.

\begin{figure}[h!]
	\begin{center}
		\includegraphics[width=.8\linewidth]{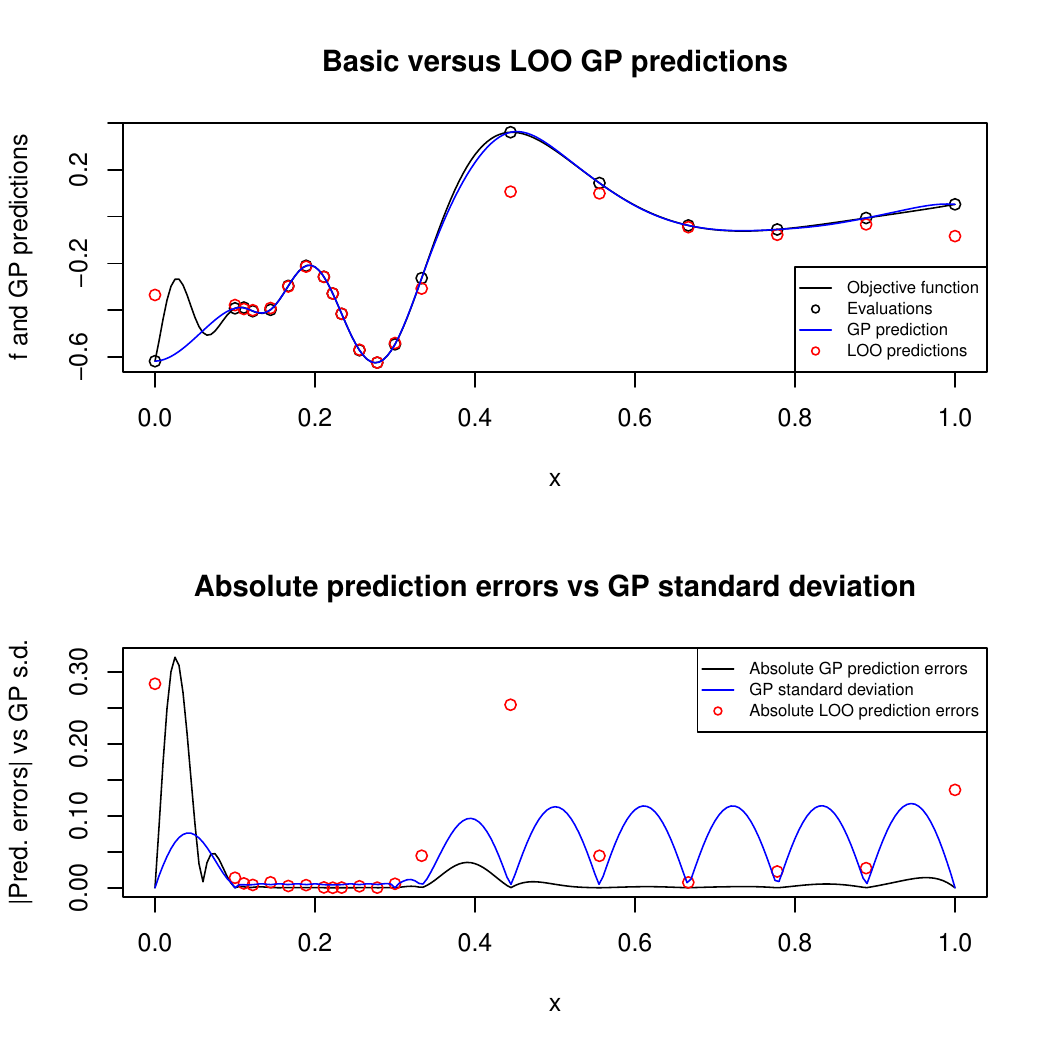}
		\caption{Upper panel: GP prediction (blue line) of the test function (black line) defined by Equation~\ref{xiong} based on evaluations at a regular grid complemented by clustered points between $0.1$ and $0.3$, LOO predictions (red points). Lower panel: absolute prediction errors associated with GP (black line) and LOO (red point) predictions, and GP prediction standard deviation (in blue).}
		\label{xiong_predsclust}
	\end{center}
\end{figure}

\begin{figure}[h!]
	\includegraphics[width=.5\linewidth]{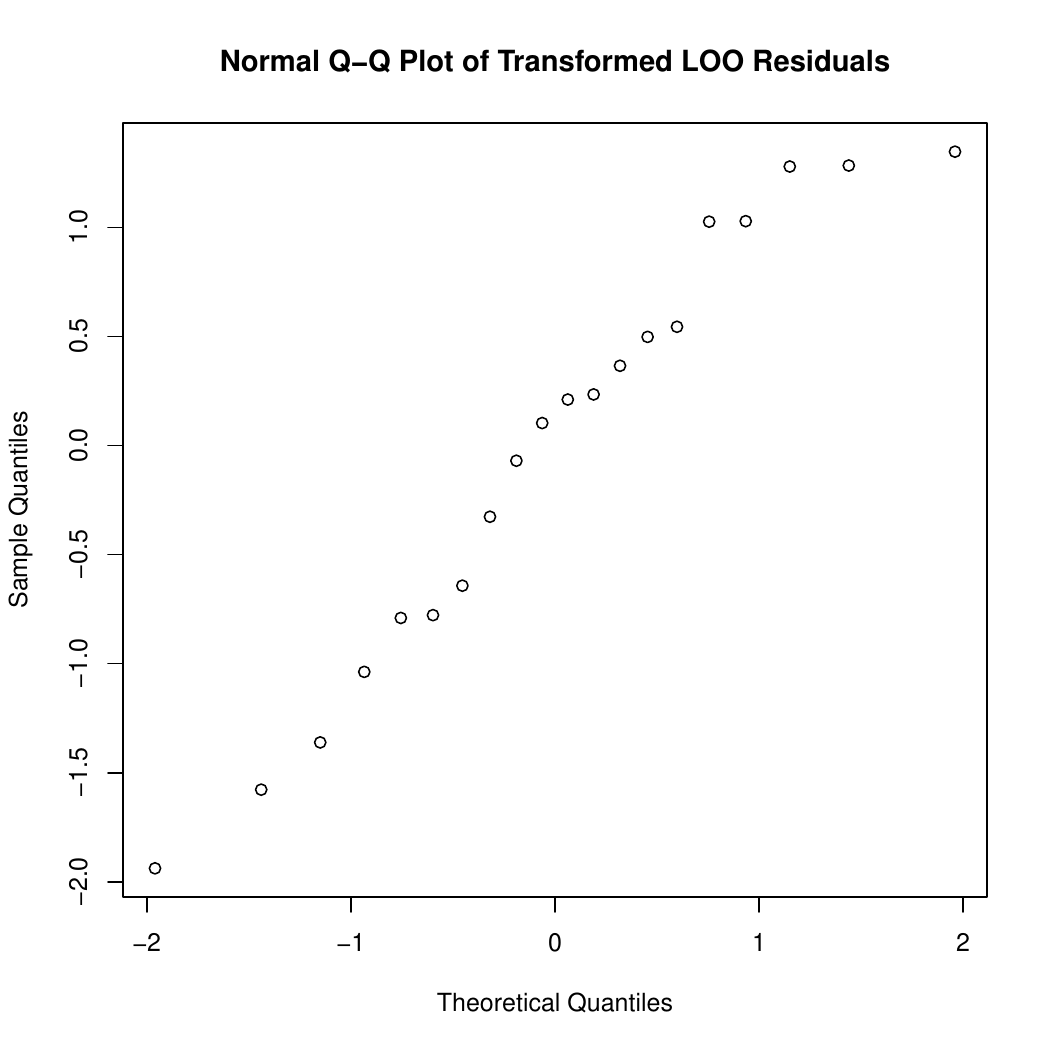}
	\includegraphics[width=.5\linewidth]{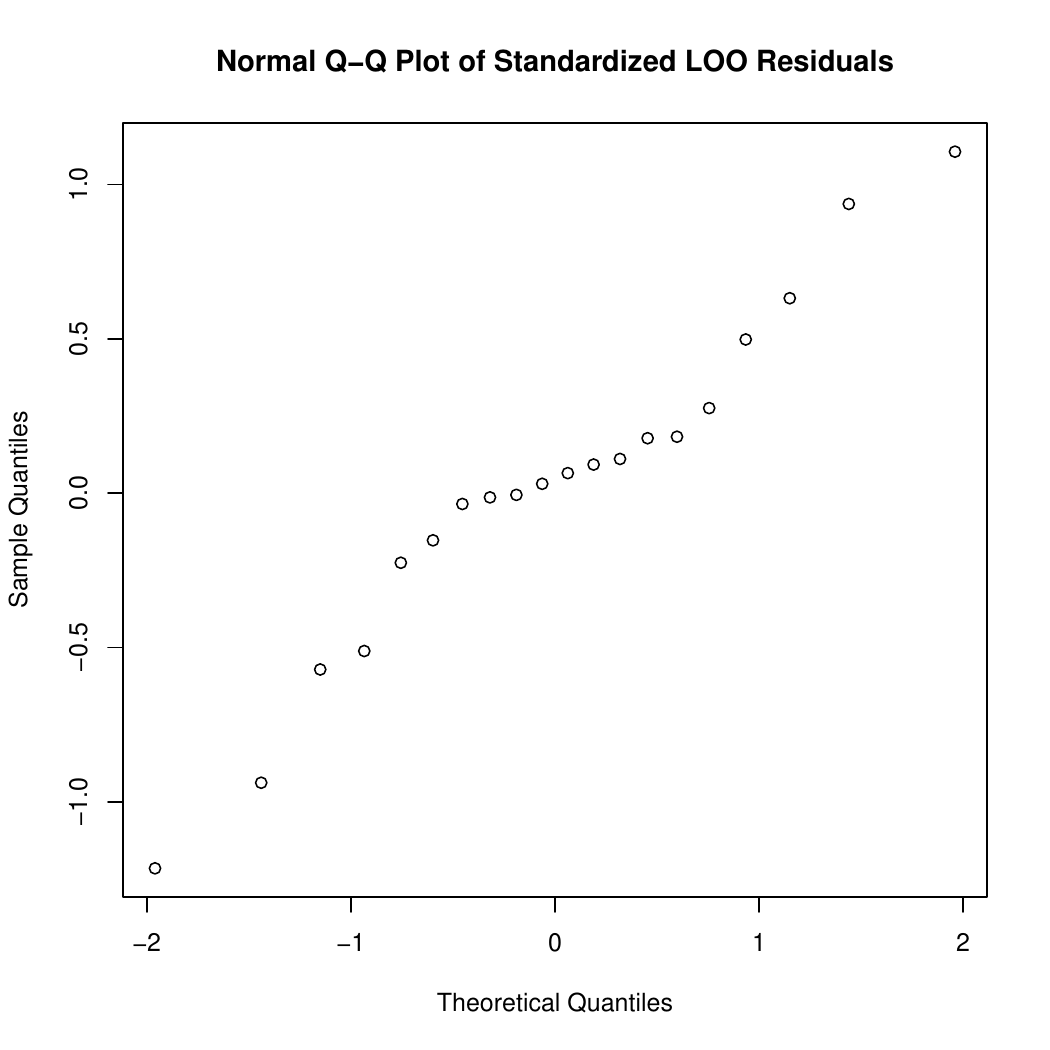}	
	\caption{On the effect of accounting and correcting for correlation in QQ-plots based on LOO residuals (second case: regular grid complemented by clustered points between $0.1$ and $0.3$). Right panel: QQ-plot against $\mathcal{N}(0,1)$ of LOO residuals merely divided by corresponding LOO standard deviations. Left panel: QQ-plot against $\mathcal{N}(0,1)$ of duly transformed LOO residuals.}
	\label{xiong_QQplotsLOOclust}
\end{figure}

\bigskip 

The results obtained on this modified test case further stress the deleterious effect of neglecting the correlation between cross-validation residuals when diagnosing GP models relying notably on Q-Q plots. 
For completeness, we also added in Section~\ref{CV_ex3} a brief analysis of what cross-validation would result in on this third example when grouping the second to eleventh points within an index vector (and keeping the other points as individual entities like in LOO).  

\newpage
\section{Additional detail on the basics of Universal Kriging}
\label{basicsUKold}

Considering further the settings of Section~\ref{basicsUK} and assuming as before that $\covsum$ and $F^{\top}\covsum^{-1}F$ are invertible and applying to $M=\left( \begin{matrix} 
\covsum & F \\
F^{\top} & 0 
\end{matrix} \right)$ the variant of the Schur complement inversion formula discussed in Remark~\ref{schur_variant}, we get that $M$ is invertible and has inverse  
\begin{equation*}
M^{-1}=
\left( \begin{matrix} 
\covsum^{-1} - \covsum^{-1} F (F^{\top}\covsum^{-1}F)^{-1} F^{\top} \covsum^{-1} & \covsum^{-1} F (F^{\top}\covsum^{-1}F)^{-1} \\
(F^{\top}\covsum^{-1}F)^{-1} F^{\top} \covsum^{-1} & - (F^{\top}\covsum^{-1}F)^{-1}. 
\end{matrix} \right) 
\end{equation*} 
Recalling that the Universal Kriging predictor at the considered $\xx$ is given by $\widehat{Z}_{\xx}=\boldsymbol{\lambda}'\Z=\Z'\boldsymbol{\lambda}$, we hence find that $\widehat{Z}_{\xx}=\boldsymbol{\alpha}'\mathbf{k}(\xx) + \widehat{\mybeta}'\mathbf{f}(\xx)$ where 
\begin{align}
\boldsymbol{\alpha}&=(\covsum^{-1} - \covsum^{-1} F (F^{\top}\covsum^{-1}F)^{-1} F^{\top} \covsum^{-1}) \Z, \\
\widehat{\mybeta}&=(F^{\top}\covsum^{-1}F)^{-1} F^{\top} \covsum^{-1}\Z.
\end{align}
As it turns out, $\widehat{\mybeta}$ coincides with the Generalized Least Squares estimator of $\mybeta$ under assumption of a (centred) noise term with covariance matrix $\covsum$. Besides this, these values for $\boldsymbol{\alpha}$ and $\widehat{\mybeta}$ are characterized by 
\begin{equation}
\label{UK_system}
\left( \begin{matrix} 
\covsum & F \\
F^{\top} & 0 
\end{matrix} \right)
\left( \begin{matrix} 
\boldsymbol{\alpha} \\
\widehat{\mybeta}
\end{matrix} \right)
=
\left( \begin{matrix} 
\Z \\
0
\end{matrix} \right),
\end{equation}
as is well-known in the geostatistics literature (See \cite{Dubrule1983} and references therein). Note that \cite{Dubrule1983} is a seminal reference that tackled fast cross-validation formulae already in the Universal Kriging case and also tackled the case of two left out points. While our formulas coincide with those presented in \cite{Dubrule1983}, the bloc formalism sheds light on the algebraic mechanisms at work and somehow help achieving a broader level of generality, as illustrated in Section~\ref{Extensions} with the relative simplicity of obtaining multiple-fold cross-validation conditional covariance matrices in the Universal Kriging framework.  


\section{LOO speed-ups}
\label{LOO_speedups}

\begin{figure}[h!]
	\begin{center}
		\includegraphics[width=.65\linewidth]{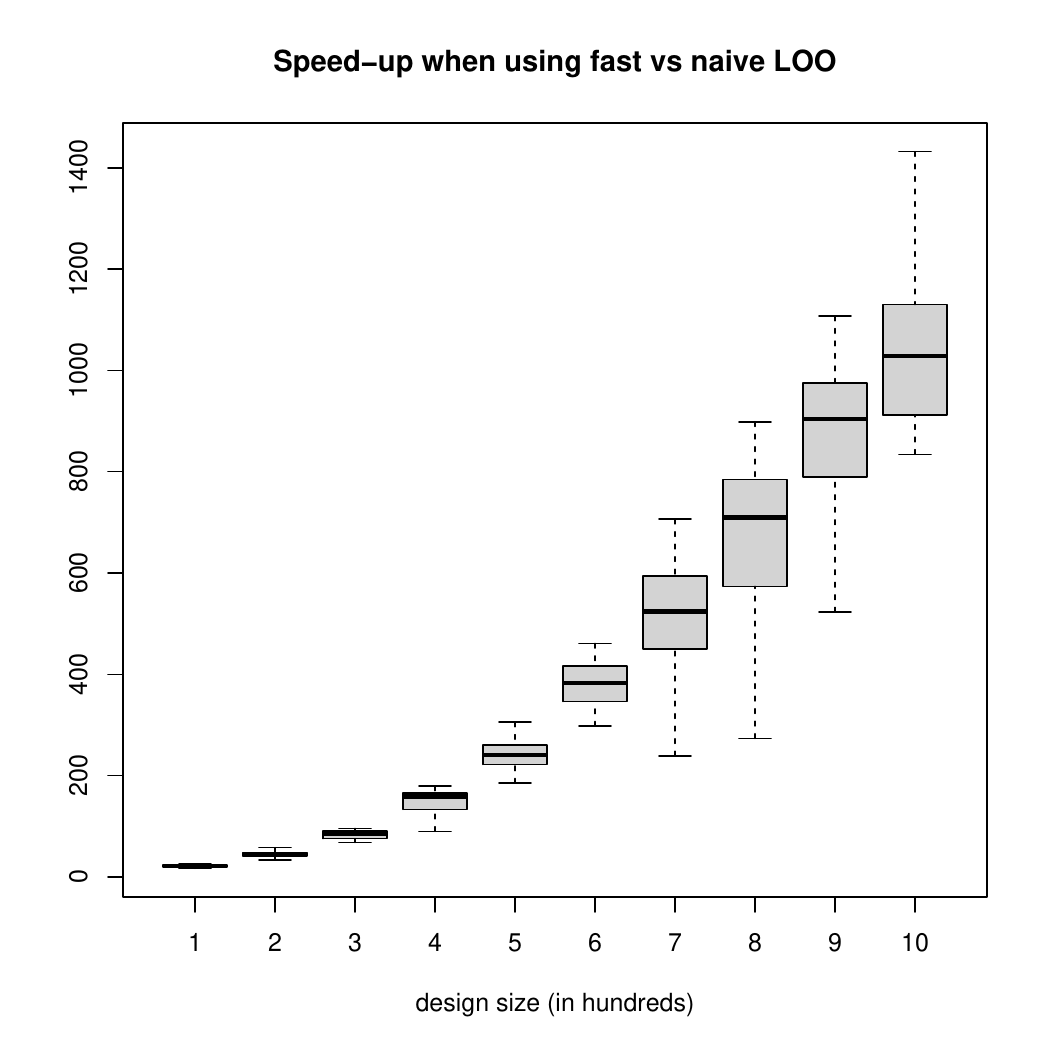}	
		\caption{Speed-up (ratio between times required to run the naive and fast methods) measured for LOO on $10$ regular designs, with $100$ to $1000$ points equidistributed on $[0,1]$, where each speed-up measure is repeated $50$ times.} 
		\label{speedupLOO}
	\end{center}
\end{figure}

We first consider Simple Kriging prediction based on $100$ points of our recurrent one-dimensional example test function, to which we apply LOO cross-validation with fast versus naive implementation. We find relative differences of the order of $10^{-14}$ and $10^{-12}$ when comparing vectors of leave-one-out predictions and variances obtained from the fast versus naive approaches, respectively (those relative differences consist of Eulidean norms of the differences divided by the norm of the relevant vector using the naive approach). In the current set-up, the fast implementation typically comes with a speed-up factor of $2$, a performance that is much improved when increasing the number of observations. Already with $1000$ observations, we found a speed-up of around $120$. 
Moving to a Universal Kriging model with quadratic trend (See Section~\ref{speedupLOO_UK} in Appendix for figures), we found with $100$ observations a speed-up nearing 5 while relative errors were still in the same tiny orders of magnitude. With $1000$ observations, we observed the same speed-up as in the case of Simple Kriging (with relative differences slightly increasing near $10^{-13}$ and $10^{-11}$).  

\bigskip

In Figure~\ref{speedupLOO} we present the speed-up measured for LOO on $10$ regular designs, with $100$ to $1000$ points equidistributed on $[0,1]$, where each experiment is repeated $50$ times by varying the seed. While the seed affects Gaussian Process model fitting, for each design and seed computation times of the fast versus naive LOO implementations are based on the same fitted model. Figure~\ref{speedupLOO_accu} represents the relative discrepancy between mean (resp. covariance) outputs of the two methods, with orders of magnitudes (in the $10^{-14}$ and $10^{-11}$, respectively) illustrating the accuracy of the fast method. 

\bigskip

Coming back to the speed-ups, note that costs that are compared here are not exactly those of calculating LOO outputs from scratch, as the Cholesky factor of the covariance matrix is already pre-calculated within the model fitting phase that is common to both procedures and the covariance matrix inversion at the heart of the fast method is henced facilitated (thanks to existing codes to obtain an inverse from a Cholesky factor).  While it would not make sense to count model fitting along the other operations as it involves the cumbersome task of estimating hyperparameters, we performed instead additional experiments where the covariance matrix was first rebuilt from the Cholesky factor and then factorized again, so as to be in an unfavourable situation for the fast method (as the rebuilt part comes as penalty resulting from the fact that the employed GP modelling code, the \textrm{km} function of the \textrm{R} package \textrm{DiceKriging}).
The corresponding results are plotted on Figure~\ref{speedupLOO_unfav} in Appendix, where it can be seen that in this unfavourable situation speed-ups are merely divided by a factor of $2$ but the immense benefit of using the fast LOO formula over the naive approach is not affected, all the more so that the design size increases.

\begin{figure}[h!]
	\begin{center}
		\includegraphics[width=.45\linewidth]{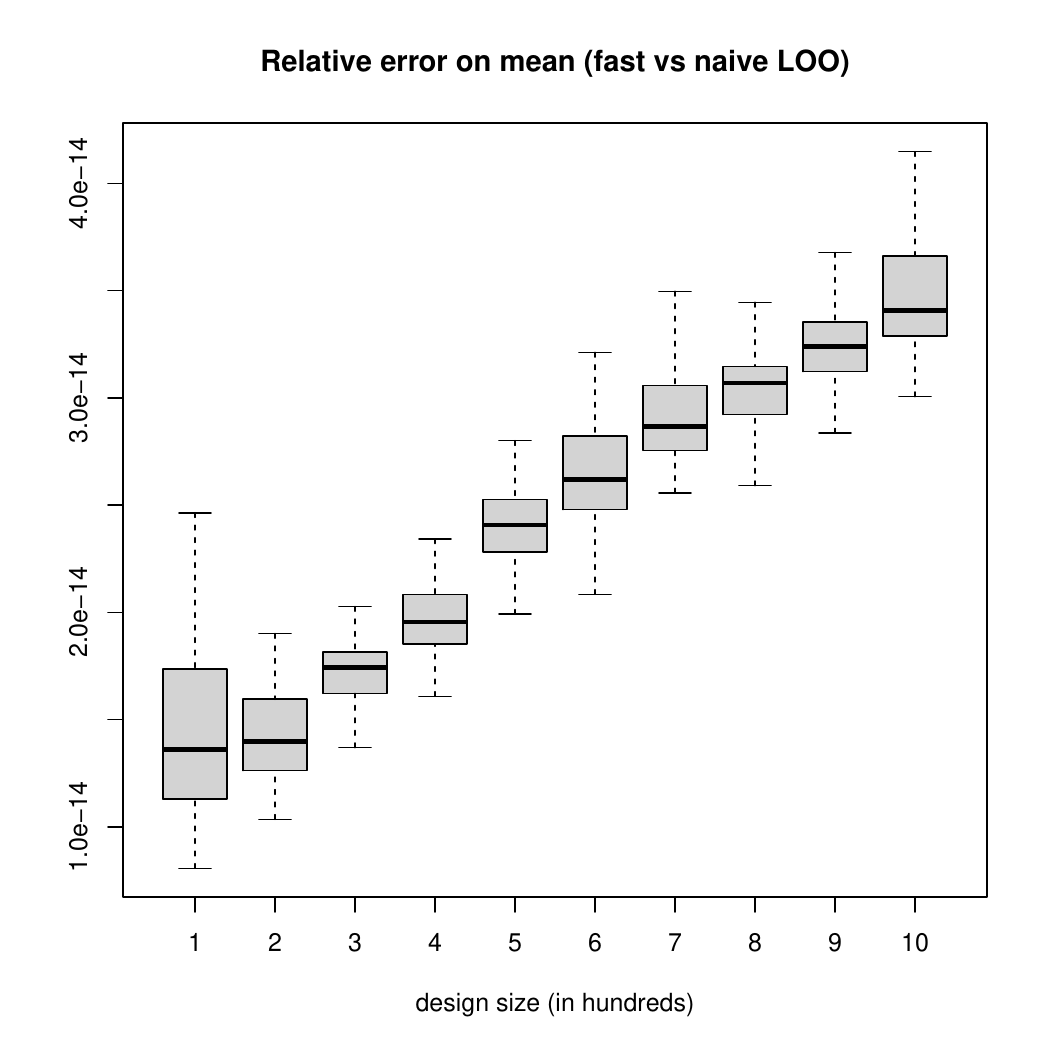}
		\includegraphics[width=.45\linewidth]{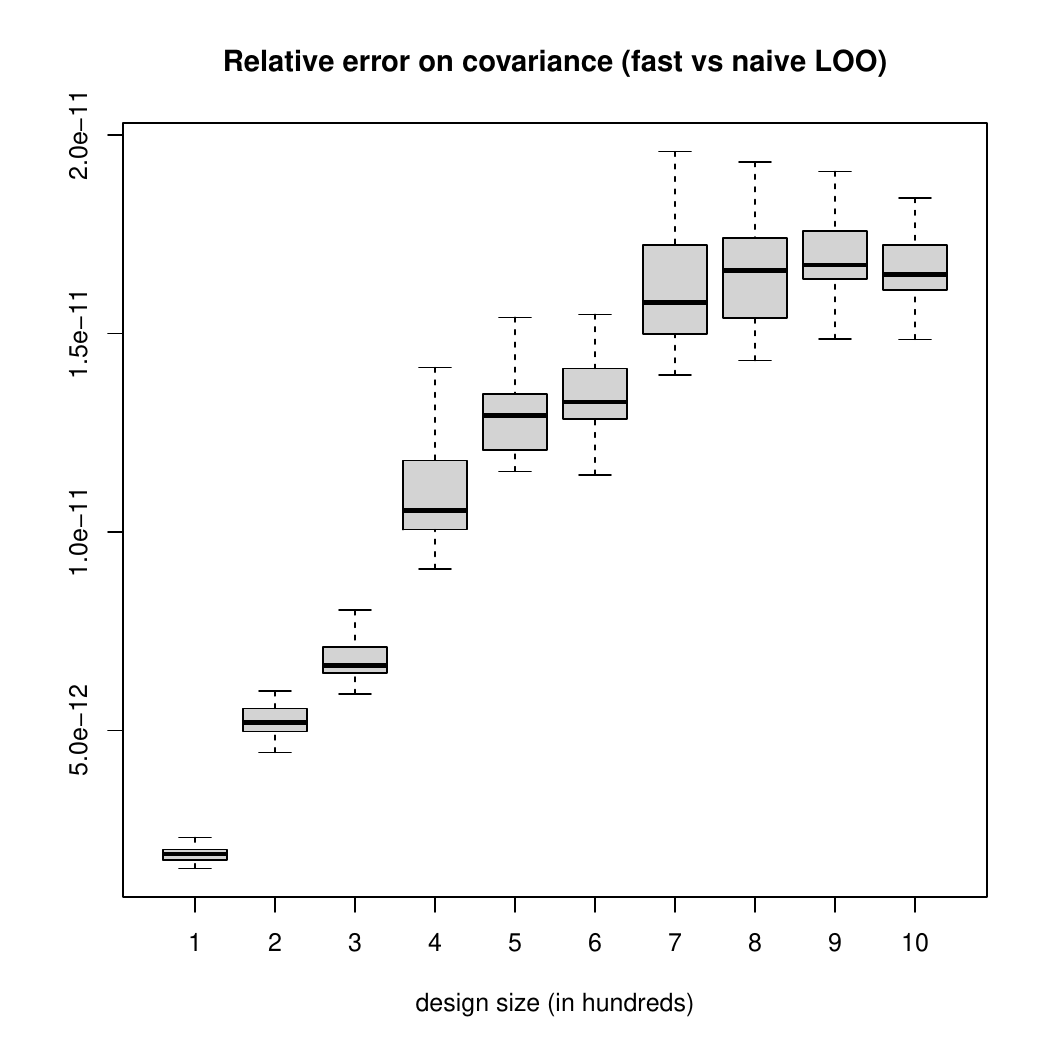}
		\caption{Relative errors on LOO means and covariances (between naive and fast methods, with the naive method as reference) measured as on Figure~\ref{speedupLOO}.}
		\label{speedupLOO_accu}
	\end{center}
\end{figure}

\begin{figure}[h!]
	\label{speedupLOO_unfav}
	\begin{center}
		\includegraphics[width=.65\linewidth]{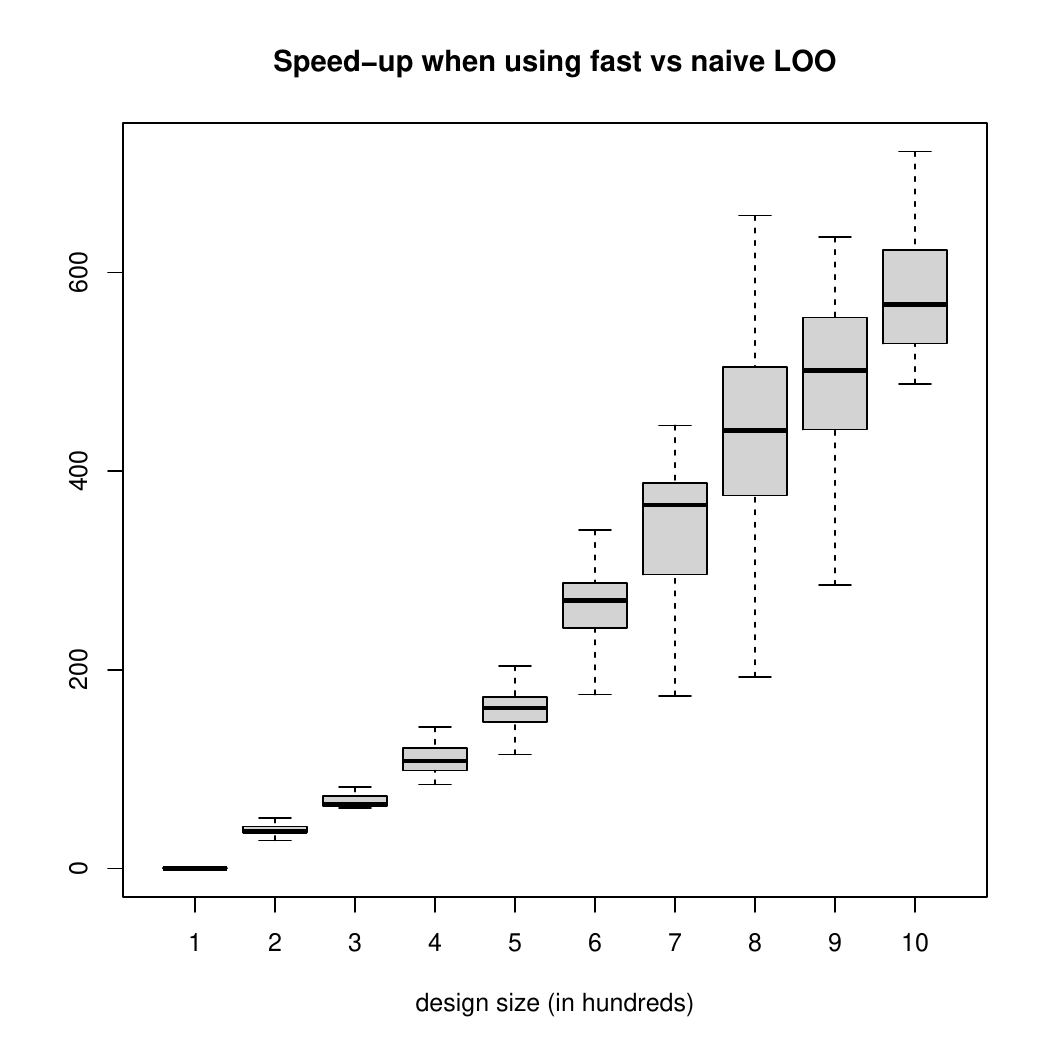}	
		\caption{Speed-up (ratio between times required to run the naive and fast methods) measured for LOO in an unfavourable situation where $K$ is re-built and its Cholesky factor is recalculated. Apart from that, the settings are the same as in  Figure~\ref{speedupLOO}.}
	\end{center}
\end{figure}

\newpage

\section{Supplementary numerical experiments}

\subsection{Multiple-fold cross-validation on the third example}
\label{CV_ex3}

\begin{figure}[h!]
	\label{CV_ex3}
	\begin{center}
		\includegraphics[width=.65\linewidth]{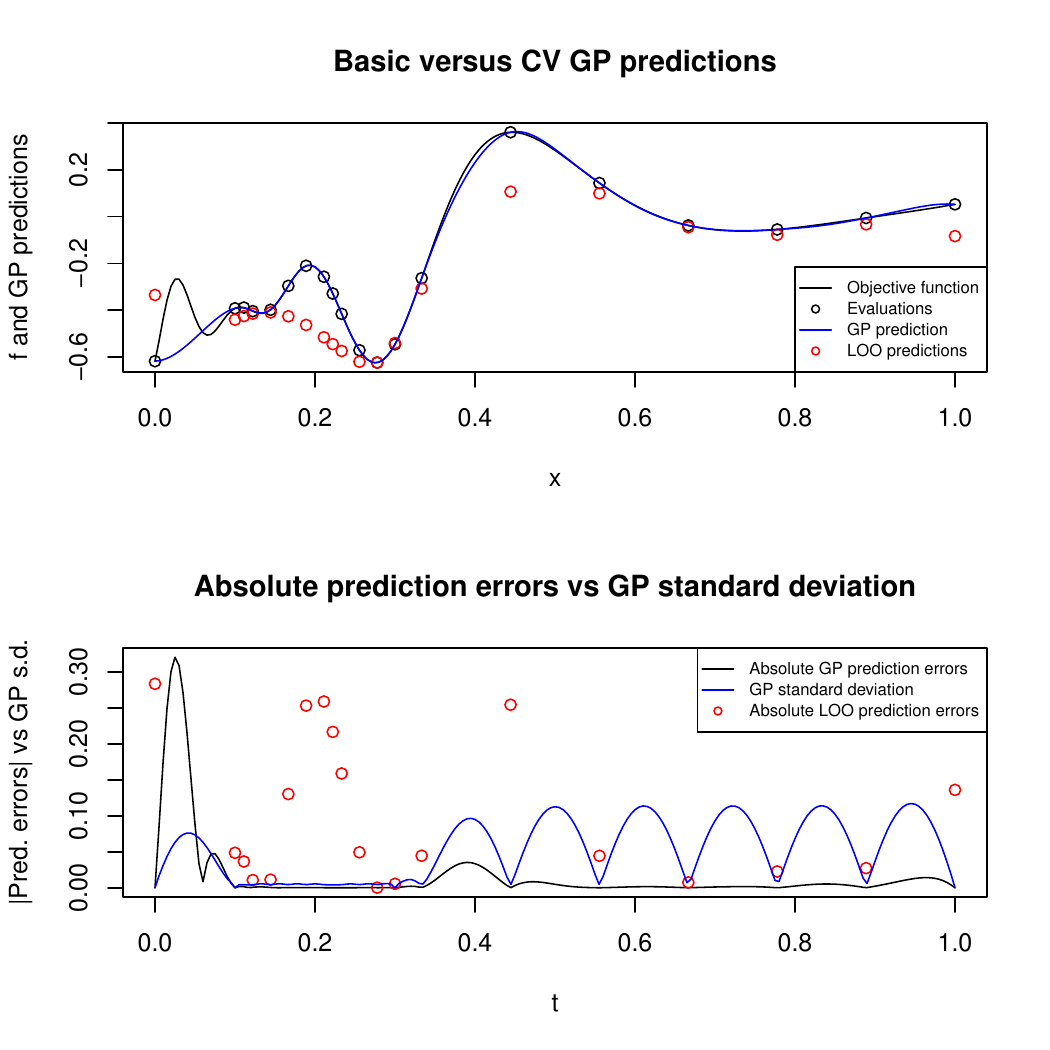}
		\caption{Upper panel: GP mean prediction (blue line) of the test function (black line) of Equation~\ref{xiong} based on evaluations at a regular grid complemented by clustered points between $0.1$ and $0.3$, CV predictions (red points). Lower panel: absolute prediction errors associated with GP (black line) and CV (red point) predictions, and GP prediction standard deviation (in blue). Here the second to eleventh points are grouped while the others are left as singletons (so $q=10$).}
	\end{center}
\end{figure}

\begin{figure}[h!]
	\label{xiong_QQplotsCV_ex3}
	\begin{center}
	\includegraphics[width=.4\linewidth]{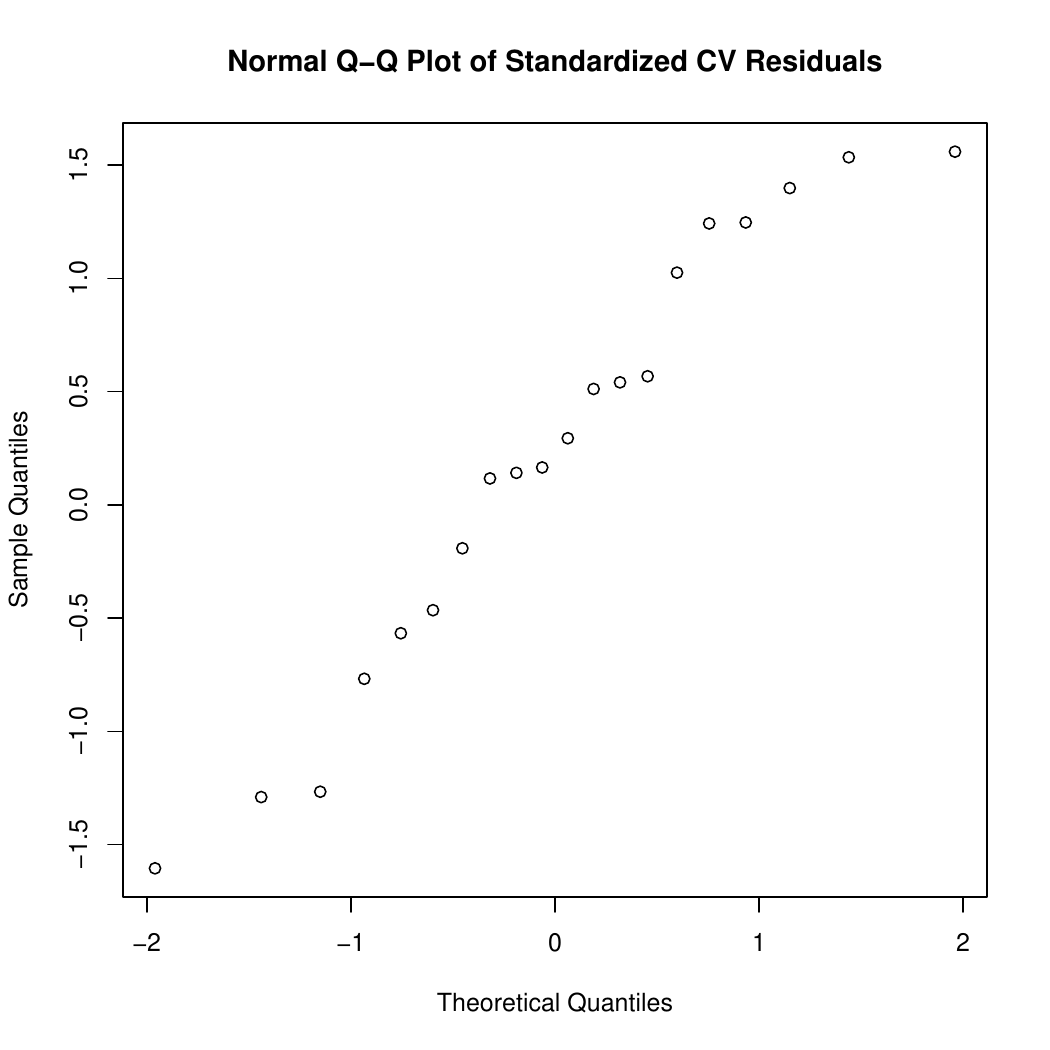}
	\includegraphics[width=.4\linewidth]{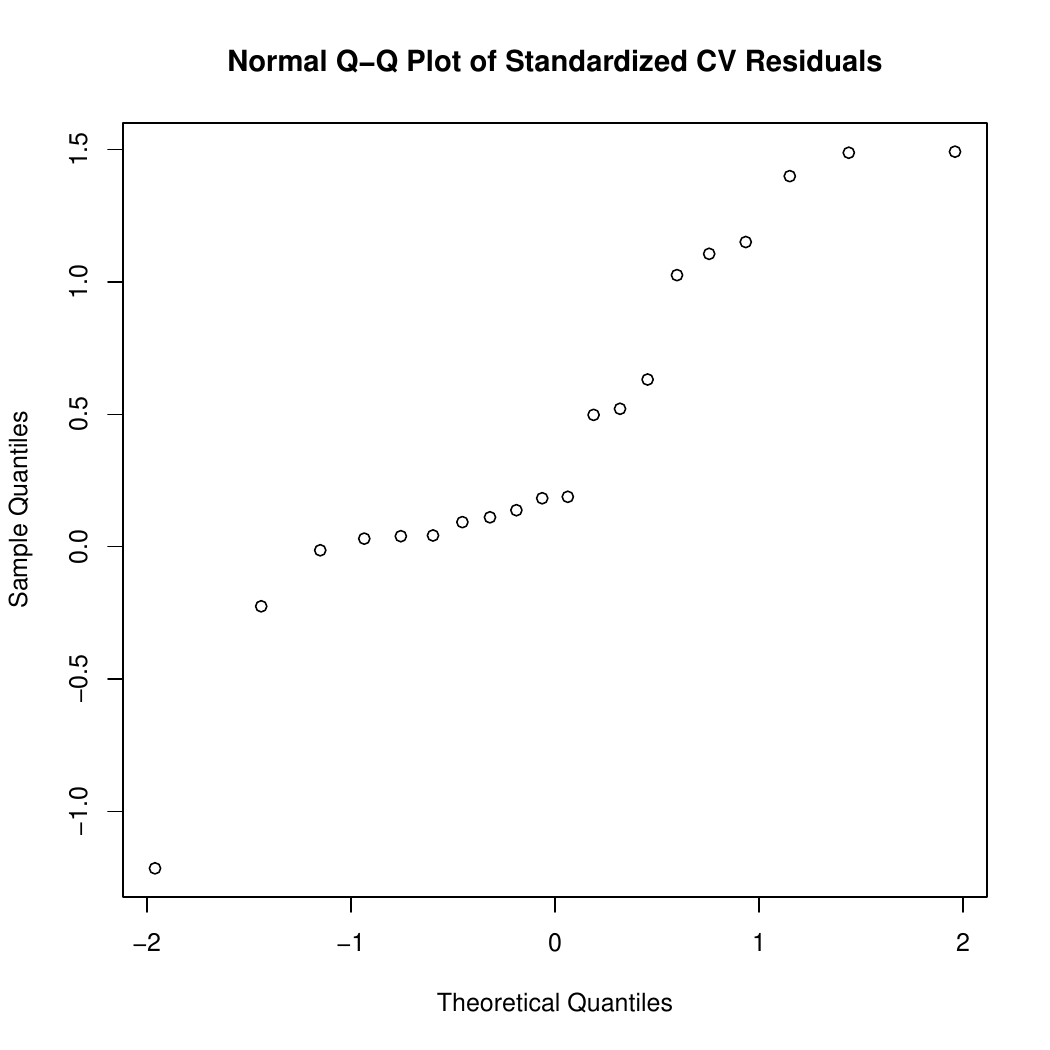}	
	\caption{On the effect of accounting for covariance in QQ-plots based on CV residuals for the example of Figure~\ref{CV_ex3}. Right panel: QQ-plot against $\mathcal{N}(0,1)$ of standardized CV residuals. Left panel: QQ-plot against $\mathcal{N}(0,1)$ of duly transformed CV residuals.}
	\end{center}
\end{figure}

\newpage 

\subsection{Fast versus naive LOO for Universal Kriging}
\label{speedupLOO_UK}

\begin{figure}[h!]
	\label{speedupLOO_UK}
	\begin{center}
		\includegraphics[width=.65\linewidth]{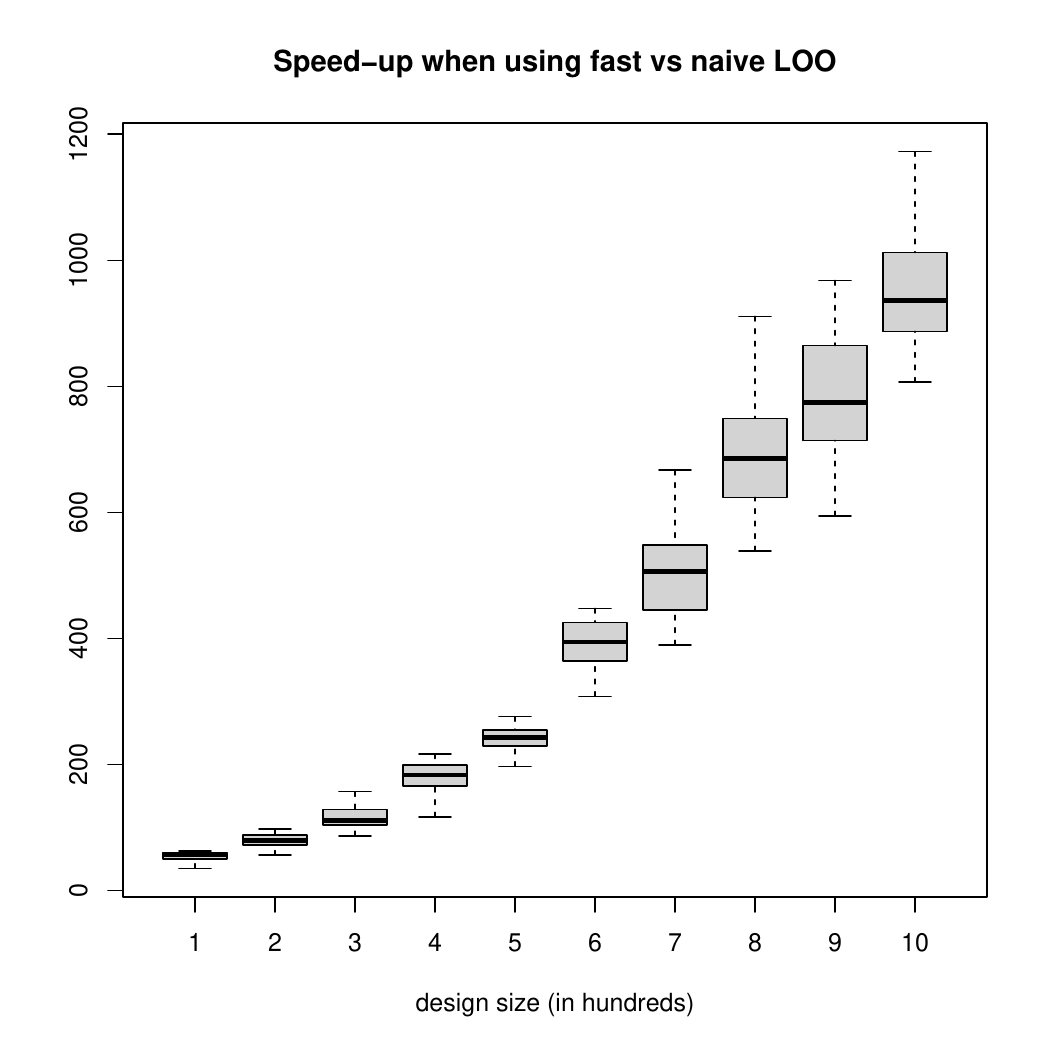}	
	\caption{Speed-up (ratio between times required to run the naive and fast methods) measured for LOO under Universal Kriging (with second order trend) on $10$ regular designs, with $100$ to $1000$ points equidistributed on $[0,1]$, where each measure is repeated $50$ times.} 
	\end{center}
\end{figure}

\begin{figure}[h!]
	\label{speedupLOO_UK__accu}
	\begin{center}
		\includegraphics[width=.45\linewidth]{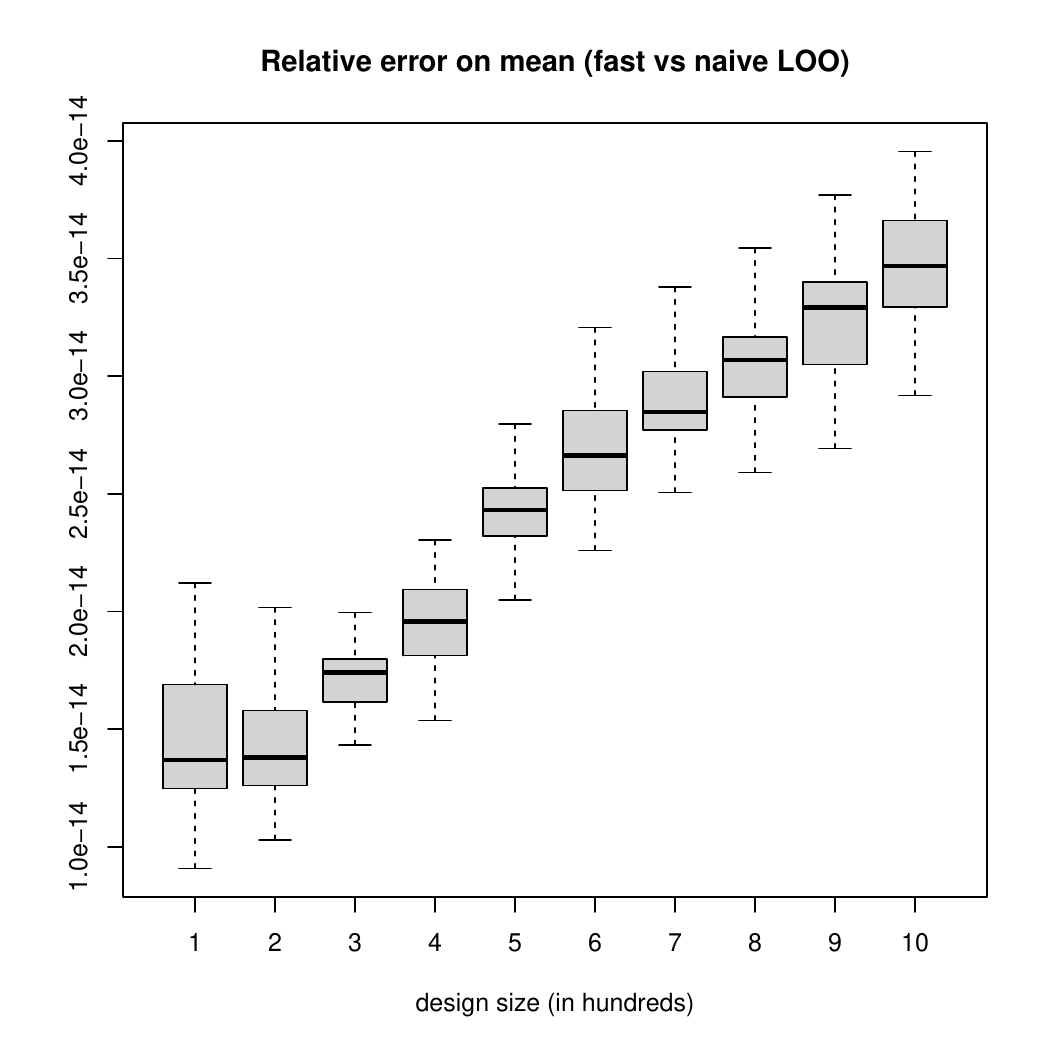}
		\includegraphics[width=.45\linewidth]{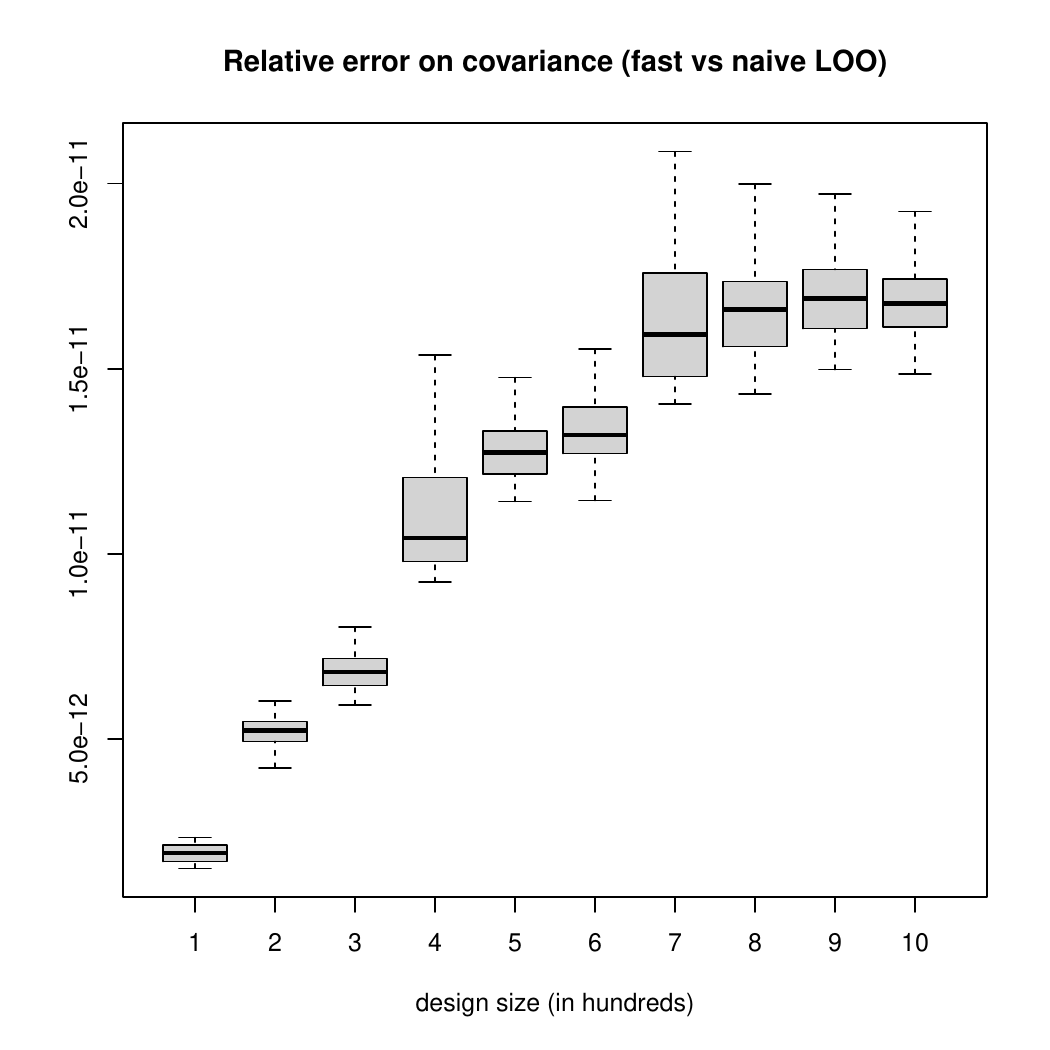}
		\caption{Relative errors on LOO means and covariances (between naive and fast methods, with the naive method as reference) measured as on Figure~\ref{speedupLOO_UK}.}
	\end{center}
\end{figure}

\end{document}